\documentclass{llncs}
\pagestyle{plain}
\usepackage{amssymb,boxedminipage,amsmath}
\usepackage[export]{adjustbox}
\usepackage{tikz}
\newcommand{\NP}{{\sf NP}}
\newcommand{\PP}{{\sf P}}

\newcommand{\W}{{\sf W}}

\newcommand{\ssi}{\subseteq_i}

\usepackage{comment}

\newcommand{\vd}{{\sf vd}}

\newcommand{\ec}{{\sf ec}}

\title{Contraction and Deletion Blockers\\ for Perfect Graphs and $H$-free Graphs\thanks{A number of results in this paper have appeared in extended abstracts of the proceedings of 
CIAC~2016~\cite{DPPR15}, ISCO 2016~\cite{PPR16}, LAGOS 2017~\cite{PPR17b} and TAMC 2017~\cite{PPR17}.}}
\author{\"Oznur Ya\c{s}ar Diner\inst{1}\thanks{Author supported by Marie Curie International Reintegration Grant PIRG07/GA/2010/268322.}\and
Dani\"el Paulusma\inst{2}\thanks{Author supported by EPSRC (EP/K025090/1) and the Leverhulme Trust (RPG-2016-258).}
\and\\ Christophe Picouleau\inst{3} \and Bernard Ries\inst{4}
}

\institute{Kadir Has University, Istanbul, Turkey
\texttt{oznur.yasar@khas.edu.tr}
\and
Durham University, Durham, UK, \texttt{daniel.paulusma@durham.ac.uk}
\and
CNAM, Laboratoire CEDRIC, Paris, France
\texttt{christophe.picouleau@cnam.fr}
\and
University of Fribourg, 
Fribourg, Switzerland,  \texttt{bernard.ries@unifr.ch}
}

\oddsidemargin=1.4cm
\evensidemargin=1.4cm
\textwidth=13.6cm
\textheight=22.5cm
\topmargin=-1cm

\begin{document}
\maketitle
\setcounter{footnote}{0}

\begin{abstract}
We study the following problem: for given integers $d$, $k$ and graph~$G$, can we reduce some fixed graph parameter~$\pi$ of $G$ by at least $d$ via at most $k$ graph operations from some fixed set $S$? As parameters we take the chromatic number~$\chi$, clique number $\omega$ and independence number $\alpha$, and as operations we choose the edge contraction \ec\  and vertex deletion \vd. 
We determine the complexity of this problem for $S=\{\ec\}$ and $S=\{\vd\}$ and $\pi\in \{\chi,\omega,\alpha\}$ for a number of subclasses of perfect graphs.
We use these results to determine the complexity of the problem for $S=\{\ec\}$ and $S=\{\vd\}$ and $\pi\in \{\chi,\omega,\alpha\}$ 
restricted to $H$-free graphs.
\end{abstract}

\section{Introduction}
\label{s-intro}

A typical graph modification problem aims to modify a graph $G$, via a small number of operations from a specified set~$S$, into some other graph $H$ that has a certain desired property, which usually describes a certain graph class~{${\cal G}$ to which $H$ must belong. 
In this way a variety of classical graph-theoretic problems is captured.
For instance, if only $k$ vertex deletions are allowed and $H$ must be an independent set or a clique, we obtain the {\sc Independent Set} or {\sc Clique} problem, respectively. 

Now, instead of fixing a particular graph class~${\cal G}$, we {\it fix a certain graph parameter~$\pi$}. That is, for a fixed set~$S$ of graph operations, 
we ask, given a graph $G$, integers~$k$ and~$d$, whether $G$ can be transformed into a graph $G'$ by using at most $k$ operations from $S$, such that $\pi(G')\leq \pi(G)-d$. 
The integer~$d$ is called the {\it threshold}.
Such problems are called {\it blocker problems}, as the set of vertices or edges involved ``block''  some desirable graph property, such as being colourable with only a few colours. Identifying the part of the graph responsible for a significant decrease of the parameter under consideration gives crucial information on the graph. 

Blocker problems have been given much attention over the 
last few years, see for instance~\cite{BBPR,BTT11,Bentz,CWP11,PBP14,PBP15,RBPDCZ10,T10}.
Graph parameters considered were the chromatic number, the independence number, the clique number, the matching number, the
weight of a minimum dominating set 
and the vertex cover number. So far, the set $S$ always consisted of a single graph operation, which was a vertex deletion, edge deletion or an edge addition. 
In this paper, we keep the restriction on the size of $S$ by letting $S$ consist of either a single vertex deletion or, for the first time, a single edge contraction. 
As graph parameters we consider the independence number~$\alpha$, the clique number~$\omega$ and the chromatic number~$\chi$.

Before we can define our problems formally, we first need to give some terminology.
The \emph{contraction} of an edge $uv$ of a graph $G$ removes the vertices $u$ and $v$ from $G$, and replaces them by a new vertex made adjacent to precisely those vertices that were adjacent to $u$ or $v$ in $G$ (neither introducing self-loops nor multiple edges). We say that $G$ can be \emph{$k$-contracted} or \emph{$k$-vertex-deleted} into a graph~$G'$, if $G$ can be modified into $G'$ by a sequence of at most~$k$ edge contractions or vertex deletions, respectively. We let $\pi$ denote the (fixed) graph parameter; as mentioned, in this paper $\pi$ belongs to $\{\alpha,\omega,\chi\}$. 

We are now ready to define our decision problems in a general way:
\begin{center}
\begin{boxedminipage}{.99\textwidth}
\textsc{\sc Contraction Blocker($\pi$)}\\[2pt]
\begin{tabular}{ r p{0.8\textwidth}}
\textit{~~~~Instance:} &a graph $G$ and two integers $d,k\geq 0$\\
\textit{Question:} &can $G$ be $k$-contracted into a graph $G'$ such that $\pi(G')\leq \pi(G)-d$?
\end{tabular}
\end{boxedminipage}
\end{center}

\begin{center}
\begin{boxedminipage}{.99\textwidth}
\textsc{\sc Deletion Blocker($\pi$)}\\[2pt]
\begin{tabular}{ r p{0.8\textwidth}}
\textit{~~~~Instance:} &a graph $G$ and two integers $d,k\geq 0$\\
\textit{Question:} &can $G$ be $k$-vertex-deleted into a graph $G'$ such that~$\pi(G')\leq \pi(G)-d$?
\end{tabular}
\end{boxedminipage}
\end{center}

\medskip
\noindent
If we remove $d$ from the input and fix it instead, then we call the resulting problems {\sc $d$-Contraction Blocker($\pi$)} and 
{\sc $d$-Deletion Blocker($\pi$)}, respectively. 

\begin{center}
\begin{boxedminipage}{.99\textwidth}
\textsc{\sc $d$-Contraction Blocker($\pi$)}\\[2pt]
\begin{tabular}{ r p{0.8\textwidth}}
\textit{~~~~Instance:} &a graph $G$ and an integer $k\geq 0$\\
\textit{Question:} &can $G$ be $k$-contracted into a graph $G'$ such that $\pi(G')\leq \pi(G)-d$?
\end{tabular}
\end{boxedminipage}
\end{center}

\begin{center}
\begin{boxedminipage}{.99\textwidth}
\textsc{\sc $d$-Deletion Blocker($\pi$)}\\[2pt]
\begin{tabular}{ r p{0.8\textwidth}}
\textit{~~~~Instance:} &a graph $G$ and an integer $k\geq 0$\\
\textit{Question:} &can $G$ be $k$-vertex-deleted into a graph $G'$ such that~$\pi(G')\leq \pi(G)-d$?
\end{tabular}
\end{boxedminipage}
\end{center}

The goal of our paper is to increase our understanding of the complexities of {\sc Contraction Blocker($\pi$)} and {\sc Deletion Blocker($\pi$)} fo
$\pi \in \{\omega,\chi,\alpha\}$. In order to do so, we will also consider the problems {\sc $d$-Contraction Blocker($\pi$)} and {\sc $d$-deletion blocker($\pi$)}.

\subsection{Known Results and Relations to Other Problems}\label{s-known}

It is known that {\sc Deletion Blocker($\alpha$)} is polynomial-time solvable for bipartite graphs, as proven both by
Bazgan, Toubaline and Tuza~\cite{BTT11} and Costa, de Werra and Picouleau~\cite{CWP11}.
The former authors also proved that  {\sc Deletion Blocker($\alpha$)} is polynomial-time solvable for cographs and graphs of bounded treewidth. The latter authors also proved that
for $\pi \in \{\omega,\chi\}$, {\sc Deletion Blocker($\pi$)} is polynomial-time solvable for cobipartite graphs. Moreover, they showed that for $\pi \in \{\omega,\chi,\alpha\}$, {\sc Deletion Blocker($\pi$)} is \NP-complete for the class of split graphs, but becomes polynomial-time solvable for this graph class if $d$ is fixed.

By using a number of example problems we will now illustrate how the blocker problems studied in this paper relate to a number of other problems known in the literature.
As we will see, this immediately leads to new complexity results for the blocker problems.

\medskip
\noindent
{\bf 1. Hadwidger Number and Club Contraction.}
The \sc Contraction Blocker($\alpha$)} problem generalizes the well-known {\sc Hadwiger Number} problem, which is that of testing whether a graph can be contracted into the complete graph $K_r$ on $r$ vertices 
for some given integer $r$. Indeed, we obtain the latter problem from the first by restricting instances to instances $(G,d,k)$ where $d=\alpha(G)-1$ and $k=|V(G)|-r$. Note that the diameter and independence number of $K_r$ are both equal to~1. Hence, one can also generalize {\sc Hadwiger Number} in another way: the {\sc Club Contraction} problem (see e.g.~\cite{GHHP14}) is that of testing whether a graph $G$ can be $k$-contracted into a graph with diameter at most~$s$ for some given integers~$k$ and $s$. 
As such, {\sc Contraction Blocker($\alpha$)} can be seen as a natural counterpart of {\sc Club Contraction}.

\medskip
\noindent
{\bf 2. Graph transversals.}
Blocker problems generalize so-called graph transversal problems.
To explain the latter type of problems, for a family of graphs~${\cal H}$, the {\sc ${\cal H}$-transversal} problem is to test if a graph $G$ can be $k$-vertex-deleted, for some integer~$k$, into a graph~$G'$ that has no induced subgraph isomorphic to a graph in~${\cal H}$. For instance, the problem {\sc $\{K_2\}$-transversal} is the same as {\sc Vertex Cover}. Here are some examples of specific connections between graph transversals and blocker problems.

\begin{itemize}
\item Let ${\cal H}$ be the family $\{K_p\; |\; p\geq 2\}$ of all complete graphs on at least two vertices. Then {\sc ${\cal H}$-transversal} is equivalent to {\sc Deletion Blocker($\omega$)} restricted to instances $(G,d,k)$ with $d=\omega(G)-1$.
\item In our paper we will prove that for a graph $G$ with at least one edge and an integer~$k \geq 1$, the instance $(G,\omega(G)-1,k)$ is a yes-instance of {\sc Deletion Blocker($\omega$)} if and only if $(G,k)$ is a yes-instance of {\sc Vertex Cover}.
\item
The {\sc Odd Cycle Transversal} problem is to test whether a given graph can be made bipartite by removing at most $k$ vertices for some given integer $k\geq 0$. This problem is \NP-complete~\cite{LY80}, and it is equivalent to {\sc Deletion Blocker($\chi$)} for instances $(G,d,k)$ where $d=\chi(G)-2$.
\item  
The {\sc ${d}$-Transversal} or {\sc $d$-Cover} problem~\cite{CWP11} is to decide whether a graph $G=(V,E)$ contains a set $V'$ that intersects each 
maximum set satisfying some specified property~$\pi$ by at least $d$ vertices. For instance, if the property is being an independent set, {\sc $1$-Transversal} is equivalent to {\sc 1-Deletion Blocker($\alpha$)}. 
\end{itemize}

\noindent
{\bf 3. Bipartite Contraction.}
The problem {\sc Bipartite Contraction} is to test whether a graph can be made bipartite by at most $k$ edge contractions.
Heggernes et al.~\cite{HHLP} proved that this problem is \NP-complete.
It is readily seen that {\sc $1$-Contraction Blocker($\chi$)}   and {\sc Bipartite Contraction} are equivalent for graphs of chromatic number~3.

\medskip
\noindent
{\bf 4. Maximum induced bipartite subgraphs.}
The {\sc Maximum Induced Bipartite Subgraph} problem is to decide if a given graph contains an induced bipartite subgraph with at least $k$ vertices for some integer~$k$.
Addario-Berry et al.\cite{AKKLR10} proved that this problem is  \NP-complete for the class of 3-colourable perfect graphs. 
We observe that, for 3-colourable graphs, {\sc $1$-Deletion Blocker($\chi$)} is equivalent to {\sc Maximum Induced Bipartite Subgraph}.

\medskip
\noindent
{\bf 5. Cores.}
The two problems {\sc $1$-Deletion Blocker($\alpha$)}  and {\sc $1$-Deletion Blocker($\omega$)} are equivalent to testing whether the input graph contains a set of $S$ of size at most~$k$ that intersects every maximum independent set or  every maximum clique, respectively. If $k=1$, these two problems become  equivalent to testing whether the input graph contains a vertex that is in every maximum independent set, or in every maximum clique, respectively. In particular, the intersection of all maximum independent sets is known as the  {\it core} of a graph. 
Properties of the core have been well studied (see, for example,~\cite{HHS82,LM02,LM12}).
In particular, Boros, Golumbic and Levit~\cite{BGL02} proved that computing if the core of a graph has size at least $\ell$ is co-\NP-hard for every fixed $\ell\geq 1$. Taking $\ell=1$ gives co-\NP-hardness of  {\sc $1$-Deletion Blocker($\alpha$)}.

\medskip
\noindent
{\bf 6. Critical vertices and edges.}
The restriction $d=k=1$ has also been studied when $\pi=\chi$.
A vertex of a graph~$G$ is {\it critical} if its deletion reduces the chromatic number of~$G$ by~1. An edge of a graph is
{\it critical} or {\it contraction-critical} if its deletion or contraction, respectively, reduces the chromatic number of~$G$ by~1.
The problems {\sc Critical Vertex}, {\sc Critical Edge} and {\sc Contraction-Critical} are to test if a graph has a critical vertex, critical edge or contraction-critical edge, respectively. We note that the first two problems are the restrictions of {\sc Contraction Blocker($\chi$)} and {\sc Deletion Blocker($\chi$)} to instances $(G,d,k)$ where $d=k=1$. Complexity dichotomies exist for each of the three problems on $H$-free graphs, and moreover the latter two problems are shown to be equivalent~\cite{PPR17b}.
Graphs with a critical (or equivalently contraction-critical) edge are also called {\it colour-critical} (see, for instance,~\cite{PY17}).

\medskip
\noindent
Due to links to problems as the ones above, it is of no surprise that many results for blocker problems are known implicitly in the literature already in various settings. 
For example, Belmonte et al.~\cite{BGHP13} proved that  $1$-{\sc Contraction Blocker($\Delta$)}, where $\Delta$ denotes the maximum vertex-degree, is \NP-complete
even for split graphs. We make use of several known complexity results for some of the related problems stated above for proving our results.

\subsection{Our Results}

In Section~\ref{s-known} we mentioned that {\sc Deletion Blocker($\pi$)} is known to be \NP-complete for $\pi\in \{\alpha,\omega,\chi\}$ even when restricted to special graph classes.  
Non-surprisingly, {\sc Contraction Blocker($\pi$)} is \NP-complete for $\pi\in \{\alpha,\omega,\chi\}$ as well (this follows from our results in Section~\ref{sec:general}, but it is also easy to show this directly).

Due to the above,
it is natural to restrict inputs to some special graph classes in order to obtain tractable results and to increase our understanding of the computational hardness of the problems. 
Note that it is not always clear whether {\sc Contraction Blocker($\pi$)} and {\sc Deletion Blocker($\pi$)} belong to \NP\ when restricted to a graph class~${\cal G}$. However, when ${\cal G}$ is closed under edge contraction or vertex deletion, respectively, and $\pi$ can be verified in polynomial time, then membership of \NP\ holds: we can take as certificate the sequence of edge contractions or vertex deletions, respectively.

\begin{table}[htp]
\begin{center}
\hspace*{-0.5cm}
\begin{tabular}{|l||c|c||c|c|}
\hline
&\multicolumn{2}{c||}{{\sc \textsc{Contraction Blocker($\pi$)}}}&\multicolumn{2}{c|}{{\sc \textsc{Deletion Blocker}($\pi$)}}\\
\hline
\textbf{Class}&$\pi=\alpha$&$\pi=\omega=\chi$&$\pi=\alpha$&$\pi\in \{\omega,\chi\}$\\
\hline
\hline
tree& \PP & \PP& \hspace*{9.2mm}\PP~\cite{BTT11,CWP11}& \PP\\
\hline
bipartite  ($\chi=2$)&\NP-h& \PP & \hspace*{9.5mm}\PP~\cite{BTT11,CWP11}& \PP\\
\hline
cobipartite&$d=1$: \NP-c& \NP-c; $d$ fixed: \PP& \PP& \hspace*{4.7mm} \PP~\cite{CWP11}\\
\hline
cograph&\PP& \PP& \hspace*{3mm} \PP~\cite{BTT11}& \PP\\
\hline
split&\NP-c; $d$ fixed: \PP&\NP-c; $d$ fixed: \PP&\NP-c; $d$ fixed: \PP~\cite{CWP11}& \NP-c; $d$ fixed: 
\PP~\cite{CWP11}\\
\hline
interval&& \PP& &  \PP\\
\hline
chordal&\NP-c&$d=1$: \NP-c&\NP-c & $d=1$: \NP-c\\
\hline
$C_4$-free perfect \& $\omega=3$& & $d=1$: \NP-c& &  \\
\hline
perfect &$d=1$: \NP-h& $d=1$: \NP-h$ $& \NP-c &  $d=1$: \NP-c\\
\hline
\end{tabular}\vskip5mm
\caption{Summary of results for subclasses of perfect graphs. Here \NP-c  and \NP-h stand for \NP-complete and \NP-hard, respectively, whereas \PP\ stands for polynomial-time solvable. A blank entry indicates an open case. All entries apart from the five referenced ones and their consequences for chordal and perfect graphs are new results proven in Part~I of this paper.}\label{t-thetable2}
\end{center}
\end{table}

%DP: negative vspace.
\vspace*{-1cm}
We present our results in two parts.

\medskip
\noindent
{\bf Part I.}
In the first part of our paper 
we focus on the class of perfect graphs and a number of well-known subclasses of perfect graphs. Most of these classes are not only closed under vertex deletion but also under edge contraction. This enables us to get unified results for the cases $\pi=\omega$ and $\pi=\chi$ (note that $\omega=\chi$ holds by definition of a perfect graph). Another reason for considering subclasses of perfect graphs is that $\alpha$, $\omega$, $\chi$ can be computed in polynomial time for perfect graphs; Gr\"otschel, Lov\'asz, and Schrijver~\cite{GLS84} proved this for $\chi$ and thus for $\omega$, whereas the result for $\alpha$ follows from combining this result with the fact that perfect graphs are closed under complementation. This helps us with finding tractable results or at least with obtaining membership of \NP\ (if in addition the subclass under consideration is closed under edge contraction or vertex deletion). 

Table~\ref{t-thetable2} gives an overview of the known results and our new results for the classes of perfect graphs we consider. We have unified results for the cases $\pi=\omega$ and $\pi=\chi$ even for the
perfect graph
classes in this table that are not closed under edge contraction, namely
the classes of bipartite graphs; 
$C_4$-free perfect graphs with clique number~3; and the class of perfect graphs itself. As the class of perfect graphs
is not closed under edge contraction we could for perfect graphs only deduce that the three contraction blocker problems are \NP-hard (even if $d=1$).
As the class of cographs coincides with the class of $P_4$-free graphs (where $P_r$ denotes the $r$-vertex path) and split graphs are $P_5$-free, the corresponding rows in Table~\ref{t-thetable2} show a complexity jump of all our problems for $P_t$-free graphs from $t=4$ to $t=5$. 
Recall also from Section~\ref{s-known} that the {\sc Hadwiger Number} problem is a special case of {\sc Contraction Blocker($\alpha$)})
As such, our polynomial-time result in Table~\ref{t-thetable2} for {\sc Contraction Blocker($\alpha$)} restricted to cographs generalizes a result of Golovach et al.~\cite{GHHP14}, who proved that the {\sc Hadwiger Number} problem is polynomial time solvable on cographs.\\

\noindent
{\bf Part II.}
In the second part of our paper we give several dichotomy results. First we give, for $\pi\in \{\alpha,\omega,\chi\}$, complete classifications of {\sc Deletion Blocker($\pi$)} and {\sc Contraction Blocker($\pi$)} depending on the size of $\pi$, that is, we prove the following theorem.

\begin{theorem}\label{thm:NPC}
The following six dichotomies hold:
\begin{itemize}
\item [{\bf (i)}] {\sc Contraction Blocker($\alpha$)}  is polynomial-time solvable for graphs with $\alpha=1$ and 
{\sc 1-Contraction Blocker($\alpha$)} is \NP-complete for graphs with $\alpha=2$;\\[-8pt]
\item[{\bf (ii)}] {\sc Contraction Blocker($\chi$)}  is polynomial-time solvable for graphs with $\chi=2$ and 
 {\sc $1$-Contraction Blocker($\chi$)} is \NP-complete for graphs with $\chi=3$;\\[-8pt]
\item  [{\bf (iii)}] {\sc Contraction Blocker($\omega$)} is polynomial-time solvable for graphs with $\omega=2$ and
{\sc $1$-Contraction Blocker($\omega$)} is \NP-complete for graphs with $\omega=3$;\\[-8pt]
\item  [{\bf (iv)}] {\sc Deletion Blocker($\alpha$)} is polynomial-time solvable for graphs with $\alpha=1$ and\\ 
{\sc $1$-Deletion Blocker($\alpha$)} is \NP-complete for graphs with $\alpha=2$;\\[-8pt]
\item [{\bf (v)}] {\sc Deletion Blocker($\chi$)} is polynomial-time solvable for graphs with $\chi=2$ and\\
{\sc $1$-Deletion Blocker($\chi$)} is \NP-complete for graphs with $\chi=3$;\\[-8pt]
\item [{\bf (vi)}] {\sc Deletion Blocker($\omega$)} is polynomial-time solvable for graphs with $\omega=1$ and\\
{\sc $1$-Deletion Blocker($\omega$)}  is \NP-complete for graphs with $\omega=2$;\end{itemize}
\end{theorem}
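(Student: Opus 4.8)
The six dichotomies split naturally into two groups: the polynomial-time (easy) halves, and the NP-completeness (hard) halves. The easy halves are essentially trivial and I would dispatch them first in a single paragraph: if $\alpha(G)=1$ then $G$ is a clique, so $\alpha$ can never be decreased by contractions and can be decreased by $1$ via one vertex deletion precisely when $|V(G)|\ge 2$; if $\omega(G)=1$ then $G$ is edgeless and the same triviality applies to {\sc Deletion Blocker($\omega$)}; if $\chi(G)=2$ or $\omega(G)=2$ then $G$ is bipartite (respectively triangle-free), and one checks directly that both {\sc Contraction Blocker} and {\sc Deletion Blocker} for $\chi$ and $\omega$ reduce to easy questions about connected components and edges — e.g.\ for $\chi=2$ one must destroy all edges of at least one non-trivial component, and the minimum number of contractions/deletions needed to make a connected bipartite graph edgeless is computable in polynomial time. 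Each of these six polynomial-time claims is a short, self-contained observation.

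The substance of the theorem is the NP-completeness of the six restricted problems. Here the key observation is that graphs with $\alpha=2$ are exactly complements of triangle-free graphs, graphs with $\omega=2$ are exactly triangle-free graphs, and graphs with $\chi=3$ form a superclass of these, so there is a lot of room to encode hard problems. I would try to reduce from a single well-chosen NP-complete problem — {\sc Vertex Cover} on triangle-free graphs is a natural candidate — and exploit the equivalences already spelled out in Section~\ref{s-known}. Indeed, the excerpt already records that for a graph $G$ with at least one edge, $(G,\omega(G)-1,k)$ is a yes-instance of {\sc Deletion Blocker($\omega$)} iff $(G,k)$ is a yes-instance of {\sc Vertex Cover}; specialising to triangle-free $G$ (so $\omega(G)=2$ and $d=\omega(G)-1=1$) immediately gives part (vi), since {\sc Vertex Cover} is NP-complete on triangle-free graphs. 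For part (iv), one uses complementation: $\alpha(\overline{G})=\omega(G)$ and vertex deletion commutes with complementation, so {\sc 1-Deletion Blocker($\alpha$)} on graphs with $\alpha=2$ is polynomially equivalent to {\sc 1-Deletion Blocker($\omega$)} on graphs with $\omega=2$, which is part (vi). Part (v) for $\chi=3$ is handled by the equivalence (also noted in the excerpt) between {\sc 1-Deletion Blocker($\chi$)} on $3$-chromatic graphs and {\sc Maximum Induced Bipartite Subgraph}, together with the Addario-Berry et al.\ hardness result; alternatively one reduces directly from {\sc Odd Cycle Transversal} restricted to instances where $\chi(G)=3$, i.e.\ $d=\chi(G)-2=1$.

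The contraction side (parts (i)--(iii)) requires a genuine reduction rather than a citation, and this is where I expect the main obstacle to lie. For part (iii), {\sc 1-Contraction Blocker($\omega$)} with $\omega=3$: contracting an edge $uv$ of a triangle-free graph can only create a triangle-free graph if $u,v$ have no common neighbour, so the natural idea is to show that reducing $\omega$ from $3$ to $2$ by $k$ contractions corresponds to hitting all triangles cheaply; the difficulty is that contractions can merge vertices and thereby create new triangles, so one must design a gadget in which the only ``safe'' contractions are those that destroy triangles and no new ones appear — e.g.\ by subdividing or by attaching pendant structures so that every contraction has a controlled effect. I would look for a reduction from a hitting-set-type problem (hitting all triangles, or {\sc Vertex Cover} in an auxiliary graph) and verify the gadget carefully. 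Part (i) then follows from part (iii) by complementation combined with the identity $\omega(\overline G)=\alpha(G)$ — but one must check that edge \emph{contraction} interacts correctly with complementation, which it does \emph{not} in general, so here a separate direct argument or a different reduction (e.g.\ adapting the {\sc Hadwiger Number} connection, since {\sc Contraction Blocker($\alpha$)} generalises {\sc Hadwiger Number}) will be needed. Part (ii), {\sc 1-Contraction Blocker($\chi$)} with $\chi=3$, is by the equivalence stated in the excerpt with {\sc Bipartite Contraction} restricted to $3$-chromatic graphs, whose NP-completeness is due to Heggernes et al.; one only has to observe that the hardness instances can be taken $3$-chromatic. The overall structure of the write-up would therefore be: (a) the six polynomial-time observations; (b) part (vi) via {\sc Vertex Cover} on triangle-free graphs; (c) parts (iv) and (v) by reduction to/from (vi) and known transversal results; (d) part (ii) via {\sc Bipartite Contraction}; (e) part (iii) via a fresh gadget reduction; (f) part (i) via the {\sc Hadwiger Number} specialisation or a complementation-style argument adapted to contraction. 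The gadget in step (e) is the crux; everything else is either a one-line citation or a routine check.
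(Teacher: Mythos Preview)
Your plan for parts (ii), (iv), (v), (vi) matches the paper's proof essentially verbatim: (vi) via Corollary~\ref{c-vc} plus Lemma~\ref{l-po2} ({\sc Vertex Cover} on $C_3$-free graphs), (iv) by complementation, (v) via the Addario-Berry et al.\ {\sc Maximum Induced Bipartite Subgraph} result, and (ii) via {\sc Bipartite Contraction} and the observation that the Heggernes et al.\ reduction (subdividing edges to odd paths) produces $3$-chromatic instances. For (i) the paper does exactly what your ``Hadwiger Number specialisation'' amounts to: it invokes Theorem~\ref{t-firstco}, i.e.\ the Golovach--Heggernes--van~'t~Hof--Paul result that {\sc $1$-Club Contraction} is \NP-complete on cobipartite graphs, which is precisely the statement that contracting a graph with $\alpha=2$ to a clique is hard. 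So your instinct that complementation fails but the Hadwiger connection works is correct, and you can simply cite that result.

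The one genuine gap is part (iii), which you correctly flag as the crux but do not resolve. Your ``hitting all triangles'' intuition runs into exactly the obstacle you name: contractions create new triangles, so a direct reduction from a triangle-transversal problem does not go through without a gadget that tightly controls which contractions are safe. The paper's solution is a reduction from {\sc One-in-3-SAT}. Each variable $x$ gets a triangle glued along one edge to a $C_4$ (so the two non-square edges of the triangle encode the literals $x$ and $\bar x$); each clause gets a triangle whose three edges encode its literals; and a literal edge in a clause gadget is linked by a $2$-edge matching to the corresponding literal edge in the variable gadget, forming a $C_4$. The budget is $k=n+m$, exactly one contraction per triangle. In a variable gadget, contracting a square edge creates a triangle, so one must contract a literal edge; but contracting a literal edge turns each incident matching-$C_4$ into a triangle, forcing the corresponding clause edge to be the one contracted in that clause gadget. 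Thus exactly one literal per clause is selected and the truth assignment is read off. This gadget is what your step~(e) needs; the rest of your outline is sound.

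One small slip in your easy cases: for {\sc Contraction Blocker($\chi$)} with $\chi(G)=2$ and $d=1$, you need $\chi(G')\le 1$, i.e.\ $G'$ edgeless, which requires contracting \emph{every} non-trivial component to a single vertex, not ``at least one''. The count is $\sum_i(|V(D_i)|-1)$ over non-trivial components $D_i$, still polynomial.
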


In particular we extend the hardness proof of Theorem~\ref{thm:NPC}~(iii) in order to obtain the hardness result  for $C_4$-free perfect graphs with $\omega=3$ in Table~\ref{t-thetable2}.
We note that some of the results in Table~\ref{t-thetable2}, such as this result, may at first sight seem somewhat arbitrary. However, we need the result for $C_4$-free perfect graphs with $\omega=3$ and other results of Table~\ref{t-thetable2} to prove our other results  of  the second part of our paper. Namely, by combining the results for subclasses of perfect graphs with other results, we obtain complexity dichotomies for our six blockers problems restricted to $H$-free graphs, that is, graphs that do not contain some (fixed) graph $H$ as an induced subgraph. These dichotomies are stated in the following summary; here, $P_r$ is the $r$-vertex path, $C_3$ is the triangle, and the paw is the triangle with an extra vertex adjacent to exactly one vertex of the triangle, whereas $\ssi$ denotes the induced subgraph relation and $\oplus$ denotes the disjoint union of two vertex disjoint graphs.

\begin{theorem}\label{t-mainmainmain}
Let $H$ be a graph. Then the following holds:
\begin{itemize}
\item [(i)] 
If $H\subseteq_i P_4$, then  {\sc Deletion Blocker($\alpha$)} is polynomial-time solvable for $H$-free graphs, otherwise it is 
\NP-hard or co-\NP-hard for $H$-free graphs.\\[-8pt]
\item [(ii)] 
If $H\subseteq_i P_4$, then {\sc Contraction Blocker($\alpha$)} is polynomial-time solvable for $H$-free graphs, otherwise it is
\NP-hard for $H$-free graphs.\\[-8pt]
\item [(iii)]
f $H\subseteq_i P_4$, then {\sc Deletion Blocker($\omega$)} is polynomial-time solvable for $H$-free graphs; otherwise it is \NP-hard or co-\NP-hard for $H$-free graphs.\\[-8pt]
\item [(iv)] Let $H\neq C_3\oplus P_1$. If $H\ssi P_4$ or $H\ssi \mbox{paw}$, then {\sc Contraction Blocker$(\omega$)} is polynomial-time solvable for $H$-free graphs, otherwise it is \NP-hard or co-\NP-hard for $H$-free graphs.\\[-8pt]
\item [(v)]
If $H\ssi P_1\oplus P_3$ or $H\ssi P_4$, then  {\sc Deletion Blocker$(\chi)$} is polynomial-time solvable for $H$-free graphs, otherwise it is \NP-hard or co-\NP-hard for  $H$-free graphs.\\[-8pt]
\item [(vi)]
If $H\ssi P_4$, then  {\sc Contraction Blocker$(\chi)$} is polynomial-time solvable for $H$-free graphs, otherwise it is \NP-hard 
or co-\NP-hard for  $H$-free graphs.
\end{itemize}
\end{theorem}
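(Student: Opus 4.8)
The plan is to read Theorem~\ref{t-mainmainmain} as an assembly result: for each of the six items the dichotomy is obtained by combining the polynomial-time algorithms and the \NP-hardness (and co-\NP-hardness) results of Part~I, summarised in Table~\ref{t-thetable2}, with Theorem~\ref{thm:NPC} and the classical problems recalled in Section~\ref{s-known}, via a case analysis on the structure of the forbidden induced subgraph~$H$. The one tool that drives every step is the elementary \emph{transfer principle}: if $\mathcal{C}$ is a graph class no member of which contains $H$ as an induced subgraph, then $\mathcal{C}$ is contained in the class of $H$-free graphs, so any hardness or tractability on $\mathcal{C}$ is inherited by the $H$-free graphs. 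Proving the theorem thus amounts to sorting the possible $H$ into those for which the $H$-free graphs fall inside a ``tractable'' class of Part~I and those for which they contain a ``hard'' class.

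For the polynomial side, observe first that if $H\ssi P_4$ then the $H$-free graphs form a subclass of the $P_4$-free graphs, i.e.\ of the cographs, on which Table~\ref{t-thetable2} gives polynomial algorithms for all six problems (the $\alpha$-deletion case also appearing in~\cite{BTT11}); this settles the ``easy'' direction of (i), (ii), (iii) and (vi). For (iv) one additionally needs the paw-free case: here one invokes the structure theorem for paw-free graphs (every connected paw-free graph, by a theorem of Olariu, is triangle-free or complete multipartite) and checks \textsc{Contraction Blocker}$(\omega)$ on each of the resulting transparent subfamilies. For (v) one additionally needs the $(P_1\oplus P_3)$-free case; since $\overline{P_1\oplus P_3}$ is the paw, a $(P_1\oplus P_3)$-free graph is, by the same theorem applied to the complement, a join of graphs with independence number at most~$2$ and of disjoint unions of cliques, on which $\chi$ — and then \textsc{Deletion Blocker}$(\chi)$ — is easily controlled.

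For the hardness side, fix $H$ not in the relevant easy list (and, in (iv), $H\neq C_3\oplus P_1$). The decisive observations are that: if $H$ is not chordal then the chordal graphs are $H$-free, so the chordal-graph hardness of Table~\ref{t-thetable2} applies; if $H$ is not bipartite then the bipartite graphs are $H$-free, which suffices for \textsc{Contraction Blocker}$(\alpha)$ (item (ii)); if $H$ is not a split graph then the split graphs are $H$-free (use Table~\ref{t-thetable2} and~\cite{CWP11}); if $\alpha(H)\ge 3$ then the graphs with $\alpha=2$ are $H$-free, so Theorem~\ref{thm:NPC}(i),(iv) applies to the $\alpha$-versions; and, for the $\omega$- and $\chi$-versions, if $H$ is not an induced subgraph of any $C_4$-free perfect graph with $\omega=3$ then that class is $H$-free — which is exactly the reason for extending the hardness proof of Theorem~\ref{thm:NPC}(iii). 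The residual $\chi$-cases are routed to \textsc{Odd Cycle Transversal}, \textsc{Bipartite Contraction}, \textsc{Maximum Induced Bipartite Subgraph} and the \textsc{Critical Vertex}/\textsc{Critical Edge} results of Section~\ref{s-known}; for $H\si K_{1,3}$ one may additionally fall back on structured classes such as the line graphs, which are claw-free; and the residual $\alpha$-cases that escape every \NP-hard class (typically those $H$ containing a triangle, where the chordal, split and $\alpha=2$ classes are lost) are caught by the core-based co-\NP-hardness of \textsc{$1$-Deletion Blocker}$(\alpha)$ from~\cite{BGL02}. One then verifies, for each item separately, that these cases are exhaustive.

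The main obstacle is precisely this exhaustiveness check, which is a somewhat laborious bookkeeping over all $H$ outside the easy list: one must confirm that for each such $H$ at least one of the hard classes above really is $H$-free (e.g.\ that the instances realising the extended $C_4$-free-perfect-$\omega=3$ hardness, and the claw-free instances used when $H\si K_{1,3}$, avoid the intended $H$), and that in items (i), (iii), (v) and (vi) the split between \NP-hardness and co-\NP-hardness is exactly the split between the $H$ caught by an \NP-hardness class and those caught only by the core argument. A secondary point is membership: because the $H$-free classes are in general not closed under edge contraction or vertex deletion, only \NP-hardness (not \NP-completeness) can be claimed, and for the same reason the co-\NP-hard cases cannot be upgraded. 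Finally one must isolate why $C_3\oplus P_1$ has to be excluded from~(iv): the class of $(C_3\oplus P_1)$-free graphs contains none of the hard classes used above, yet it has no obvious tractable structure either, so \textsc{Contraction Blocker}$(\omega)$ remains genuinely open there.
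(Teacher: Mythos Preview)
Your overall plan—treating Theorem~\ref{t-mainmainmain} as an assembly of the tractability and hardness results from Part~I via a case analysis on~$H$—is exactly the paper's approach, and several of your identifications (cographs for the polynomial side, Olariu's theorem for the paw-free case in~(iv), the complement structure for $(P_1\oplus P_3)$-free graphs in~(v)) are correct. But there are genuine gaps in the hardness bookkeeping.

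The most concrete one is in item~(i). You claim the residual $\alpha$-cases, ``typically those $H$ containing a triangle'', are caught by the core-based co-\NP-hardness from~\cite{BGL02}. That result is for general graphs; it does not say the hard instances are triangle-free. If $H=C_3$, none of your hard classes (chordal, split, $\alpha=2$) is $C_3$-free, and~\cite{BGL02} as stated does not apply either. The paper closes this with a new intermediate result (Theorem~\ref{t-po}): by repeatedly double-subdividing edges one pushes the co-\NP-hardness of the forced-vertex problem to graphs of arbitrarily large girth, hence to $(C_3,\ldots,C_p)$-free graphs for every~$p$; this is what actually covers every~$H$ containing a cycle. A second gap is item~(iii): rather than redoing the $\alpha$-arguments for~$\omega$, the paper simply observes that \textsc{Deletion Blocker}$(\omega)$ on $H$-free graphs equals \textsc{Deletion Blocker}$(\alpha)$ on $\overline{H}$-free graphs and that $\overline{P_4}=P_4$, so~(iii) follows in one line from~(i). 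For~(iv) the analysis is more delicate than you suggest: beyond the $C_4$-free perfect class one needs hardness on $(\mbox{butterfly},\mbox{diamond},K_4)$-free graphs and a generic transfer lemma (Theorem~\ref{t-cll}) turning \NP-hardness of \textsc{Clique} on any ``clique-proof'' class into co-\NP-hardness of \textsc{Contraction Blocker}$(\omega)$ with $d=k=1$, fed with Poljak's list ($K_{1,3}$-free, cobanner-free, $(C_5,P_5)$-free, $(\mbox{bull},P_5)$-free); your allusion to line graphs is not the mechanism used. Finally, for~(v) and~(vi) the paper does not go through \textsc{Odd Cycle Transversal}, \textsc{Bipartite Contraction} or \textsc{Maximum Induced Bipartite Subgraph}; it invokes directly the existing $H$-free dichotomy for \textsc{Critical Vertex} and \textsc{Contraction-Critical Edge} (Theorem~\ref{t-critical}), which already disposes of every hard~$H$, leaving only $H\in\{3P_1,\,P_1\oplus P_3\}$ to be settled separately (Lemma~\ref{l-new} and Theorem~\ref{t-cohard}).
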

Statements (i), (ii), (iii), (v), (vi) of Theorem~\ref{t-mainmainmain} correspond to complete complexity dichotomies, whereas there is one missing case in statement~(iv). In particular we note that statements~(v) and~(vi) do not coincide for disconnected graphs~$H$.
We also observe from Theorem~\ref{t-mainmainmain}~(i) that {\sc Deletion Blocker($\alpha$)} is computationally hard for triangle-free graphs; in fact we will show co-\NP-hardness even if $d=k=1$. This in contrast to the problem being polynomial-time solvable for bipartite graphs, as shown in~\cite{BTT11,CWP11} (see also Table~\ref{t-thetable2}).

\subsection{Paper Organization} 

Section~\ref{sec:prelim} contains notation and terminology. 

Sections~\ref{sec:cob}--\ref{sec:chordal} contain the results mentioned in Part I.
To be more precise, Section~\ref{sec:cob} contains our results for cobipartite graphs, bipartite graphs and trees. 
In Sections~\ref{s-co} and~\ref{s-split}, we prove our results for cographs and split graphs, respectively. In Section~\ref{s-split} we also show that our \NP-hardness reduction for split graphs can be used to prove that the three contraction blockers problems, restricted to split graphs, are \W[1]-hard when parameterized by~$d$. The latter result means that for split graphs these problems are unlikely to be fixed-parameter tractable with parameter~$d$.
In Sections~\ref{sec:interval} and~\ref{sec:chordal} we prove our results for interval graphs and chordal graphs, respectively.

Sections \ref{sec:general} and~\ref{s-clas} contain the results mentioned in Part~II. In Section~\ref{sec:general} we first prove dichotomies for the three contraction blocker and three vertex blocker problems when we classify on basis of the size
of $\pi\in \{\alpha,\chi,\omega\}$. In the same section, we modify the hardness construction for {\sc $1$-Contraction Blocker($\omega$)}
to prove that {\sc $1$-Contraction Blocker($\omega$)} is \NP-complete even for 
$C_4$-free perfect graphs with $\omega=3$.
In Section~\ref{s-clas} we prove Theorem~\ref{t-mainmainmain}.

Section~\ref{s-con} contains a number of open problems and directions for future research.

\section{Preliminaries}\label{sec:prelim}
 
We only consider finite, undirected graphs that have no self-loops and no multiple edges; we recall that when we contract an edge no self-loops or multiple edges are created.
We refer to~\cite{Di05} or~\cite{West} for undefined terminology and to~\cite{DF99} for more on parameterized complexity.

Let $G=(V,E)$ be a graph. 
For a subset $S\subseteq V$, we let $G[S]$ denote the subgraph of $G$ {\it induced} by $S$, which has vertex set~$S$ and edge set $\{uv\in E\; |\; u,v\in S\}$. We write $H\ssi G$ if a graph $H$ is an induced subgraph of $G$.
Moreover, for a vertex $v\in V$, we write $G-v=G[V\setminus \{v\}]$ and for a subset $V'\subseteq V$ we write $G-V'=G[V\setminus V']$. 

For a set $\{H_1,\ldots,H_p\}$ of graphs, a graph $G$ is {\it $(H_1,\ldots,H_p)$-free} if $G$ has no induced subgraph isomorphic to a graph in $\{H_1,\ldots,H_p\}$; if $p=1$ we may write $H_1$-free instead of $(H_1)$-free.  
The {\it complement} of $G$  is the graph $\overline{G}=(V,\overline{E})$ with vertex set~$V$ and an edge between two vertices $u$ and $v$ if and only if~$uv\notin E$. 

Recall that the contraction of an edge $uv\in E$ removes the vertices $u$ and $v$ from a graph $G$ and replaces them by a new vertex that is made adjacent to precisely those vertices that were adjacent to $u$ or $v$ in $G$. This new graph will be denoted by $G\vert uv$. In that case we may also say that $u$ is \emph{contracted onto} $v$, and we use $v$ to denote the new vertex resulting from the edge contraction. The {\it subdivision} of an edge $uv\in E$ removes the edge $uv$ from $G$ and replaces it
by a new vertex $w$ and two edges $uw$ and $wv$.

Let $G$ and $H$ be two vertex-disjoint graphs.
The {\it join} operation $\otimes$ adds an edge between every vertex of $G$ and
every vertex of $H$. The {\it union} operation $\oplus$ takes the disjoint union of $G$ and $H$, that is, 
$G\oplus H=(V(G)\cup V(H),E(G)\cup E(H))$. We denote the disjoint union of $p$ copies of $G$ by $pG$.
For $n\geq 1$, the graph $P_n$ denotes the {\it path} on $n$ vertices, that is, $V({P_n})=\{u_1,\ldots,u_n\}$ and $E({P_n})=\{u_iu_{i+1}\; |\; 1\leq i\leq n-1\}$.
For $n\geq 3$, the graph $C_n$ denotes the {\it cycle} on $n$ vertices, that is,  $V({C_n})=\{u_1,\ldots,u_n\}$ and $E({C_n})=\{u_iu_{i+1}\; |\; 1\leq i\leq n-1\}\cup \{u_nu_1\}$. The graph $C_3$ is also called the {\it triangle}.
The {\it claw} $K_{1,3}$ is the 4-vertex star, that is, the graph with vertices $u$, $v_1$, $v_2$, $v_3$ and edges $uv_1$, $uv_2$, $uv_3$.

Let $G=(V,E)$ be a graph. A subset $K\subseteq V$ is called a {\it clique} of $G$ if any two vertices in $K$ are adjacent to each other.  The {\it clique number} $\omega(G)$ is the number of vertices in a maximum clique of $G$. 
 A subset $I\subseteq V$ is called an {\it independent set} of $G$ if any two vertices in $I$ are non-adjacent to each other.  The {\it independence number} $\alpha(G)$ is the number of vertices in a maximum independent set of $G$. For a positive integer $k$, a {\it $k$-colouring} of $G$ is a mapping $c: V\rightarrow\{1,2,\ldots,k\}$ such that $c(u)\neq c(v)$ whenever $uv\in E$.  The {\it chromatic number} $\chi(G)$ is the smallest integer $k$ for which $G$ has a $k$-colouring.  
A subset of edges $M\subseteq E$ is called a \textit{matching} if no two edges of $M$ share a common end-vertex. The {\it matching number} $\mu(G)$ is the number of edges in a maximum matching of a graph $G$. A vertex $v$ such that $M$ contains an edge incident with $v$ is \textit{saturated} by $M$; otherwise $v$ is \textit{unsaturated} by $M$.
A subset $S\subseteq V$ is a {\it vertex cover} of $G$ if every edge of $G$ is incident with at least one vertex of $S$.

The {\sc Coloring} problem is that of testing if a graph has a $k$-colouring for some given integer~$k$. The problems {\sc Clique} and {\sc Independent Set} are those of testing if a graph has a clique or independent set, respectively, of size at least~$k$. 
The {\sc Vertex Cover} problem is that of testing if a graph has a vertex cover of size at most~$k$. We need the following lemma at several places in our paper.

\begin{lemma}[\cite{Po74}]\label{l-po2}
{\sc Vertex Cover} is \NP-complete for $C_3$-free graphs.
\end{lemma}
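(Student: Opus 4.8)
The plan is to prove that {\sc Vertex Cover} remains \NP-complete when restricted to $C_3$-free (triangle-free) graphs by a reduction from the general {\sc Vertex Cover} problem, which is well known to be \NP-complete. The standard trick is to eliminate triangles one at a time by subdividing edges. Specifically, given a graph $G=(V,E)$ and an integer $k$, I would construct a triangle-free graph $G'$ together with an integer $k'$ such that $G$ has a vertex cover of size at most $k$ if and only if $G'$ has a vertex cover of size at most $k'$. The natural construction is to subdivide \emph{every} edge of $G$ exactly twice: replace each edge $uv\in E$ by a path $u\,-\,a_{uv}\,-\,b_{uv}\,-\,v$ on two new internal vertices $a_{uv},b_{uv}$. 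Since every remaining edge of $G'$ has at least one endpoint among the newly introduced degree-two vertices, and those vertices have no two neighbours that are adjacent, $G'$ is easily checked to be $C_3$-free (indeed it has girth at least the girth-dependent bound, but triangle-freeness is immediate).

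First I would argue membership in \NP, which is trivial: a vertex cover is a polynomially verifiable certificate. Then I would set up the subdivision construction and prove the equivalence. The key combinatorial fact is that subdividing a single edge $uv$ twice increases the optimal vertex cover size by exactly one: given an optimal cover $C$ of $G$, if $u\in C$ (the case $v\in C$ is symmetric) we add $b_{uv}$ to cover the two new edges incident to the path, and if neither $u$ nor $v$ is in $C$ — impossible, since $uv$ must be covered — so exactly one of them lies in $C$, and one extra vertex always suffices. Conversely, from a cover $C'$ of $G'$ one recovers a cover of $G$ of size $|C'|$ minus the number of subdivided edges: for each edge $uv$, the path $u\,a_{uv}\,b_{uv}\,v$ needs at least one vertex from $\{a_{uv},b_{uv}\}$ and at least one from $\{u,a_{uv}\}\cup\{b_{uv},v\}$ in a way that forces either an endpoint in $C'$ or allows us to swap one of the internal vertices for an endpoint without increasing the size. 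Carefully: any vertex cover of a path on four vertices $u\,a\,b\,v$ has size at least $2$, and if it has size exactly $2$ using only internal vertices $\{a,b\}$ then it fails to force coverage of the original edge, so one shows that $C'$ can be normalised so that each length-three path contributes exactly $2$ vertices, one of which is an endpoint of the original edge, yielding a cover of $G$ of size $|C'|-|E|$. Setting $k'=k+|E|$ completes the reduction.

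The main obstacle — really the only delicate point — is the backward direction: turning an arbitrary vertex cover $C'$ of $G'$ into one of $G$ of size $|C'|-|E|$. The issue is that $C'$ might, for some edge $uv$, contain both $a_{uv}$ and $b_{uv}$ and neither $u$ nor $v$. The fix is a local exchange argument: if $C'\cap\{a_{uv},b_{uv}\}=\{a_{uv},b_{uv}\}$ and $u\notin C'$, then $u$'s only uncovered incident edge in $G'$ through this path was $ua_{uv}$, which is already covered; replacing $b_{uv}$ by $u$ keeps $C'$ a vertex cover of the same size (edge $b_{uv}v$ is now uncovered only if $v\notin C'$, but then edge $ua_{uv}$ forces nothing — one has to check the bookkeeping and possibly choose which endpoint to pull in based on which of $u,v$ is absent). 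After finitely many such exchanges one reaches a cover in which every subdivided path contributes exactly one internal vertex and one original endpoint, and then $C'\cap V$ is a vertex cover of $G$ of the required size. One should also double-check the easy direction's count and that $G'$ is constructible in polynomial time, which it clearly is since $|V(G')|=|V|+2|E|$. This yields the lemma; note that the cited reference is Poljak~\cite{Po74}, whose original argument uses essentially this double-subdivision idea.
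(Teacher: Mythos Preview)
The paper does not give its own proof of this lemma; it is stated with a citation to Poljak~\cite{Po74} and used as a black box. Your proposal reconstructs essentially Poljak's original double-subdivision argument, which is the standard proof and is correct in outline.

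One small slip worth fixing: in your backward-direction exchange, when $C'\cap\{u,a_{uv},b_{uv},v\}=\{a_{uv},b_{uv}\}$, replacing $b_{uv}$ by $u$ leaves the edge $b_{uv}v$ uncovered (since $v\notin C'$). The correct local swap is $a_{uv}\mapsto u$ (or symmetrically $b_{uv}\mapsto v$), after which the path is covered by $\{u,b_{uv}\}$ and $u$ covers the original edge in $G$. You half-acknowledge this in your parenthetical remark; with that correction the argument goes through cleanly, and the cleaner bookkeeping is via the identity $\tau(G')=\tau(G)+|E|$, which follows from $\alpha(G')=\alpha(G)+|E|$ (Poljak's actual statement) together with $|V(G')|=|V(G)|+2|E|$.
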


An {\it interval graph} is a graph such that one can associate an interval of the real line with every vertex such that two vertices are adjacent if and only if the corresponding intervals intersect. A graph is {\it cobipartite} if it is the complement of a {\it bipartite} (2-colourable) graph. A graph is {\it chordal} if it contains no induced cycle on more than three vertices.
A graph is a {\it split graph} if it has a {\it split partition}, which is a partition of its vertex set into a clique~$K$ and an independent set~$I$.
Split graphs coincide with  $(2P_2,C_4,C_5)$-free graphs~\cite{FH77}.
A $P_4$-free graph is also called a {\it cograph}. 

A graph is {\it perfect} if the chromatic number of every induced subgraph equals the size of a largest clique in that subgraph. 
A  {\it hole} is an induced cycle on at least five vertices and an {\it antihole} is the complement of a hole. A hole or antihole is {\it odd} if it contains an odd number of vertices.
We need the following well-known theorem of Chudnovsky, Robertson, Seymour, and Thomas.
This theorem can also be used to verify that the other graph classes 
in Table~\ref{t-thetable2} are indeed subclasses of perfect graphs.

\begin{theorem}[Strong Perfect Graph Theorem \cite{CRST06}] 
\label{t-spgt}
A graph is perfect if and only if it contains no odd hole and no odd antihole.
\end{theorem}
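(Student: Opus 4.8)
The plan is to prove the two directions separately, with the forward direction being routine and the converse being the deep content. For the forward direction, note that perfection is a hereditary property, so to show that no perfect graph contains an induced odd hole or odd antihole it suffices to check that these graphs are themselves imperfect. An odd hole $C_{2k+1}$ with $k\geq 2$ has $\omega=2$ but $\chi=3$. For an odd antihole $\overline{C_{2k+1}}$ one computes $\omega(\overline{C_{2k+1}})=k$, since a clique of $\overline{C_{2k+1}}$ is an independent set of $C_{2k+1}$ and the largest such set has size $k$; meanwhile a proper colouring of $\overline{C_{2k+1}}$ amounts to a partition of $C_{2k+1}$ into cliques, and as $C_{2k+1}$ is triangle-free these are edges and vertices, forcing at least $k+1$ parts, so $\chi(\overline{C_{2k+1}})=k+1>k=\omega$. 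Hence any graph containing an induced odd hole or odd antihole has an imperfect induced subgraph and is not perfect.

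For the converse — that every graph with no odd hole and no odd antihole (a \emph{Berge graph}) is perfect — I would follow the structural strategy of Chudnovsky, Robertson, Seymour and Thomas. Its core is a \emph{decomposition theorem}: every Berge graph is either \emph{basic} or admits one of a short list of structural separations. The basic class consists of bipartite graphs and the line graphs of bipartite graphs, the complements of these two families, and the \emph{double split graphs}; the separations are a \emph{balanced skew partition}, a \emph{$2$-join}, or a $2$-join in the complement. I would first dispatch the easy ingredients: each basic graph is perfect — bipartite graphs trivially, their line graphs by a König-type argument, the two complementary families via the Weak Perfect Graph Theorem of Lov\'asz that perfection is preserved under complementation, and double split graphs by direct inspection.

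I would then argue by contradiction. Suppose some Berge graph is imperfect, and pass to a \emph{minimally imperfect} induced subgraph $G$ (imperfect, but with every proper induced subgraph perfect); $G$ is again Berge since Bergeness is hereditary. Applying the decomposition theorem to $G$ yields three cases. If $G$ is basic it is perfect, a contradiction. Otherwise I must rule out the separations: a minimally imperfect graph can admit neither a $2$-join (in $G$ or in $\overline{G}$) nor a balanced skew partition. In each case the strategy is to show that such a separation would let one glue perfect colourings and clique covers of the two sides into valid ones for $G$, contradicting minimal imperfection. The skew-partition case is the more delicate, resting on Lov\'asz's structural facts about minimally imperfect graphs together with the Roussel--Rubio lemma on how induced paths must cross a balanced skew cutset; this is precisely the content of Chv\'atal's skew-partition conjecture.

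The hardest part, by an enormous margin, is the decomposition theorem itself: showing that \emph{every} Berge graph is basic or decomposes. This is where essentially all the difficulty of the theorem resides, and I would expect to spend the overwhelming majority of the effort here. The proof proceeds by a long case analysis organised around whether $G$ contains certain substructures that are Berge but behave like near-obstructions — a \emph{prism} (two triangles joined by three vertex-disjoint paths), a \emph{pyramid}, a proper \emph{wheel}, or a long odd hole — and, in each case, extracting one of the decompositions from the way such a structure is forced to embed. By comparison, the perfection of the basic graphs and the impossibility of $2$-joins in minimal imperfect graphs are comparatively standard, so the decomposition theorem and the ruling out of balanced skew partitions are the two genuinely hard components of the argument.
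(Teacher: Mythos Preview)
The paper does not give its own proof of this statement: Theorem~\ref{t-spgt} is simply quoted from \cite{CRST06} and then used as a black box (for instance in the proof of Theorem~\ref{t-perfect}). There is therefore nothing in the paper to compare your proposal against.

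That said, your sketch is a fair high-level outline of the Chudnovsky--Robertson--Seymour--Thomas argument: the forward direction is indeed routine, and for the converse you correctly identify the decomposition theorem (basic classes versus balanced skew partition, $2$-join, or complement $2$-join) together with the exclusion of each decomposition in a minimally imperfect graph as the skeleton of the proof. One small slip: in your list of the organising substructures you mention ``a long odd hole'', but a Berge graph by definition contains no odd hole, so this cannot be one of the cases; the actual case analysis in \cite{CRST06} is organised around the presence of configurations such as line graphs of bipartite subdivisions of $K_4$, long even prisms, double split graphs, and so on. This does not affect the overall shape of what you wrote, but it is worth correcting if you intend to expand the sketch.
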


\section{Cobipartite Graphs, Bipartite Graphs and Trees}\label{sec:cob}

We first consider the contraction blocker problems and then the deletion blocker problems.

\subsection{Contraction Blockers}

Our first result is a hardness result for cobipartite graphs that follows directly from a known result.

\begin{theorem}\label{t-firstco}
{\sc $1$-Contraction Blocker($\alpha$)} is \NP-complete for cobipartite graphs.
\end{theorem}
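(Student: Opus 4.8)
The plan is to reduce from a known \NP-complete problem on bipartite graphs via complementation. The key observation is that edge contraction does not behave as cleanly under complementation as vertex deletion does, so I would instead look for a problem on bipartite graphs whose complement-side behaviour matches ``reduce $\alpha$ by one via one contraction''. Since $\alpha(G) = \omega(\overline G)$, and contracting an edge $uv$ in $G$ corresponds in $\overline G$ to identifying two non-adjacent vertices $u,v$ and keeping a common vertex adjacent to the merged vertex iff it was adjacent to \emph{both} $u$ and $v$, the natural dual statement is: in a bipartite graph $B=\overline G$, can we identify some pair of non-adjacent vertices so that the clique number drops by one? The cleanest route, though, is to recall that {\sc Hadwiger Number} is a special case of {\sc Contraction Blocker($\alpha$)} (as the paper notes in Section~\ref{s-known}): an instance $(G,d,k)$ with $d=\alpha(G)-1$, $k=|V(G)|-r$ asks whether $G$ contracts to $K_r$. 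So I would look for the analogous known-hard ``contract to a small fixed target'' or ``reduce independence number by one'' statement restricted to cobipartite graphs.

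Concretely, first I would reduce from {\sc Independent Set} (equivalently {\sc Vertex Cover} via Lemma~\ref{l-po2}, or {\sc Clique}) on an appropriate graph class. Given a graph $F$ on $n$ vertices for which {\sc Clique} (with target size $t$) is \NP-complete — taking $F$ to be $C_3$-free keeps things simple but any class works — build the cobipartite graph $G$ whose vertex set is $V(F) \cup V'$ where $V'$ is a clique of suitable size, $V(F)$ induces the complement $\overline F$ (so $V(F)$ is cobipartite-compatible), and add a carefully chosen bipartite-complement structure of edges between $V(F)$ and $V'$ so that (a) $\alpha(G)$ is a known value determined by the construction, and (b) a single edge contraction can reduce $\alpha(G)$ by one precisely when $F$ has the desired structure. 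The second step is to verify the two directions: a yes-instance of the source problem yields an edge of $G$ whose contraction kills every maximum independent set of $G$; conversely, if contracting some edge $uv$ drops $\alpha$ by one, then every maximum independent set of $G$ contains $u$ or $v$ (this is the standard characterisation of $1$-{\sc Contraction Blocker($\alpha$)}: the contracted vertex $w$ can replace at most one of $u,v$ in any independent set, and an independent set avoiding both survives), and this combinatorial condition forces a clique/independent set of the required size in $F$. Membership in \NP\ is immediate since cobipartite graphs are closed under neither operation in general, but one can still take the single contracted edge plus a maximum-independent-set certificate (computable in polynomial time for perfect graphs, hence a fortiori here) — actually the simplest argument is that the problem for $d=1$ just asks for one edge, so we guess it and then compute $\alpha$ of $G$ and of $G\vert uv$ in polynomial time, both graphs being perfect.

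The main obstacle I anticipate is engineering the gadget so that exactly one contraction suffices, and no more: one must ensure that no ``accidental'' edge of $G$ has a contraction that lowers $\alpha$, so that the only useful contractions are the intended ones encoding the source instance. A clean way around this is to route the reduction through a problem where the ``block a maximum independent set with two vertices'' formulation is already known hard — for instance the known hardness of deciding whether a graph has a vertex in every maximum independent set (the core-size question of Boros, Golumbic and Levit cited in the excerpt) or the {\sc $1$-Transversal}/{\sc $d$-Transversal} machinery of Costa, de Werra and Picouleau — restricted to or transformable into cobipartite graphs, and then observe that contracting an edge $uv$ is equivalent, for the purpose of reducing $\alpha$, to ``forbidding independent sets that avoid both $u$ and $v$''. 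In short: the result \emph{follows directly from a known result} (as the theorem statement advertises), so the real content of the proof is identifying the right known hardness result — most plausibly a transversal/core result for a class whose members are cobipartite, or the complement of a {\sc Vertex Cover}-type statement — and checking that the trivial translation goes through, including the observation that one edge contraction changes $\alpha$ by at most one and the membership-in-\NP\ argument above.
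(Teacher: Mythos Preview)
Your proposal circles around several possible reductions but never lands on the one the paper actually uses, which is far simpler than anything you sketch. The paper's entire proof is a single observation: Golovach, Heggernes, van 't Hof and Paul~\cite{GHHP14} proved that {\sc $1$-Club Contraction} --- deciding whether a graph can be $k$-contracted to a graph of diameter at most~$1$ --- is \NP-complete already for cobipartite graphs. A graph has diameter~$1$ if and only if it is complete, i.e., has independence number~$1$; and a non-complete cobipartite graph has independence number exactly~$2$. Hence on cobipartite inputs, {\sc $1$-Club Contraction} and {\sc $1$-Contraction Blocker($\alpha$)} are literally the same problem, and the theorem is immediate.

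Your candidate ``known results'' ({\sc Hadwiger Number}, the core-size hardness of Boros--Golumbic--Levit, the $d$-transversal framework) are not known to be hard when restricted to cobipartite graphs, so routing through any of them would require fresh gadgetry --- which is exactly what you then attempt with the {\sc Clique}/{\sc Vertex Cover} construction, but never carry out. You yourself identify the main obstacle (ruling out accidental $\alpha$-reducing contractions) and leave it unresolved. So as it stands the proposal is a survey of plausible directions rather than a proof.

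Two technical slips are worth flagging. First, cobipartite graphs \emph{are} closed under edge contraction (contracting an edge inside one of the two cliques shrinks that clique; contracting a cross edge produces a universal vertex), contrary to your claim; this is what makes membership in \NP\ immediate via the standard certificate of at most~$k$ contractions followed by computing~$\alpha$ on the resulting (still cobipartite, hence perfect) graph. Second, you conflate $d=1$ with ``guess one edge'': in {\sc $1$-Contraction Blocker($\alpha$)} the $1$ is the required drop in~$\alpha$, while the number~$k$ of allowed contractions remains part of the input and may be arbitrarily large.
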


\begin{proof}
Golovach, Heggernes, van 't Hof and Paul~\cite{GHHP14} considered the {\sc $s$-Club Contraction} problem.
Recall that this problem takes as input a graph~$G$ and an integer $k$ and asks whether $G$ can be $k$-contracted into a graph with diameter at most~$s$ for some fixed integer~ $s$. They showed that {\sc $1$-Club Contraction} is \NP-complete even for cobipartite graphs. Graphs of diameter~1 are complete graphs, that is, graphs with independence number~1, whereas cobipartite graphs that are not complete have independence number~2. \qed
\end{proof}

We now focus on $\pi=\chi$ and $\pi=\omega$.
For our next result (Theorem~\ref{t-cohard}) we need some additional terminology.
A {\it biclique} is a complete bipartite graph,  which is {\it nontrivial} if it has at least one edge. 
A {\it biclique vertex-partition} of a graph $G=(V,E)$ is a set ${\mathcal S}$ of mutually vertex-disjoint bicliques in $G$ such that  every vertex of $G$ is contained in one  of the bicliques of ${\cal S}$. The {\sc Biclique Vertex-Partition} problem consists in testing whether a given graph $G$ has a biclique vertex-partition of size at most~$k$, for some positive integer $k$. Fleischner et al.~\cite{FMPS09} showed that this problem is \NP-complete even for bipartite graphs and $k=3$.

We are now ready to prove Theorem~\ref{t-cohard}.

\begin{theorem}\label{t-cohard}
For $\pi\in \{\chi,\omega\}$, {\sc Contraction Blocker($\pi$)} is \NP-complete for cobipartite graphs.
\end{theorem}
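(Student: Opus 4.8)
The statement says that for $\pi\in\{\chi,\omega\}$, {\sc Contraction Blocker($\pi$)} is \NP-complete on cobipartite graphs. Membership in \NP\ is immediate: cobipartite graphs are closed under edge contraction (contracting an edge in $\overline{G}$ for bipartite $G$ keeps the complement bipartite, since identifying two vertices on the same side of a bipartition preserves bipartiteness, and any edge $uv$ of $\overline G$ corresponds to a non-edge of $G$, hence $u,v$ lie on opposite sides... one has to be slightly careful, but in any case cobipartite graphs have at most two maximal independent sets, so closure under contraction is easy to check directly), and $\omega=\chi$ is computable in polynomial time for perfect graphs by~\cite{GLS84}, so the contraction sequence is a polynomial certificate. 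The bulk of the work is the hardness reduction, and the hint in the excerpt is the {\sc Biclique Vertex-Partition} problem of Fleischner et al.~\cite{FMPS09}, which is \NP-complete for bipartite graphs with $k=3$.

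The plan is to reduce from {\sc Biclique Vertex-Partition} on bipartite graphs. Given a bipartite graph $B$, let $G=\overline{B}$ be the corresponding cobipartite graph; then a clique of $G$ is an independent set of $B$, and, crucially, a \emph{biclique} in $B$ (a complete bipartite subgraph, with its two sides $X,Y$) corresponds in $G=\overline B$ to a set $X\cup Y$ where $X$ and $Y$ are each cliques of $G$ and there are \emph{no} edges of $G$ between $X$ and $Y$; equivalently, $X\cup Y$ induces in $G$ the disjoint union of two cliques. The key observation I would exploit is that contracting an edge of $G$ whose endpoints lie in the two different cliques $K_1,K_2$ of the cobipartite structure corresponds to ``gluing'' vertices, and a single such contraction, when iterated, lets one merge a biclique of $B$ down to a structure of controlled clique number. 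More directly: in $\overline B$, the clique number is $\omega(\overline B)=\alpha(B)$, and I want to argue that $\overline B$ can be $k'$-contracted so that the clique number drops by $d$ if and only if $B$ has a biclique vertex-partition of bounded size. The mechanism is that contracting a non-edge of $B$ (an edge of $\overline B$) between $X$ and $Y$ of a biclique merges two non-adjacent vertices of $\overline B$ without creating large cliques, and systematically contracting a biclique of $B$ with parts $X,Y$ (say $|X|\le|Y|$) onto a single side costs $|X|$ contractions and collapses that biclique. One then sets up the arithmetic so that partitioning $V(B)$ into few bicliques is equivalent to reducing $\omega(\overline B)$ below its original value using few contractions.

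The main technical step — and the expected main obstacle — is getting the clique-number bookkeeping exactly right: I need that an optimal contraction strategy on $\overline B$ can be assumed to only contract edges ``along bicliques'' (i.e. never contract an edge within one of the two cliques $K_1=\overline{(\text{side }1)}$, $K_2$, since such a contraction never helps reduce $\omega$ and only wastes a budget unit), and that after contracting a biclique with parts of sizes $a\le b$ using $a$ contractions, the contribution of those $a+b$ vertices to any clique of the resulting graph is exactly $b$ (one merged ``column'' plus the untouched side). Then if $B$ has bipartition $(A_1,A_2)$ with $|A_1|=|A_2|=n$ (one can pad to make the instance balanced and symmetric, which also handles the $\pi=\omega$ versus $\pi=\chi$ dichotomy uniformly via perfection), an optimal scheme contracts each biclique $S_i$ with sides $X_i\subseteq A_1$, $Y_i\subseteq A_2$ down, and the resulting clique number on the $A_1$-side is $\sum_i \max(|X_i|,|Y_i|)$...

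Actually, rather than chase the arithmetic further here, the cleanest route I would take is: reduce to the version with $k=3$ bicliques, where the structure is rigid enough that one can hand-tune a gadget. Fix the {\sc Biclique Vertex-Partition} instance $(B,3)$; construct a cobipartite graph $G$ by taking $\overline B$ and attaching a carefully sized gadget (e.g. a few disjoint cliques joined appropriately to one side, still keeping the whole complement bipartite) so that $\omega(G)$ has a known value $N$, and so that $G$ can be $k^\star$-contracted with $\omega$ dropping by $d^\star$ (for explicitly chosen $k^\star, d^\star$ depending on $|V(B)|$) precisely when the three bicliques of a partition can be collapsed within budget; the gadget's role is to force that the only profitable contractions are the biclique-collapsing ones and to fix the threshold exactly at the value achievable iff a size-$3$ biclique partition exists. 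I would verify correctness by the two standard directions: (forward) a biclique vertex-partition of $B$ into $3$ bicliques yields an explicit contraction sequence meeting the budget; (backward) any contraction sequence meeting the budget can be normalized — using an exchange argument that pushes contractions to be ``within bicliques induced by the final clique structure'' — into one that exhibits a partition of $V(B)$ into $3$ bicliques. The same construction works verbatim for $\pi=\chi$ because $G$ is perfect and cobipartite (hence remains perfect under contraction, as it stays cobipartite), so $\chi(G')=\omega(G')$ throughout; this is why the theorem can assert both $\pi=\chi$ and $\pi=\omega$ at once.
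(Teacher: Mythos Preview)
Your proposal identifies the right reduction source ({\sc Biclique Vertex-Partition} with $k=3$, from~\cite{FMPS09}) and the right correspondence (a biclique in a bipartite $B$ with parts $X,Y$ is exactly a pair of cliques $X,Y$ in $\overline{B}$ with no edges between them). You also correctly observe that $\chi=\omega$ throughout because cobipartite graphs are perfect and closed under contraction, so the two cases collapse.

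However, the proposal never becomes a proof. You propose to ``attach a carefully sized gadget'' to $\overline{B}$ with ``explicitly chosen $k^\star,d^\star$'', but you never say what the gadget is or what $k^\star,d^\star$ are; the backward direction is only named (``exchange argument'') rather than carried out. The arithmetic you were developing before abandoning it (collapsing each biclique onto one side at cost $\min(|X_i|,|Y_i|)$, tracking $\sum_i\max(|X_i|,|Y_i|)$) does not lead to a clean threshold that detects a size-$3$ partition. There is also a slip in the \NP-membership paragraph: an edge of $\overline{G}$ need not join vertices on opposite sides of the bipartition of $G$; each side of $G$ is a clique in $\overline{G}$, and those intra-side edges are in fact the ones you will want to contract.

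The paper's proof is considerably simpler and needs no gadget at all. Take $G=\overline{B}$ directly, set $k=n-6$ and $d=\chi(\overline{B})-3$, i.e.\ ask whether $\overline{B}$ can be $(n-6)$-contracted to a graph with $\chi\le 3$. Forward: a partition of $B$ into three nontrivial bicliques yields six cliques $A_1,B_1,A_2,B_2,A_3,B_3$ in $\overline{B}$ with $A_i$ anticomplete to $B_i$; contract each of the six cliques to a single vertex (exactly $n-6$ contractions) and colour $\{a_i,b_i\}$ with colour~$i$. Backward: the result $G'$ is still cobipartite, so $\alpha(G')\le 2$; together with $\chi(G')\le 3$ this forces $|V(G')|\le 6$, while $n-6$ contractions leave $\ge 6$ vertices, so $|V(G')|=6$ and each colour class has size exactly~$2$. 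One then argues that no cross-side edge was contracted (such a contraction creates a universal vertex, incompatible with a $3$-colouring on six vertices in which every colour class has two non-adjacent vertices), so each of the six final vertices arises from contracting a clique contained in one side of $B$; the three colour classes $\{a_i,b_i\}$ then give three bicliques partitioning $V(B)$. The counting $6\le |V(G')|\le 2\chi(G')\le 6$ is the crux you were missing: it replaces your unspecified gadget and makes both directions short.
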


\begin{proof}
Since cobipartite graphs are perfect and closed under edge contractions, we may assume without loss of generality that $\pi=\chi$.
The problem is in \NP, as {\sc Coloring} is polynomial-time solvable on cobipartite graphs and then we can take the sequence of edge contractions as certificate.
We reduce from  {\sc Biclique Vertex-Partition}. Recall that this problem is \NP-complete even for bipartite graphs and $k=3$~\cite{FMPS09}.  As the problem is polynomial-time solvable for bipartite graphs and $k=2$ (see~\cite{FMPS09}), we may ask for a biclique vertex-partition of size exactly~3, in which each biclique is nontrivial.

Let ($G,3$) be an instance of {\sc Biclique Vertex-Partition}, where $G$ is a connected bipartite graph on $n$ vertices that has partition classes $X$ and $Y$. We claim that $G$ has a biclique vertex-partition consisting of three non-trivial bicliques if and only if $\overline{G}$ can be $(n-6)$-contracted into a graph $G'$ with $\chi(G')\leq 3$ (so $d=\chi(\overline G)-3$).

First suppose that $G$ has a biclique vertex-partition ${\cal S}$ of size 3. Let $S_1,S_2,S_3$ be the three (nontrivial) bicliques in ${\cal S}$. Let $A_i,B_i$ be the two bipartition classes of $S_i$ for $i=1,2,3$. So, in $\overline{G}$, we have that $A_1,A_2,A_3$, $B_1,B_2,B_3$ are six cliques that partition the vertices of $\overline{G}$, and moreover, there is no edge between a vertex of $A_i$ and  a vertex of $B_i$, for $i=1,2,3$. In $\overline{G}$ we contract each clique $A_i$ to a single vertex that we give colour~$i$, and we contract each clique $B_i$ to a single vertex that we give colour~$i$ as well. In this way we have obtained a 6-vertex graph $G'$ (so the number of contractions is $n-6$) with a 3-colouring. Thus, $\chi(G')\leq 3$.

Now suppose that $\overline{G}$ can be $(n-6)$-contracted into a graph $G'$ with $\chi(G')\leq 3$.  We first observe that the class of cobipartite graphs is closed under taking edge contractions; indeed, if $e$ is an edge connecting two vertices of the same partition class, then contracting $e$ results in a smaller clique, and if $e$ is an edge connecting two vertices of two different partition classes, then contracting $e$ is equivalent to removing one of its end-vertices and making its other end-vertex adjacent to every other vertex in the resulting graph.
 
As the class of cobipartite graphs is closed under taking contractions, $G'$ is cobipartite. As cobipartite graphs have independence number at most~2, each colour class in a colouring of $G'$ must have size at most~2. Consequently, $G'$ must have exactly six vertices $a_1,a_2,a_3$, $b_1,b_2,b_3$ such that $a_1,a_2,a_3$ form a clique, $b_1,b_2,b_3$ form a clique, and moreover,  $a_i$ and $b_i$ are not adjacent, for $i=1,2,3$. This means that we did not contract an edge $uv$ with $u\in X$ and $v\in Y$ (as the resulting vertex would be adjacent to all other vertices).  Hence, we may assume without loss of generality that for $i=1,2,3$, each $a_i$ corresponds to a set of vertices $A_i\subset X$ (that we contracted into the single vertex $a_i$) and that  each $b_i$ corresponds to a set of vertices $B_i\subset Y$ (that we contracted into the single vertex $b_i$). As each pair $a_i,b_i$ is non-adjacent,  no vertex of $A_i$ is adjacent to a vertex of $B_i$. Consequently, in $G$, we find that each set $A_i\cup B_i$ induces a biclique. Hence, the sets $A_1\cup B_1$, $A_2\cup B_2$ and $A_3\cup B_3$ form a biclique vertex-partition of $G$ that has size~3.\qed
\end{proof}

We now assume that $d$ is fixed. We show that  {\sc $d$-Contraction Blocker($\pi$)} becomes polynomial-time solvable on cobipartite graphs for $\pi\in \{\chi,\omega\}$. For $\pi=\chi$, we can prove this even for the class of graphs with independence number at most~2, or equivalently, the class of $3P_1$-free graphs, which properly contains the class of cobipartite graphs.

\begin{theorem}\label{t-3p1}
For any fixed $d\geq 0$, the {\sc $d$-Contraction Blocker($\chi$)} problem can be solved in polynomial time for $3P_1$-free graphs.
\end{theorem}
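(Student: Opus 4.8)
The plan is to exploit the fact that a $3P_1$-free graph $G$ has independence number at most $2$, so in any proper colouring every colour class has size at most $2$. Hence $\chi(G)$ is determined by a minimum partition of $V(G)$ into cliques and edges of $\overline G$; equivalently, $\chi(G) = n - \mu(\overline G)$, where $n=|V(G)|$ and $\mu$ denotes the matching number, since a matching of $\overline G$ tells us exactly which pairs of vertices can share a colour. This identity is the engine of the whole argument, and $\mu(\overline G)$ is computable in polynomial time.

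The key observation is that contracting an edge $uv$ of a $3P_1$-free graph $G$ keeps the graph $3P_1$-free (this class is closed under edge contraction: identifying two vertices cannot create a new independent triple), and it decreases $n$ by exactly $1$. So after $k$ contractions we obtain a $3P_1$-free graph $G'$ on $n-k$ vertices, and the question ``can $\chi$ be reduced by at least $d$'' becomes ``can we reach $G'$ with $n - k - \mu(\overline{G'}) \le \chi(G) - d$''. Rearranging, and writing $\chi(G) = n - \mu(\overline G)$, the target is to make at most $k$ contractions producing a $G'$ with
\[
\mu(\overline{G'}) \ge \mu(\overline{G}) + d - k .
\]
Since each contraction lowers the number of vertices by $1$, and $\mu$ of any graph on $n-k$ vertices is at most $\lfloor (n-k)/2\rfloor$, for $d$ fixed we only ever need to care about a bounded-size window: the useful regime is $k \le$ roughly $2d$ (if $k$ is large we can just contract down to a single vertex and check directly, and if $\mu(\overline{G})+d-k$ is at most the current matching number we are already done with zero further contractions). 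More precisely, I would argue that we never need more than $f(d)$ contractions for some function $f$, because once $\chi$ has dropped by $d$ no further contractions help, and contracting an edge changes $\chi$ by at most... — here one must be slightly careful, since a single contraction of $G$ can in principle drop $\chi$ by more than $1$ when $\alpha = 2$ — but it can drop it by at most a constant amount per contraction in the sense needed, or more simply: if the instance $(G,k)$ is a yes-instance, there is a witnessing sequence, and we may always truncate it to the first moment the chromatic number has dropped by $d$, which after at most $\ldots$ contractions; bounding this truncation length by a function of $d$ alone is the crux.

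Granting the bound $k \le f(d)$, the algorithm is brute force over which vertices get merged: a sequence of at most $f(d)$ contractions merges the vertex set into at most $f(d)$ ``super-vertices'' together with untouched singletons, so it is specified by choosing $O(f(d))$ vertices of $G$ and a partition of them into connected groups — only $n^{O(f(d))}$ possibilities, each of which can be realised by contractions iff each group induces a connected subgraph of $G$. For each candidate $G'$ we compute $\chi(G') = |V(G')| - \mu(\overline{G'})$ in polynomial time and test the threshold. This runs in time $n^{O(f(d))}$, i.e. polynomial for fixed $d$. \textbf{The main obstacle} I expect is making the reduction to bounded $k$ fully rigorous: specifically, showing that an optimal contraction sequence can be assumed to have length bounded by a function of $d$ only, which requires controlling how much $\chi$ can fall per contraction in a graph with $\alpha \le 2$ — one wants something like ``$\chi(G|uv) \ge \chi(G) - 2$'' or an averaging argument over the whole sequence via the matching identity above, and the matching reformulation is precisely what should make this clean, since $\mu(\overline{G'}) \le \mu(\overline G) + (\text{number of contractions})$ fails in general but a correct monotone bound of the form $\mu$ cannot grow by more than the number of deleted vertices does hold and yields exactly the needed $k = O(d)$ cap.
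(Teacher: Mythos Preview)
Your overall architecture---bound the number of required contractions by some $f(d)$ and then brute-force over all short contraction sequences---is exactly what the paper does. The gap is in the step you yourself flag as the main obstacle: you do not establish the bound, and the route you sketch through the identity $\chi(G)=n-\mu(\overline G)$ actually points the wrong way. A short computation shows that for any edge $uv$ of a $3P_1$-free graph $G$ one has $\mu(\overline{G|uv})\le \mu(\overline G)$ (any matching of $\overline{G|uv}$ can be lifted to one of $\overline G$ of the same size), and hence $\chi(G)-\chi(G|uv)\le 1$. So a contraction lowers $\chi$ by at most one, which yields the \emph{lower} bound $k^*\ge d$ on the optimum, not the upper bound you need. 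Your claim that ``a correct monotone bound of the form $\mu$ cannot grow by more than the number of deleted vertices\ldots\ yields the needed $k=O(d)$ cap'' is thus mistaken: that inequality gives no control on how \emph{few} contractions suffice.

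The paper gets the missing upper bound by a direct structural observation you are overlooking. Since $\alpha(G)\le 2$, any two colour classes of an optimal colouring together span at most four vertices; the subgraph they induce therefore has a spanning forest with at most three edges, and contracting those (at most three) edges collapses the two classes into a single independent set, i.e.\ a single new colour class. Repeating this $d$ times shows that $3d$ contractions always suffice to drop $\chi$ by $d$ whenever this is feasible at all. Hence one may assume $k\le 3d$, enumerate the at most $m^{3d}$ candidate sequences, and test each using the polynomial-time computability of $\chi$ on $3P_1$-free graphs (your matching identity, incidentally, gives a clean way to compute $\chi$ here). Once you replace your incomplete matching argument for the $k$-bound with this two-colour-class merging trick, the rest of your write-up goes through.
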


\begin{proof}
Let $G$ be a graph with $\alpha(G)\leq 2$. Consider a colouring with $\chi(G)$ colours. The size of every colour class is at most $2$. Hence every subgraph of $G$ induced by two colour classes has at most $4$ vertices, and as such has a spanning forest with in total at most $3$ edges.
This means that we can contract two colour classes to an independent set (that is, to a new colour class) by using at most $3$ contractions. This observation gives us the following algorithm.
We guess a set of at most $3$ contractions. Afterward we decrease $d$ by~1 and repeat this procedure until $d=0$. For each resulting graph $G'$ we check whether $\chi(G')\leq \chi(G)-d$. If so, then the algorithm returns a yes-answer and otherwise a no-answer.

Let $m$ be the number of edges of $G$. Then the total number of guesses is at most $m^{3d}$, which is polynomial as $d$ is fixed. Because {\sc Coloring} is polynomial-time solvable on graphs with independence number at most~2  and this class is closed under edge contractions, our algorithm runs in polynomial time. \qed
\end{proof}

\begin{corollary}
Let $\pi\in \{\chi,\omega\}$. For any fixed $d\geq 0$, the {\sc $d$-Contraction Blocker($\pi$)} problem can be solved in polynomial time on cobipartite graphs.
\end{corollary}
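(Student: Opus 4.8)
The plan is to derive the corollary directly from Theorem~\ref{t-3p1} together with two elementary facts about cobipartite graphs that have already been established in the excerpt. First I would observe that a cobipartite graph $G$ has independence number $\alpha(G)\le 2$: its vertex set partitions into two cliques, so any independent set contains at most one vertex from each. Hence every cobipartite graph is $3P_1$-free, and Theorem~\ref{t-3p1} immediately gives a polynomial-time algorithm for {\sc $d$-Contraction Blocker($\chi$)} on cobipartite graphs for every fixed $d\ge 0$. This disposes of the case $\pi=\chi$.

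For the case $\pi=\omega$ the key point is that on cobipartite graphs the parameters $\omega$ and $\chi$ coincide, and this persists under edge contraction. Concretely, cobipartite graphs are perfect (as noted in the preliminaries), so $\omega(G)=\chi(G)$, and in the proof of Theorem~\ref{t-cohard} it was shown that the class of cobipartite graphs is closed under edge contraction; therefore any graph $G'$ obtained from a cobipartite graph $G$ by a sequence of edge contractions is again cobipartite, hence perfect, so $\omega(G')=\chi(G')$. Consequently, for a cobipartite input $G$ and an integer $k$, the pair $(G,k)$ is a yes-instance of {\sc $d$-Contraction Blocker($\omega$)} if and only if it is a yes-instance of {\sc $d$-Contraction Blocker($\chi$)}, and the algorithm from the previous paragraph applies verbatim.

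Since both reductions are immediate and invoke only results already proven, there is essentially no technical obstacle here; the only thing to be careful about is to confirm that the equality $\omega=\chi$ is used along the whole contraction sequence, which is exactly what closure under contraction of the cobipartite class guarantees. I would present the argument as the two short observations above and conclude by citing Theorem~\ref{t-3p1}.
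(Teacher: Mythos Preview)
Your proposal is correct and mirrors the paper's own proof: for $\pi=\chi$ it invokes Theorem~\ref{t-3p1} via the observation that cobipartite graphs are $3P_1$-free, and for $\pi=\omega$ it uses that cobipartite graphs are perfect and closed under edge contraction so that the $\omega$ and $\chi$ versions coincide. The only difference is that you spell out a bit more explicitly why closure under contraction is needed to identify the two problems, which is fine.
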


\begin{proof}
For $\pi=\chi$ this follows immediately from 
Theorem~\ref{t-3p1}.  As cobipartite graphs are perfect and closed under edge contraction, we obtain the same result for
$\pi=\omega$.\qed
\end{proof}

We now consider the class of bipartite graphs.
If $\pi\in \{\chi,\omega\}$, then {\sc Contraction Blocker($\pi$)} is trivial for bipartite graphs (and thus also for trees).
To the contrary, for $\pi=\alpha$, we will show 
that {\sc Contraction Blocker($\pi$)} is \NP-hard for bipartite graphs.
The complexity of {\sc $d$-Contraction Blocker($\alpha$)} remains open for bipartite graphs.
Bipartite graphs are not closed under edge contraction. Therefore membership to \NP\ cannot be established by taking a sequence of edge contractions as the certificate, even though due to K\"onig's Theorem (see, for example, \cite{Di05}), {\sc Independent Set} is
 polynomial-time solvable for bipartite graphs.

\begin{theorem}
\label{thm:bipartite}
{\sc Contraction Blocker($\alpha$)} is \NP-hard on bipartite graphs. 
\end{theorem}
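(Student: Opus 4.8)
The plan is to reduce from {\sc Clique}, which is \NP-complete even on connected graphs. Given a connected graph $H$ on vertex set $\{1,\dots,n\}$ and an integer $r<n$, I would build a bipartite graph $G$ with parts $A=\{a_1,\dots,a_n\}$ and $B=\{b_1,\dots,b_n\}$ by putting $a_ib_j\in E(G)$ if and only if $i=j$ or $ij\in E(H)$. Then $G$ contains the perfect matching $M=\{a_ib_i:i\in[n]\}$, so K\"onig's theorem gives $\alpha(G)=2n-\mu(G)=n$, and for every clique $K$ of $H$ the set $\{a_i,b_i:i\in K\}$ induces a complete bipartite graph $K_{|K|,|K|}$ in $G$. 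One then sets a threshold $d$ and a budget $k$ as functions of $r$, chosen so that $k$ is forced close to $d$, i.e.\ so that every contraction used in a solution must be ``maximally efficient'' (recall each contraction lowers $\alpha$ by at most one); pinning down the exact values, and if necessary attaching auxiliary gadgets to $G$, is part of the work.

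For the forward direction, suppose $H$ has a clique $K$ with $|K|=r$. Since $H$ is connected and $r<n$, the clique $K$ has a neighbour outside $K$. Contracting the induced $K_{r,r}$ on $\{a_i,b_i:i\in K\}$ to a single vertex $z$ uses at most $2r-1$ contractions, and a short computation (again via K\"onig's theorem, now applied to the copy of $G$ living on $\{a_j,b_j:j\notin K\}$) shows that the resulting graph $G'$ satisfies $\alpha(G')=\max\!\big(n-r,\;1+n-|N_H[K]|\big)=n-r$, where $N_H[K]$ denotes $K$ together with its neighbourhood in $H$; the second term drops out because $|N_H[K]|\ge r+1$. Hence $\alpha(G')=\alpha(G)-r$, so the instance is a yes-instance for $d=r$ and any $k\ge 2r-1$.

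The converse direction is the main obstacle. Given a contraction sequence of length at most $k$ producing a graph $G'$ with $\alpha(G')\le\alpha(G)-d$, one must recover a clique of size at least $r$ in $H$. I would view $G'$ as the quotient of $G$ by a partition of $V(G)$ into connected parts with $\sum_P(|P|-1)\le k$, and exploit two structural facts: every part $P$ with $P\cap A\neq\emptyset$ that is not a one-element subset of $A$ contains a vertex of $B$ and hence contributes at least $|P\cap A|$ to the budget, so at least $n-k$ vertices of $A$ remain as singleton parts and any two such singletons are non-adjacent in $G'$, whence $\alpha(G')\ge n-k$; and the indices absorbed into one part are constrained by the fact that $a_i$ and $b_j$ with $i\ne j$ are adjacent only when $ij\in E(H)$. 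The delicate point — and where I expect additional gadgets and a tight choice of $k$ relative to $d$ to be needed — is to rule out that $\alpha(G)$ can be reduced by $d$ using contractions ``spread out'' over $G$ rather than concentrated on the biclique arising from a single clique of $H$; once this is settled, the parts doing the work must collapse an $H$-clique of size at least $r$. Combining both directions yields the reduction, and hence \NP-hardness. We do not claim membership in \NP, as bipartite graphs are not closed under edge contraction, as already remarked.
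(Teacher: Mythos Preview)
Your converse direction is not a proof, and the gap you acknowledge is fatal for the parameters you actually use. With $d=r$ and $k=2r-1$ the reduction is simply incorrect. Take $H$ to be a path on $n$ vertices: it is connected and has no triangle. In your bipartite $G$, contracting $a_1b_1$ drops $\alpha$ from $n$ to $n-1$ (the merged vertex is adjacent only to $a_2,b_2$, and a short case analysis gives the upper bound), and contracting $a_3b_3,a_5b_5,\ldots$ in succession keeps dropping $\alpha$ by one each time. Thus $r\le 2r-1$ contractions reduce $\alpha$ by $r$, yet $H$ has no $K_r$ for any $r\ge 3$. Your own inequality $\alpha(G')\ge n-k$ yields only $k\ge d$, which leaves far too much slack to force a clique structure; and tightening to $k=d=r$ breaks the forward direction, since contracting just the $r$ matching edges $a_ib_i$ ($i\in K$) leaves $\{c_i\}\cup\{a_j:j\notin K\}$ potentially independent and only gives $\alpha\le n-r+1$. ``Attach auxiliary gadgets and pick $k$ tightly'' is not an argument but a restatement of the difficulty, and there is no evidence it can be resolved along these lines.

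The paper avoids all of this by reducing not from {\sc Clique} but from {\sc $1$-Contraction Blocker($\alpha$)} on cobipartite graphs, which Theorem~\ref{t-firstco} already shows to be \NP-complete. Given a cobipartite $G$ with $m$ edges and budget $k$, subdivide every edge once to obtain a bipartite $G'$, and ask whether $G'$ can be $(k+m)$-contracted to a graph with $\alpha\le 1$, i.e.\ to a complete graph. The forward direction just undoes the $m$ subdivisions and then applies the original $k$ contractions. For the converse, targeting $\alpha=1$ is the whole point: once the resulting complete graph has at least four vertices (arranged by padding $G$ with a few dominating vertices), every degree-$2$ subdivision vertex must be contracted away, costing exactly $m$ contractions and leaving at most $k$ for $G$ itself. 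Aiming for the extremal value $\alpha=1$ kills precisely the ``spread-out'' freedom that your construction cannot control.
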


\begin{proof}
We know from Theorem \ref{t-firstco} that {\sc $1$-Contraction Blocker($\alpha$)} is \NP-complete on cobipartite graphs. Consider a cobipartite graph $G$ with $m$ edges and an integer $k$, which together form an instance of {\sc $1$-Contraction Blocker($\alpha$)}. Subdivide each of the $m$ edges of $G$ in order to obtain a bipartite graph $G'$. We claim that $(G,k)$ is a yes-instance of {\sc 1-Contraction Blocker($\alpha$)} if and only if $(G',\alpha(G')-1,k+m)$ is a yes-instance of {\sc Contraction Blocker($\alpha$)}. 

First suppose that $(G,k)$ is a yes-instance of {\sc 1-Contraction Blocker($\alpha$)}. In $G'$ we first perform $m$ edge contractions to get $G$ back.
We then perform $k$ edge contractions to get independence number $1=\alpha(G')-(\alpha(G')-1)$. Hence, $(G',\alpha(G')-1,k+m)$ is a yes-instance of 
{\sc Contraction Blocker($\alpha$)}. 

Now suppose that $(G',\alpha(G')-1,k+m)$ is a yes-instance of {\sc Contraction Blocker($\alpha$)}. Then there exists a sequence of $k+m$ edge contractions that transform $G'$ into a complete graph $K$. We may assume that 
$K$ has size at least~4 (as we could have added without loss of generality three dominating vertices to $G$ without increasing $k$). As $K$ has size at least~4, each subdivided edge must be contracted back to the original edge again. This operation costs $m$ edge contractions, so we contract $G$ to $K$ using at most $k$ edge operations. Hence, $(G,k)$ is a yes-instance of {\sc 1-Contraction Blocker($\alpha$)}. This proves the claim and hence the theorem.\qed
\end{proof}

We complement Theorem~\ref{thm:bipartite} by showing that {\sc Contraction Blocker($\alpha$)} is linear-time solvable on trees. In order to prove this result
we make a connection to the matching number~$\mu$ of a graph. 

\begin{theorem}
\label{thm:trees}
{\sc Contraction Blocker($\alpha$)} is linear-time solvable on trees. 
\end{theorem}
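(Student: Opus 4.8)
The plan is to work with the matching number $\mu$ and exploit the tree structure. First I would recall the standard fact that for any graph $G$, contracting an edge $uv$ can decrease the independence number by at most one: if $I$ is a maximum independent set of $G\vert uv$ not containing the contracted vertex, then $I$ is independent in $G$, and if it does contain the contracted vertex $w$, then $(I\setminus\{w\})\cup\{u\}$ is independent in $G$ and has the same size (since no neighbour of $u$ other than possibly vertices identified with $v$ lies in $I$). Hence $\alpha(G\vert uv)\geq \alpha(G)-1$, so to reduce $\alpha$ by $d$ one needs at least $d$ contractions. The core of the argument is therefore to determine, for a tree $T$, the minimum number of edge contractions needed to reach independence number exactly $\alpha(T)-d$, and to show this can be computed in linear time.

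The key observation I would aim to establish is a clean formula: contracting a connected subtree $T'$ of $T$ to a single vertex uses $|V(T')|-1$ contractions and, crucially, for a tree the drop in independence number achieved this way is governed by the matching number of the contracted part. More precisely, for trees $\alpha(T)=|V(T)|-\mu(T)$ by K\"onig's theorem, and I expect the right statement is that the minimum number of contractions to bring $\alpha$ down by $d$ equals the minimum total size minus number of a collection of vertex-disjoint subtrees whose matchings together ``absorb'' $d$ units of independence, which after simplification should reduce to something like: one can always decrease $\alpha$ by $1$ using a single contraction of an edge incident to a leaf's support vertex chosen greedily, giving an upper bound of (roughly) $d$ contractions, matched by the lower bound above — but one must be careful, because contracting a pendant edge does not always decrease $\alpha$ (e.g. contracting a pendant edge of $P_3$ at the leaf end gives $P_2$ with $\alpha=1$, a drop of $1$, which is fine, but in a star $K_{1,n}$ contracting any edge leaves $\alpha=n-1$, still a drop of $1$). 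So I would prove: in any tree with at least one edge, there is an edge whose contraction decreases $\alpha$ by exactly $1$, hence $\alpha$ can be reduced by $d$ using exactly $d$ contractions whenever $d\leq\alpha(T)-1$, and this is optimal by the lower bound; when $d$ exceeds $\alpha(T)-1$ the instance is trivially a no-instance (or $d=\alpha(T)$ needs $\alpha(T)-1+\text{something}$ — the $\alpha=1$ case, i.e. contracting a tree to a point, costs $|V(T)|-1$).

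Then the algorithm is immediate: compute $\alpha(T)=|V(T)|-\mu(T)$ in linear time via the standard greedy leaf-matching DP on trees, and answer yes iff $d\le \alpha(T)-1$ and $k\ge d$, or ($d=\alpha(T)$, $k\ge |V(T)|-1$, handling trivial small cases). Checking $\mu(T)$ and all the threshold comparisons is $O(|V(T)|)$.

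The main obstacle I anticipate is the claim that every tree with an edge has a contraction-critical edge for $\alpha$ (an edge $uv$ with $\alpha(T\vert uv)=\alpha(T)-1$), and that such drops compose, i.e.\ one can iterate to get a drop of exactly $d$ with exactly $d$ contractions. I would handle the existence claim by rooting $T$, taking a deepest leaf $\ell$ with parent $p$: every maximum independent set either contains $\ell$, or can be modified to contain $\ell$ (swap $p$ out for $\ell$), so $\ell$ lies in some maximum independent set; contracting $\ell p$ produces a vertex $w$ that sits in the position of $p$ but is "forced" like a leaf, and a short case analysis (comparing a maximum independent set of $T\vert \ell p$ avoiding $w$ versus the neighbourhood constraints) gives $\alpha(T\vert\ell p)=\alpha(T)-1$; the subtlety is when $p$ has other children that are also leaves, where I would instead contract a leaf-leaf-ish configuration or argue via $\mu$ directly ($\mu$ increases by exactly one, and $|V|$ drops by one, keeping $\alpha=|V|-\mu$ down by one — though one must re-verify $\alpha=|V|-\mu$ can fail after contraction since $T\vert\ell p$ need not be a tree-like... actually it remains a tree since contracting a pendant edge of a tree yields a tree, so K\"onig still applies and the bookkeeping is clean). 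This last point — that pendant-edge contractions keep us in the realm of trees and hence keep $\alpha=|V|-\mu$ valid — is what makes the whole inductive argument go through.
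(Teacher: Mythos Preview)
Your central claim --- that in any tree with at least one edge there is an edge whose contraction decreases $\alpha$ by exactly~$1$, so that $d$ contractions always suffice to drop $\alpha$ by $d$ --- is false. Take $T=P_4$ with vertices $1,2,3,4$: here $\alpha(T)=2$, and contracting \emph{any} of the three edges yields $P_3$, which still has $\alpha=2$. More generally, whenever $T$ admits a perfect matching (so $n=2\mu(T)$), every edge contraction leaves $\alpha$ unchanged: by K\"onig's theorem $\alpha(T')+\mu(T')=n-1$, and since $\mu$ cannot increase under contraction while $\mu(T')\le \lfloor (n-1)/2\rfloor=\mu(T)-1$, we get $\mu(T')=\mu(T)-1$ and hence $\alpha(T')=\alpha(T)$. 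Your parenthetical ``$\mu$ increases by exactly one'' is also backwards; $\mu$ never increases under edge contraction.

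Consequently your proposed decision rule (yes iff $k\ge d$ and $d\le\alpha(T)-1$) is wrong: for $T=P_4$, $d=1$, $k=1$ it answers yes, but the instance is a no-instance. The paper's proof hinges precisely on the distinction you miss. Writing $\sigma(T)=n-2\mu(T)$ for the number of vertices unsaturated by a fixed maximum matching~$M$, one shows: if $d\le\sigma(T)$ then $d$ contractions suffice (each contracts an edge with one $M$-unsaturated endpoint, which keeps $\mu$ fixed and hence drops $\alpha$ by~$1$); if $d>\sigma(T)$ then after exhausting the $\sigma(T)$ ``cheap'' contractions one reaches a tree with a perfect matching, and each further unit drop in $\alpha$ costs \emph{two} contractions (first a matching edge, creating an unsaturated vertex, then an edge at that vertex). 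The minimum is therefore $\max\{d,\,2(d+\mu(T))-n\}$, and the matching lower bound comes from $\alpha(T')\ge\lceil (n-p)/2\rceil$ after $p$ contractions. All quantities are computable in linear time via a linear-time maximum matching on trees.
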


\begin{proof}
Let $(T,d,k)$ be an instance of {\sc Contraction Blocker($\alpha$)}, where $T$ is a tree on $n$ vertices. We first describe our algorithm and prove its correctness. Afterwards, we analyze its running time. Throughout the proof let $M$ denote a maximum matching of $T$.

As $\alpha(T)+\mu(T)=n$ by K\"{o}nig's Theorem (see, for example, \cite{Di05}), we find that $(T,d,k)$ is a no-instance if $d>n-\mu(T)$.  Assume that $d\le n-\mu(T)$.
We observe that trees are closed under edge contraction. Hence, contracting an edge of $T$ results in a new tree $T'$. Moreover, $T'$
has $n-1$ vertices and the edge contraction neither increased the independence number nor the matching number.
As $\alpha(T)+\mu(T)=n$ and similarly  $\alpha(T')+\mu(T')=n-1$, this means that either $\alpha(T')=\alpha(T)-1$ or $\mu(T')=\mu(T)-1$. 

First suppose that $d\le n-2\mu(T)$.
There are exactly $\sigma(T)=n-2\mu(T)$ vertices that are unsaturated by $M$. Let $uv$ be an edge, such that $u$ is unsaturated. 
As $M$ is maximum, $v$ must be saturated. Then, by contracting $uv$, we obtain a tree $T'$ such that $\mu(T')=\mu(T)$. 
It follows from the above that $\alpha(T')=\alpha(T)-1$. Say that we contracted $u$ onto $v$. 
Then in $T'$ we have that $v$ is saturated by $M$, which is a maximum matching of $T'$ as well.  Thus, if $d\le n-2\mu(T)$, contracting $d$ edges, one of the end-vertices of which is unsaturated by $M$, yields a tree $T'$ with $\mu(T')=\mu(T)$ and $\alpha(T')=\alpha(T)-d$. Since an edge contraction reduces the independence number by at most~1, it follows that this is optimal. Hence, as $d\le n-2\mu(T)$, we find that $(G,T,k)$ is a yes-instance if $k\ge d$ and a no-instance if $k<d$. 

Now suppose that $d>n-2\mu(T)$. Suppose that we first contract the $n-2\mu(T)$ edges that have exactly one end-vertex that is unsaturated by $M$. It follows from the above that this yields a tree $T'$ with $\mu(T')=\mu(T)$ and $\alpha(T')=\alpha(T)-(n-2\mu(T))$. Since $T'$ does not contain any unsaturated vertex, $M$ is a perfect matching of $T'$. Then, contracting any edge in $T'$ results in a tree $T''$ with $\mu(T'')=\mu(T')-1$ and thus, $\alpha(T'')=\alpha(T')$. If we contract an edge $uv\in M$, the resulting vertex $uv$ is unsaturated by  $M'=M\setminus\{uv\}$ in $T''$. Hence, as explained above, if in addition we contract now an edge $(uv)w$, we obtain a tree $T'''$ with $\alpha(T''')=\alpha(T'')-1$ and $\mu(T''')=\mu(T'')$. Repeating this procedure, we may reduce the independence number of $T$ by $d$ with $n-2\mu(T)+2(d-n+2\mu(T))=2(d+\mu(T))-n$ edge contractions. Below we show that this is optimal.

Suppose that we contract $p$ edges in  $T$. Let $T'$ be the resulting tree. We have $\alpha(T')+\mu(T')=n-p$. As $\mu(T')\le {1\over 2}(n-p)$, this means that $\alpha(T')\ge {1\over 2}(n-p)$. If $p<2(d+\mu(T))-n$ we have $-{p\over 2}>-(d+\mu(T))+{n\over 2}$,  and thus 
\[\begin{array}{lcl}
\alpha(T') &\ge &{1\over 2}(n-p)\\[3pt]
&>&{n\over 2}-d-\mu(T)+{n\over 2}\\[3pt]
&=&\alpha(T)-d.
\end{array}\]
So at least $2(d+\mu(T))-n$ edge contractions are necessary to decrease the independence number by $d$.  It remains to check if $k$ is sufficiently high for us to allow this number of edge contractions. 

As we can find a maximum matching of tree $T$ (and thus compute $\mu(T)$) in $O(n)$ time by using the algorithm of Savage~\cite{Sa80}, our algorithm runs in $O(n)$ time.
\qed
\end{proof}

\noindent
{\bf Remark 1.} By K\"onig's Theorem, we have that $\alpha(G)+\mu(G)=|V(G)|$  for any bipartite graph $G$, but we can only use the proof of Theorem~\ref{thm:trees}
to obtain a result for trees  for the following reason: trees form the largest subclass of (connected) bipartite graphs that are closed under edge contraction, and this property plays a crucial role in our proof.

\subsection{Deletion Blockers}

We first show that all three deletion blocker problems are polynomial-time solvable for bipartite graphs (and thus for trees).
It is known already that {\sc Deletion Blocker($\alpha$)} is polynomial-time solvable for bipartite graphs~\cite{BTT11,CWP11}. Hence it suffices to prove that the same holds for {\sc Deletion Blocker($\pi$)} when $\pi\in\{\chi, \omega\}$.
In order to do so we need the following relation between {\sc $1$-Deletion Blocker($\omega$)} and {\sc Vertex Cover}.

\begin{proposition}
\label{p-vc}
Let $G$ be a graph with at least one edge and let $k \geq 1$ be an integer. Then $(G,\omega(G)-1,k)$ is a yes-instance of {\sc Deletion Blocker($\omega$)} if and only if $(G,k)$ is a yes-instance of {\sc Vertex Cover}.
\end{proposition}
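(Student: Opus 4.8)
The plan is to prove both directions by relating the effect of deleting vertices on $\omega$ to the structure of a vertex cover, using the observation that deleting a vertex cover destroys all edges, hence all cliques of size at least $2$. First I would set up notation: let $\omega = \omega(G) \geq 2$ (the graph has at least one edge) and let $t = |V(G)|$.

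For the forward direction, suppose $(G, \omega-1, k)$ is a yes-instance of \textsc{Deletion Blocker($\omega$)}, so there is a set $S \subseteq V(G)$ with $|S| \leq k$ such that $\omega(G - S) \leq \omega(G) - (\omega-1) = 1$. A graph with clique number at most $1$ has no edges, so $G - S$ is edgeless, which is exactly the statement that $S$ is a vertex cover of $G$. Hence $(G, k)$ is a yes-instance of \textsc{Vertex Cover}. For the converse, suppose $(G, k)$ is a yes-instance of \textsc{Vertex Cover}, witnessed by a vertex cover $S$ with $|S| \leq k$. Then $G - S$ has no edges, so $\omega(G - S) = 1 \leq \omega(G) - (\omega(G)-1)$; note here I use that $G$ has an edge so $\omega(G) - 1 \geq 1$ and the threshold is meaningful, and that if $G - S$ is empty we may take $\omega$ of the empty graph to be $0 \leq 1$ as well. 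Thus $(G, \omega(G)-1, k)$ is a yes-instance of \textsc{Deletion Blocker($\omega$)}.

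There is essentially no technical obstacle here; the only thing to be careful about is the degenerate/boundary behaviour — in particular that the hypothesis ``$G$ has at least one edge'' guarantees $\omega(G) \geq 2$ so that the threshold $d = \omega(G)-1$ is positive, and a brief remark on the convention for $\omega$ of an edgeless (possibly empty) graph so that the inequality $\omega(G-S) \leq 1$ is correctly interpreted as ``$G-S$ is edgeless''. The proof is then a two-line equivalence in each direction, and I would present it as such.
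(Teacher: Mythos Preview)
Your proof is correct and follows essentially the same route as the paper's: both directions hinge on the observation that $\omega(G-S)\leq 1$ is equivalent to $G-S$ being edgeless, which in turn is equivalent to $S$ being a vertex cover. The only cosmetic difference is that the paper proves the converse first; your attention to the boundary cases is, if anything, slightly more explicit than the paper's.
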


\begin{proof}
Let $G=(V,E)$ be a graph with $|E|\geq 1$. Thus, $\omega(G) \geq 2$. Let $k \geq 1$ be an integer. First suppose that $(G,k)$ is a yes-instance of {\sc Vertex Cover}, that is, $G$ has a vertex cover $V'$ of size at most $k$. So, every edge of $G$ is incident to at least one vertex of $V'$.  Then, deleting all vertices of $V'$ yields a graph $G'$ with no edges. This means that $\omega(G') \leq 1$, and thus $(G,\omega(G)-1,k)$ is a yes-instance for {\sc Deletion Blocker($\omega$)}.
Now suppose that $(G,\omega(G)-1,k)$ is a yes-instance of {\sc Deletion Blocker($\omega$)}. Then there exists a set $V'\subseteq V$ of size $|V'|\leq k$ such that $\omega(G-V') \leq 1$.  This implies that $G-V'$ has no edges. Thus $V'$ is a vertex cover of $G$ of size at most $k$. So, $(G,k)$ is a yes-instance for {\sc Vertex Cover}. \qed
\end{proof}

Proposition~\ref{p-vc} has the following corollary, which we will apply in this section and at some other places in our paper.

\begin{corollary}
\label{c-vc}
Let $G$ be a triangle-free graph with at least one edge and let $k \geq 1$ be an integer. Then $(G,k)$ is a yes-instance of {\sc 1-Deletion Blocker($\omega$)} if and only if $(G,k)$ is a yes-instance of {\sc Vertex Cover}.
\end{corollary}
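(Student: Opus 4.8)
The plan is to derive Corollary~\ref{c-vc} directly from Proposition~\ref{p-vc} by observing that the hypotheses specialize cleanly. First I would note that for a triangle-free graph $G$ with at least one edge we have $\omega(G)=2$: there is an edge, so $\omega(G)\geq 2$, and there is no triangle, so $\omega(G)\leq 2$. Consequently $\omega(G)-1=1$, which means the problem {\sc Deletion Blocker($\omega$)} restricted to the instance $(G,\omega(G)-1,k)$ is exactly {\sc 1-Deletion Blocker($\omega$)} on $(G,k)$ for such graphs.

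With this identification in hand, the statement is immediate: Proposition~\ref{p-vc} says $(G,\omega(G)-1,k)$ is a yes-instance of {\sc Deletion Blocker($\omega$)} if and only if $(G,k)$ is a yes-instance of {\sc Vertex Cover}, and since $\omega(G)-1=1$ the left-hand side is precisely ``$(G,k)$ is a yes-instance of {\sc 1-Deletion Blocker($\omega$)}''. So I would simply chain these two equivalences. The only thing to be careful about is that the hypothesis ``$G$ triangle-free with at least one edge'' is genuinely what forces $\omega(G)=2$; without the edge one could have $\omega(G)=1$ and the parameter $d=\omega(G)-1$ would be $0$ rather than $1$, and without triangle-freeness $d$ could be larger than $1$.

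There is no real obstacle here — the corollary is a one-line consequence once the value of $\omega(G)$ is pinned down, so the ``hard part'' is merely making sure the reduction of $d$ to the constant $1$ is stated explicitly before invoking Proposition~\ref{p-vc}. I would write it in two sentences: establish $\omega(G)=2$, then quote Proposition~\ref{p-vc}.
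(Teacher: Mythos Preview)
Your proposal is correct and matches the paper's intent: the corollary is stated immediately after Proposition~\ref{p-vc} with no separate proof, precisely because for a triangle-free graph with at least one edge one has $\omega(G)=2$, so the parameter $d=\omega(G)-1$ specializes to $1$ and the proposition applies directly. Your two-sentence write-up is exactly the argument the paper leaves implicit.
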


We are now ready to prove the following result.

\begin{theorem}\label{thm:bip}
For $\pi\in \{\chi,\omega\}$, {\sc Deletion Blocker($\pi$)} can be solved in polynomial time on bipartite graphs.
\end{theorem}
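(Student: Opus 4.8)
The plan is to handle the two parameters separately, noting that bipartite graphs have a very simple structure with respect to both $\chi$ and $\omega$. The key observation is that for a bipartite graph $G$ with at least one edge we have $\chi(G)=\omega(G)=2$, while if $G$ has no edge then $\chi(G)=\omega(G)=1$ and the problem is trivial. So the only non-trivial threshold is $d=1$, since reducing a parameter that equals $2$ by $d$ can only be asked for with $d\le 2$, and $d=2$ forces $\pi(G')\le 0$, which is impossible for a nonempty graph and hence a no-instance (detectable in constant time); for $d=1$ we must reach a graph $G'$ with $\pi(G')\le 1$, i.e. an edgeless graph.

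First I would dispose of $\pi=\omega$. For $d=1$, the instance $(G,\omega(G)-1,k)$ is, by Proposition~\ref{p-vc}, a yes-instance of {\sc Deletion Blocker($\omega$)} if and only if $(G,k)$ is a yes-instance of {\sc Vertex Cover}; and {\sc Vertex Cover} is polynomial-time solvable on bipartite graphs by K\"onig's Theorem (see, e.g.,~\cite{Di05}). For general $d$ in the input, we first check whether $\omega(G)\le 1$ (trivial yes/no depending on $d$), and otherwise $\omega(G)=2$, so the instance is a no-instance unless $d\le 1$, and the case $d=1$ is handled as above. This settles $\pi=\omega$.

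Next I would treat $\pi=\chi$. Here the point is that for bipartite graphs $\chi=\omega$ exactly (both equal $2$ if there is an edge, both equal $1$ otherwise), and moreover deleting vertices from a bipartite graph keeps it bipartite, so for any $V'$ we have $\chi(G-V')=\omega(G-V')$. Hence $(G,d,k)$ is a yes-instance of {\sc Deletion Blocker($\chi$)} if and only if it is a yes-instance of {\sc Deletion Blocker($\omega$)}: reaching $\chi(G')\le \chi(G)-d$ is the same as reaching $\omega(G')\le\omega(G)-d$ since $\chi(G)=\omega(G)$ and $\chi(G')=\omega(G')$. Thus the $\pi=\chi$ case reduces to the $\pi=\omega$ case already solved, and we are done.

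There is no real obstacle here; the statement is essentially a bookkeeping consequence of Proposition~\ref{p-vc} together with the identity $\chi=\omega$ on bipartite graphs and K\"onig's Theorem. The only thing to be careful about is the edge cases: graphs with no edges, and input thresholds $d\ge 2$, which should be dealt with explicitly so that the algorithm is genuinely polynomial-time (indeed constant-time after a single scan) on all inputs rather than only on those with $d=1$.
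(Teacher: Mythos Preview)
Your approach is essentially the paper's: both reduce $\chi$ to $\omega$ (you via the explicit identity $\chi=\omega$ on bipartite graphs together with closure under vertex deletion, the paper via perfectness and closure under vertex deletion), then reduce to the case $d=1$ and invoke Proposition~\ref{p-vc} (the paper uses its Corollary~\ref{c-vc}) together with K\"onig's Theorem for {\sc Vertex Cover} on bipartite graphs. One small slip in your edge-case analysis: when $\omega(G)=2$ and $d=2$, the instance is not automatically a no-instance, since deleting all $|V(G)|$ vertices yields the empty graph with $\omega=0$; so $(G,2,k)$ is a yes-instance exactly when $k\ge|V(G)|$. This is a trivial fix and does not affect the substance of your argument.
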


\begin{proof}
As bipartite graphs are perfect and closed under vertex deletion, the problems {\sc Deletion Blocker($\omega$)} and {\sc Deletion Blocker($\chi$)} are equivalent.
Therefore, we only have to consider the case where $\pi=\omega$. As bipartite graphs have clique number at most~2, {\sc Deletion Blocker($\omega$)} and {\sc $1$-Deletion Blocker($\omega$)} are equivalent. 
As bipartite graphs are triangle-free, we can apply Corollary~\ref{c-vc}.
To solve \textsc{Vertex Cover} on bipartite graphs, K\"{o}nig's Theorem  tells us that it suffices to find a maximum matching, which takes $O(n^{2.5})$ time on $n$-vertex bipartite graphs~\cite{HK73}.\qed
\end{proof}

We now consider the 
the class of cobipartite graphs.
It is known that {\sc Deletion Blocker($\pi$)} is polynomial-time solvable on cobipartite graphs if $\pi\in\{\omega, \chi\}$~\cite{CWP11}.
Hence we only have to deal with the case $\pi=\alpha$. For this case we prove the following result, which follows immediately from Theorem~\ref{thm:bip}.

\begin{theorem}
{\sc Deletion Blocker($\alpha$)} can be solved in polynomial time on cobipartite graphs.
\end{theorem}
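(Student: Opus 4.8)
The plan is to reduce {\sc Deletion Blocker($\alpha$)} on cobipartite graphs to {\sc Deletion Blocker($\omega$)} on bipartite graphs by complementation, and then invoke Theorem~\ref{thm:bip}. First I would observe that if $G$ is cobipartite, then its complement $\overline{G}$ is bipartite and can be constructed in polynomial time. The key fact is that vertex deletion commutes with complementation: for every $S\subseteq V(G)$ we have $\overline{G-S}=\overline{G}-S$. Combining this with the identity $\alpha(H)=\omega(\overline{H})$, which holds for every graph $H$, gives $\alpha(G-S)=\omega(\overline{G}-S)$ for all $S\subseteq V(G)$; in particular $\alpha(G)=\omega(\overline{G})$.

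Next I would make the equivalence of instances precise. Let $(G,d,k)$ be an instance of {\sc Deletion Blocker($\alpha$)} with $G$ cobipartite, and consider the instance $(\overline{G},d,k)$ of {\sc Deletion Blocker($\omega$)}, where $\overline{G}$ is bipartite. If $G$ can be $k$-vertex-deleted into a graph $G'=G-S$ with $|S|\le k$ and $\alpha(G')\le\alpha(G)-d$, then $\overline{G}-S=\overline{G'}$ satisfies $\omega(\overline{G}-S)=\alpha(G-S)\le\alpha(G)-d=\omega(\overline{G})-d$, so $(\overline{G},d,k)$ is a yes-instance. The converse is symmetric, using the same two identities. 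Hence the two instances are equivalent.

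Finally, by Theorem~\ref{thm:bip}, {\sc Deletion Blocker($\omega$)} is polynomial-time solvable on bipartite graphs, so we can decide $(\overline{G},d,k)$, and therefore $(G,d,k)$, in polynomial time. This proves the statement.

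I do not expect any real obstacle here; the only point requiring a little care is stating the commutation of deletion and complementation cleanly so that the reduction is watertight. For completeness one could also note that {\sc Deletion Blocker($\alpha$)} restricted to cobipartite graphs lies in \NP\ (take the deleted vertex set as certificate, since $\alpha$ is polynomial-time computable on cobipartite, hence perfect, graphs), but this is not needed for the polynomial-time claim.
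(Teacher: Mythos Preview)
Your proposal is correct and is precisely the argument the paper has in mind: the paper's proof simply states that the result ``follows immediately from Theorem~\ref{thm:bip}'', and your complementation reduction (using $\alpha(G-S)=\omega(\overline{G}-S)$ together with the polynomial-time solvability of {\sc Deletion Blocker($\omega$)} on bipartite graphs) is exactly that immediate derivation spelled out in detail.
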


\section{Cographs}\label{s-co}

It is well known (see for example~\cite{BLS99}) that  a graph $G$ is a cograph if and only if $G$ can be generated from $K_1$ by a sequence of operations, where each operation is either a join or a union operation. 
Recall from Section~\ref{sec:prelim} that we denote these operations by $\otimes$ and $\oplus$, respectively.
Such a sequence corresponds to  a decomposition tree $T$, which has the following properties:
\begin{itemize}
\item [1.]  its root $r$ corresponds to the graph $G_r=G$;
\item [2.]   every leaf $x$ of $T$ corresponds to exactly one vertex of $G$, and vice versa, implying that $x$ corresponds to a unique single-vertex graph $G_x$;
\item [3.]  every internal node $x$ of $T$ has at least two children, is either labeled $\oplus$ or $\otimes$, and corresponds to an induced subgraph $G_x$ of $G$ defined as follows:
\begin{itemize}
\item if $x$ is a $\oplus$-node, then $G_x$ is the disjoint union of all graphs $G_y$ where $y$ is a child of $x$;
\item if $x$ is a $\otimes$-node, then $G_x$ is the join of all graphs $G_y$ where $y$ is a child of $x$.
\end{itemize}
\end{itemize}
A cograph $G$ may have more than one such tree but has exactly one unique tree~\cite{CHS81}, called the {\it cotree} $T_G$ of $G$, if the following additional property is required: 
\begin{itemize} 
\item [4.] Labels of internal nodes on the (unique) path from any leaf to $r$ alternate between $\oplus$ and $\otimes$.
\end{itemize}

Note that $T_G$ has $O(n)$ vertices. For our purposes we must modify $T_G$ by applying the following known procedure
(see for example~\cite{BM93}). Whenever an internal node~$x$ of $T_G$ has more than two children $y_1$ and $y_2$, we remove the edges $xy_1$ and $xy_2$ and add a new vertex $x'$ with 
edges $xx'$, $x'y_1$ and $x'y_2$. If~$x$ is a $\oplus$-node, then $x'$ is a $\oplus$-node, and if $x$ is a $\otimes$-node, then $x'$ is a 
$\otimes$-node. Applying this rule exhaustively yields a tree in which each internal node has exactly two children. We denote this tree
by $T_G'$. Because $T_G$ has $O(n)$ vertices, modifying $T_G$ into $T_G'$ takes linear time.

Corneil, Perl and Stewart~\cite{CPS85} proved that the problem of deciding whether a graph with $n$ vertices and $m$ edges is a cograph can be solved in time $O(n+m)$. They also showed that in the same time it is possible to construct its cotree (if it exists).
As modifying $T_G$ into $T_G'$ takes $O(n+m)$ time, we obtain the following lemma.

\begin{lemma}\label{l-cotree}
Let $G$ be a graph with $n$ vertices and $m$ edges.  Deciding if  $G$ is a cograph and constructing $T_G'$ (if it exists) can be done in time $O(n+m)$.
\end{lemma}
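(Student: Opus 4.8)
The statement to prove is Lemma~\ref{l-cotree}: for a graph $G$ on $n$ vertices and $m$ edges, deciding whether $G$ is a cograph and constructing the binarized cotree $T_G'$ (if it exists) can be done in time $O(n+m)$.

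The plan is to assemble the proof from the two ingredients already laid out in the preceding paragraphs, so the argument is essentially a bookkeeping of running times. First I would invoke the result of Corneil, Perl and Stewart~\cite{CPS85}: in $O(n+m)$ time one can decide whether $G$ is a cograph and, if so, output its (unique) cotree $T_G$. If the recognition step reports that $G$ is not a cograph, we are done and the lemma holds vacuously for that input. So assume $G$ is a cograph and we have $T_G$ in hand.

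Next I would account for the transformation of $T_G$ into the binarized tree $T_G'$ via the splitting rule described just before the lemma: whenever an internal node $x$ has more than two children, detach two of its children $y_1,y_2$, introduce a fresh node $x'$ of the same type ($\oplus$ or $\otimes$) as $x$, and reattach $y_1,y_2$ under $x'$ with $x'$ itself a child of $x$. Each application of this rule reduces the total number of ``excess'' children (summed over all internal nodes) by exactly one while adding one node, so the process terminates; the key quantitative point is that $T_G$ has $O(n)$ nodes (it has exactly $n$ leaves and every internal node has at least two children, so at most $n-1$ internal nodes), hence the total excess is $O(n)$ and the rule is applied $O(n)$ times, each application taking constant time with an appropriate pointer-based representation. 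Therefore $T_G'$ is produced in $O(n)$ additional time, and it still has $O(n)$ nodes. Summing the $O(n+m)$ for recognition plus cotree construction with the $O(n)$ for binarization gives the claimed $O(n+m)$ bound.

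I do not anticipate a genuine obstacle here, since both halves are cited or elementary; the only point requiring a sentence of care is justifying that the binarization rule is applied only $O(n)$ times and that each step is constant-time, which follows from the size bound $|V(T_G)| = O(n)$ and a standard adjacency-list/child-list representation of the cotree. One should also note in passing that $T_G'$ is indeed a valid decomposition tree of $G$ in the sense of properties~1--3 (the split node $x'$ computes the union or join of $G_{y_1}$ and $G_{y_2}$, and $x$ then combines this with its remaining children using the same operation, which is associative), so the object returned really is what the later sections will use; this is immediate from associativity of $\oplus$ and $\otimes$ and needs no computation.
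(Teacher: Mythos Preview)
Your proposal is correct and follows exactly the paper's own approach: the paper does not give a separate proof after the lemma statement but instead derives it in the preceding paragraphs by citing Corneil, Perl and Stewart~\cite{CPS85} for the $O(n+m)$ recognition and cotree construction, and then observing that since $T_G$ has $O(n)$ vertices the binarization into $T_G'$ takes linear time. Your write-up simply fleshes out the binarization bound (counting excess children) in slightly more detail than the paper does, but the argument is the same.
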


For two integers $k$ and $\ell$ we say that a graph $G$ can be {\it $(k,\ell)$-contracted} into a graph $H$ if $G$ can be modified into $H$ by a sequence containing
$k$ edge contractions and $\ell$ vertex deletions. Note that cographs are closed under edge contraction and under vertex deletion. 
In fact, to prove our results for cographs, we will prove the following more general result.

\begin{theorem}\label{t-coalpha}
Let $\pi\in \{\alpha,\chi,\omega\}$. The problem of determining the largest integer~$d$ such that a cograph $G$ with $n$ vertices and $m$ edges can be $(k,\ell)$-contracted into a cograph $H$ with $\pi(H)\leq \pi(G)-d$  can be solved in  $O(n^2+mn+(k+\ell)^3n)$ time. 
\end{theorem}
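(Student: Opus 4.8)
The plan is to work bottom-up on the modified cotree $T_G'$, computing for every node~$x$ a table that records, for each budget of $k'\le k$ contractions and $\ell'\le \ell$ deletions, the relevant parameter information about the subgraph $G_x$ under all ways of spending that budget inside $G_x$. For $\pi=\omega=\chi$ the natural quantity to track is the pair $(\omega(H_x),|V(H_x)|)$ that is achievable for the resulting cograph~$H_x$; for $\pi=\alpha$ it is $\alpha(H_x)$. The key structural facts are the standard cograph identities: at a $\oplus$-node, $\alpha$ and (the number of vertices) add, while $\omega$ and $\chi$ take the maximum of the children's values; at a $\otimes$-node the roles swap, $\omega$ and $\chi$ add while $\alpha$ takes the maximum. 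Since $T_G'$ is binary, each internal-node update is a convolution of the two children's tables over the split of the budget $k'=k_1+k_2$, $\ell'=\ell_1+\ell_2$, which costs $O((k+\ell)^2)$ per pair of table entries after we also decide how edge contractions interact with the union/join structure.

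The one genuinely new ingredient is accounting for what an edge contraction \emph{does} across the cotree. Contracting an edge $uv$ inside $G_x$ either merges two leaves that are ``together'' (a join somewhere above their least common ancestor forces the edge) — which decreases $|V|$ by one and can only help, effectively letting us treat a set of vertices lying in a common ``clique side'' as a single vertex — or, more subtly, it can be simulated by a vertex deletion plus an adjustment, using exactly the observation already made for cobipartite graphs in the proof of Theorem~\ref{t-cohard}: contracting $uv$ with $u$ a universal vertex (relative to the current subgraph) is the same as deleting the other endpoint. The first step I would carry out carefully is a normal-form lemma: in an optimal solution we may assume every contracted edge lies ``inside a $\otimes$-node'', i.e.\ we only ever contract sets of vertices that are pairwise adjacent in the final graph, so that a block of $t$ mutually adjacent original vertices gets charged $t-1$ contractions and collapses to one vertex. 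Granting this, the table at node~$x$ stores, for each $(k',\ell')$, the best achievable value of $\pi(H_x)$ (and, when $\pi\in\{\omega,\chi\}$, also the achievable vertex count, since that count feeds into how many further contractions the parent can use profitably).

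With the normal form in hand, the recursion is routine bookkeeping. At a leaf, the table is trivial (one vertex, spend nothing). At a $\oplus$-node with children $y_1,y_2$, split the budget and combine: for $\pi=\alpha$ take $\alpha(H_{y_1})+\alpha(H_{y_2})$ over all budget splits and retain the minimum $\alpha$ for each total $(k',\ell')$; contractions cannot act across the union, so cross-terms are zero. At a $\otimes$-node, again split the budget, but now besides the ``local'' contributions we may additionally contract a chosen set of vertices drawn from $G_{y_1}$ together with a chosen set from $G_{y_2}$ into a single new vertex (they are all mutually adjacent), charging the appropriate number of contractions; for $\pi\in\{\omega,\chi\}$ this changes the additive total $\omega(H_{y_1})+\omega(H_{y_2})$ by a controllable amount, and we take the best over all choices consistent with the budget. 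Because at each of the $O(n)$ nodes we do an $O((k+\ell)^2)$-entry convolution with an $O(k+\ell)$ inner choice (how many vertices to pull into a merged vertex), and because reading off the answer at the root and all the base comparisons of $\pi(G)$ cost $O(n^2+mn)$, the total is $O(n^2+mn+(k+\ell)^3 n)$ as claimed. The main obstacle is precisely the normal-form step: proving that restricting attention to contractions of mutually-adjacent vertex blocks loses no generality, and in particular handling the case analysis (analogous to the cobipartite argument) showing a contraction of an edge between two ``union sides'' is never needed — once a join separates them it can be replaced by deletions without increasing cost.
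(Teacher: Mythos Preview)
Your high-level plan (bottom-up DP on the binary cotree $T_G'$, with a table indexed by budget $(i,j)$ at each node) is exactly what the paper does, and your treatment of leaves and of $\oplus$-nodes is correct. The genuine gap is in how you handle cross-contractions at a $\otimes$-node, and your normal-form lemma is the wrong lemma.

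Concretely, you claim that ``a contraction of an edge between two sides is never needed --- once a join separates them it can be replaced by deletions without increasing cost.'' This is false. At a $\otimes$-node with children $y,z$, contracting an edge $uv$ with $u\in V(G_y)$ and $v\in V(G_z)$ produces a vertex adjacent to every remaining vertex of $G_x$; its effect on $\alpha$ is therefore exactly that of deleting $u$ from $G_y$ \emph{and} deleting $v$ from $G_z$ (the new universal vertex only contributes the trivial singleton independent set). So one cross-contraction buys you \emph{two} deletions, one on each side, not one. This 2-for-1 is essential: e.g.\ $K_{2,2}=(P_1\oplus P_1)\otimes(P_1\oplus P_1)$ has $\alpha=2$, and a single cross-contraction drops $\alpha$ to $1$, whereas a single deletion cannot. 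Your proposal also says the merged set across the join is ``all mutually adjacent,'' which is false in general (vertices \emph{within} $G_y$ need not be adjacent).

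What the paper actually proves at a $\otimes$-node is: the cross-contractions may be assumed to form a matching, and if there are $c=i-a_y-a_z$ of them (with $a_y$ contractions inside $G_y$ and $a_z$ inside $G_z$), then the resulting $\alpha$ equals
\[
\max\bigl\{1,\ \alpha(G_y)-d(a_y,\,b+c,\,y),\ \alpha(G_z)-d(a_z,\,j-b+c,\,z)\bigr\},
\]
i.e.\ each cross-contraction simply augments the deletion budget of \emph{both} children by one. Minimising over all splits $(a_y,a_z,b)$ gives $d(i,j,x)$. This is the missing ingredient; once you have it, the rest of your outline goes through. Two smaller remarks: tracking $|V(H_x)|$ in the table is unnecessary (the paper stores only $d(i,j,x)$), and for $\pi=\chi$ the argument is obtained by swapping the roles of $\oplus$ and $\otimes$ in the additive/max identities, not by a separate vertex-count bookkeeping.
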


\begin{proof}
First consider $\pi=\alpha$.
Let $G$ be a cograph with $n$ vertices and $m$ edges and let $k,\ell$ be two positive integers. We first construct $T_G'$. We then consider each node of $T_G'$ by following a bottom-up approach starting at the leaves of $T_G'$ and ending in its root $r$. 

Let $x$ be a node of $T_G'$. Recall that $G_x$ is the subgraph of $G$
induced by all vertices that corresponds to leaves in the subtree of $T_G'$ rooted at $x$. 
With node~$x$ we associate a table  that records the following data: for each pair of integers $i,j\geq 0$ with $i+j\leq k+\ell$  
 we compute the largest integer $d$ 
such that $G_x$ can be $(i,j)$-contracted into a graph $H_x$ with $\alpha(H_x)\leq \alpha(G_x)-d$. We denote this integer $d$ by 
$d(i,j,x)$. Let  $i,j\geq 0$ with $i+j\leq k+\ell$.

\medskip
\noindent
{\bf Case 1.} $x$ is a leaf.\\
Then $G_x$ is a 1-vertex graph meaning that $d(i,j,x)=0$ if $j=0$, whereas $d(i,j,x)=1$ if $j\geq 1$.

\medskip
\noindent
{\bf Case 2.} $x$ is a $\oplus$-node.\\
Let $y$ and $z$ be the two children of $x$. Then, as $G_x$ is the disjoint union of $G_y$ and $G_z$, we find that
$\alpha(G_x)=\alpha(G_{y})+\alpha(G_{z})$. 
Hence, we have 
\[
\begin{array}{lcrl}
d(i,j,x) &= &\max&\{\alpha(G_x)-(\alpha(G_{y})-d(a,b,y)+\alpha(G_{z})-d(i-a,j-b,z))\; |\;\\ &&& \;\;0\leq a\leq i, 0\leq b\leq j\}\\[4pt]
&= &\max&\{d(a,b,y)+d(i-a,j-b,z)\; |\; 0\leq a\leq i, 0\leq b\leq j\}.
\end{array}
\]
\noindent
{\bf Case 3.} $x$ is a $\otimes$-node.\\ 
Since $x$ is a $\otimes$-node, $G_x$ is connected and as such has a spanning tree $T$.
If $i+j\geq |V(G_x)|$ and $j\geq 1$, then we can contract $i$ edges of $T$ in the graph $G_x$ followed by $j$ vertex deletions. As each operation will reduce $G_x$ by exactly one vertex, this results in the empty graph. Hence, $d(i,j,x)=\alpha(G_x)$.
From now on assume that $i+j<|V(G_x)|$ or $j=0$. As such, any graph we can obtain from $G_x$ by using $i$ edge contractions and $j$ vertex deletions is non-empty and hence has
independence number at least~1.

Let $y$ and $z$ be the two children of $x$. Then, as $G_x$ is the join of $G_{y}$ and $G_{z}$, 
we find that $\alpha(G_x)=\max\{\alpha(G_y),\alpha(G_z)\}$. 
In order to determine $d(i,j,x)$ we must do some further analysis.
Let $S$ be a sequence that consists of $i$ edge contractions and $j$ vertex deletions of $G_x$ such that applying $S$ on $G_x$ results in a graph $H_x$
with $\alpha(H_x)=\alpha(G_x)-d(i,j,x)$. 
We partition $S$ into five sets $S_y^e$, $S_z^e$, $S_{yz}^e$, $S_y^v$, $S_z^v$,  respectively, as follows.
Let $S_y^e$ and $S_z^e$ be the set of contractions of edges with both end-vertices in $G_y$ and with both end-vertices in $G_z$, respectively.
Let $S_{yz}^e$ be the set of contractions of edges with one end-vertex in $G_y$ and the other one in $G_z$.
Let $a_y=|S_y^e|$ and let $a_z=|S_z^e|$. Then $|S_{yz}^e|=i-a_y-a_z$.
Let $S_y^v$ and $S_z^v$ be the set of deletions of vertices in $G_y$ and $G_z$, respectively. 
Let $b=|S_y^v|$. Then $|S_z^v|=j-b$.
We distinguish between two cases. 

\medskip
\noindent
First assume that $S^e_{yz}=\emptyset$.
Then $a_y+a_z=i$.
Let $H_y$ be the graph obtained from $G_y$ after applying the subsequence of $S$, consisting of operations in $S_y^e\cup S_y^v$, on $G_y$.
Let $H_z$ be defined analogously.
Then we have
\[
\begin{array}{lcl}
\alpha(H_x) & = &\max\{\alpha(H_y),\alpha(H_z)\}\\[4pt]
&= &\max\{\alpha(G_y)-d(a_y,b,y),\alpha(G_z)-d(a_z,j-b,z)\}\\[4pt]
&= &\max\{\alpha(G_y)-d(a_y,b,y),\alpha(G_z)-d(i-a_y,j-b,z)\},
\end{array}
\]
where the second equality follows from the definition of $S$.

\medskip
\noindent
Now assume that $S^e_{yz}\neq \emptyset$. 
Recall that  $i+j<|V(G_x)|$ or $j=0$.
Hence $\alpha(H_x)\geq 1$.
Our approach is based on the following observations.

First, contracting an edge with one end-vertex in $G_y$ and the other one in $G_z$ is equivalent to removing these two end-vertices and introducing a new 
vertex that is adjacent to all other vertices of $G_x$ (such a vertex is said to be {\it universal}).

Second, assume that $G_y$ contains two distinct vertices $u$ and $u'$ and that $G_z$ contains two distinct vertices $v$ and $v'$. Now suppose that we are to
contract two edges from $\{uv,uv',u'v,u'v'\}$. Contracting two edges of this set that have a common end-vertex, say edges $uv$ and $uv'$, is equivalent to deleting $u,v,v'$ from $G_x$ and introducing a new universal vertex. Contracting two edges with no common end-vertex, say $uv$ and $u'v'$,
is equivalent to deleting all four vertices $u,u',v,v'$ from $G_x$ and introducing two new universal vertices.
Because the two new universal vertices in the latter choice are adjacent, whereas the vertex $u'$ may not be universal after making the former choice, the latter choice decreases the independence number by the same or a larger value than the former choice.
Hence, we may assume without loss of generality that the latter choice happened. More generally, the contracted edges 
with one end-vertex in $G_y$ and the other one in $G_z$ can be assumed to form a matching. 
We also note that introducing a new universal vertex to a graph does not introduce any new independent set other than the singleton
set containing the vertex itself.

We conclude that each edge contraction in $S^e_{yz}$ may be considered to be equivalent to deleting one vertex from $G_y$ and one from $G_z$ and introducing a new 
universal vertex. If one of the two graphs $G_y$ or $G_z$ becomes empty in this way, then an edge contraction in $S^e_{yz}$ can  be considered to
be equivalent to the deletion of a vertex of the other one. Finally, if both sets $G_y$ and $G_z$ become empty, then we can stop as in that case $H_x$ has independence number~1 (which we assumed was the smallest value of $\alpha(H_x)$).

By the above observations and the definition of $S$ we find that
$$\alpha(H_x) =\max\{1,\alpha(G_y)-d(a_y,b+i-a_y-a_z,y),\alpha(G_z)-d(a_z,j-b+i-a_y-a_z,z)\}.$$

Hence we can do as follows. We consider all tuples $(a_y,b)$ with $0\leq a_y\leq i$ and $0\leq b\leq j$ and compute 
$\max\{\alpha(G_y)-d(a_y,b,y),\alpha(G_z)-d(i-a_y,j-b,z)\}$. Let $\alpha_x'$ be the minimum value over all values found.
We then consider all tuples $(a_y,a_z,b)$ with $a_y\geq 0$, $a_z\geq 0$, $a_y+a_z\leq i$ and $0\leq b\leq j$ and compute 
$\max\{1,\alpha(G_y)-d(a_y,b+i-a_y-a_z,y),\alpha(G_z)-d(a_z,j-b+i-a_y-a_z,z)\}$. Let $\alpha_x''$ be the minimum value over all values found.
Then $d(i,j,x)=\alpha(G_x)-\min\{\alpha_x',\alpha_x''\}$.

\medskip
\noindent
After reaching the root $r$, we let our algorithm return the integer $d(k,\ell,r)$. By construction, $d(k,\ell,r)$ is the largest integer such that $G=G_r$ can be $(k,\ell)$-contracted into a graph $H$ with $\alpha(H)\leq \alpha(G)-d(k,\ell,r)$.
We are left to analyze the running time.

Constructing $T_G'$ can be done in $O(n+m)$ time by Lemma~\ref{l-cotree}.
We now determine the time it takes to compute one entry $d(i,j,x)$ in the table associated with a node $x$.
It takes linear time to compute the independence number of a cograph\footnote{For a cograph $G$,  compute $T_G'$ and use the formula
$\alpha(G_x)=\alpha(G_{y})+\alpha(G_{z})$ if $x$ is a $\oplus$-node with children $y$ and $z$ and  $\alpha(G_x)=\max\{\alpha(G_y),\alpha(G_z)\}$ otherwise.
Alternatively, see for example~\cite{CHMW87} for a linear-time algorithm on a superclass of cographs.}.
The total number of tuples $(a_y,b)$ and $(a_y,a_z,b)$ that we need to consider is $O((k+\ell)^3)$.
Note that the table associated with a node $x$ has $O((k+\ell)^2)$ entries but that we only have to compute $\alpha(G_x)$ once.
Hence, it takes $O(n+m+(k+\ell)^3)$ time to construct a table for a node.
As~$T_{G'}$ has~$O(n)$ vertices, the total running time is $O(n+m)+O(n(n+m+(k+\ell)^3))=O(n^2+mn+(k+\ell)^3n)$.

\medskip
\noindent
Now consider $\pi=\chi$.
Note that we cannot consider the complement of a cograph (which is a cograph) because an edge contraction in a graph does not correspond to an edge contraction in its complement. However, we can re-use the previous proof after making a few modifications.
Let $G$ be a cograph with $n$ vertices and $m$ edges and let $k,\ell$ be two positive integers.
We follow the same approach as in the proof for $\pi=\alpha$. We only have to swap Cases~2 and~3 after
observing that $\chi(G_x)=\max\{\chi(G_{y}),\chi(G_{z})\}$ if $x$ is a $\oplus$-node with $y$ and $z$ as its two children
and $\chi(G_x)=\chi(G_{y})+\chi(G_{z})$ if $x$ is a $\otimes$-node. We can use the same arguments as used in the proof
for $\pi=\alpha$ for the running time analysis as well;
we only have to observe that it takes $O(n+m)$ time to compute the chromatic number of a cograph (using the same arguments as before or by using another algorithm of~\cite{CHMW87}).

\medskip
\noindent
Finally consider $\pi=\omega$.
As cographs are perfect and closed under edge contractions, the proof follows immediately from the corresponding result for $\pi=\chi$.\qed
\end{proof}

\begin{corollary}\label{c-coalpha}
For $\pi\in \{\alpha,\chi,\omega\}$, both the {\sc Contraction Blocker($\pi$)} problem and the {\sc Deletion Blocker($\pi$)} problem can be solved in polynomial time for cographs.
\end{corollary}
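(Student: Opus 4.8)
The final statement to prove is Corollary~\ref{c-coalpha}, which asserts polynomial-time solvability of {\sc Contraction Blocker($\pi$)} and {\sc Deletion Blocker($\pi$)} on cographs for $\pi \in \{\alpha, \chi, \omega\}$. The plan is to derive this directly as a consequence of Theorem~\ref{t-coalpha}, which already does all the real work by providing an $O(n^2 + mn + (k+\ell)^3 n)$-time algorithm computing, for a cograph $G$ and integers $k, \ell$, the largest $d$ such that $G$ can be $(k,\ell)$-contracted into a cograph $H$ with $\pi(H) \le \pi(G) - d$.

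\medskip
First I would observe that {\sc Contraction Blocker($\pi$)} is exactly the special case $\ell = 0$ of the problem solved in Theorem~\ref{t-coalpha}: given an instance $(G, d, k)$, run the algorithm with parameters $k$ and $\ell = 0$ to obtain the optimal threshold $d^\ast = d(k, 0, r)$, and answer yes if and only if $d \le d^\ast$. Since cographs are closed under edge contraction (as noted just before Theorem~\ref{t-coalpha}), the graph $H$ produced is again a cograph, so no issue arises with the output class. Symmetrically, {\sc Deletion Blocker($\pi$)} is the special case $k = 0$: run the algorithm with parameters $k = 0$ and $\ell$ (where $\ell$ plays the role of the deletion budget from the blocker instance), and compare $d$ against $d(0, \ell, r)$; here one uses that cographs are closed under vertex deletion. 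In both cases the running time $O(n^2 + mn + (k+\ell)^3 n)$ is polynomial in the size of the input, since $k$ and $\ell$ are each at most $n$.

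\medskip
Finally, I would record that for $\pi = \omega$ this also follows because cographs are perfect (so $\omega = \chi$ on every cograph and its contractions/subgraphs) and closed under edge contraction, exactly as already handled inside the proof of Theorem~\ref{t-coalpha}; alternatively it is simply the $\pi = \omega$ case of that theorem. There is no real obstacle here — the corollary is a packaging statement. The only mild care needed is to note that Theorem~\ref{t-coalpha} computes the \emph{optimal} threshold rather than answering a fixed-threshold decision question, so the reduction to the decision problems {\sc Contraction Blocker($\pi$)} and {\sc Deletion Blocker($\pi$)} is the trivial comparison $d \le d^\ast$; and that the formulation of Theorem~\ref{t-coalpha} in terms of a single mixed $(k,\ell)$-operation budget specializes cleanly by zeroing out one of the two budgets. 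Hence the corollary is immediate.
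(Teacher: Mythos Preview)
Your proposal is correct and follows essentially the same approach as the paper: invoke Theorem~\ref{t-coalpha} with $\ell=0$ for {\sc Contraction Blocker($\pi$)} and with $k=0$ for {\sc Deletion Blocker($\pi$)}. The additional remarks you make (about closure properties, polynomial running time in $k,\ell\le n$, and the trivial comparison $d\le d^\ast$) are all fine elaborations, but the paper's own proof is just the one-line specialization you describe.
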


\begin{proof}
We use Theorem~\ref{t-coalpha} after setting $\ell=0$ for {\sc Contraction Blocker($\pi$)} and $k=0$ for {\sc Deletion Blocker($\pi$)}.\qed
\end{proof}

\section{Split Graphs}\label{s-split}

A split partition $(K,I)$ of a split graph is {\it minimal} if $I\cup \{v\}$ is not an independent set for all $v\in K$, in other words every vertex $v\in K$ is adjacent to some vertex $u\in I$. Note that for a minimal split partition $(K,I)$ we have $\alpha(G)=\vert I\vert$. A split partition $(K,I)$ is {\it maximal} if $K\cup \{v\}$ is not a clique for all $v\in I$,
 in other words every vertex $v\in I$ is non adjacent to at least one vertex $u\in K$. Note that for a maximal split partition $(K,I)$ we have $\omega(G)=\chi(G)=\vert K\vert$.
We first show the following result.

\begin{theorem}\label{t-splitalpha}
Let $\pi \in\{\alpha,\chi,\omega\}$.
For any 
fixed $d\geq 0$,  the $d$-{\sc Contraction Blocker$(\pi)$} problem is polynomial-time solvable on split graphs.
\end{theorem}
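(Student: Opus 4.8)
Since split graphs are perfect we have $\omega=\chi$ on split graphs, so the cases $\pi=\omega$ and $\pi=\chi$ coincide; I treat them together and write $\omega$. I may assume $d\ge1$, the case $d=0$ being trivial. The plan uses three facts that I would record first: (a) split graphs are closed under edge contraction, and if $(K,I)$ is a split partition of $G$ then $G\vert e$ inherits a natural split partition for every edge $e$ (shrink $K$ when $e\subseteq K$; replace the $K$-endpoint of $e$ by the new vertex when $e$ joins $K$ and $I$) --- so the algorithm may iterate inside the class; (b) for \emph{every} split partition $(K_H,I_H)$ of a split graph $H$ one has $\omega(H)=\max\{\,|K_H|,\ \max_{v\in I_H}(1+|N_H(v)\cap K_H|)\,\}$, and $\alpha,\omega,\chi$ are linear-time computable on split graphs; (c) a split graph has at most one component with an edge, since $2P_2$ is a forbidden induced subgraph. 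By (c) all contractions happen inside the unique non-trivial component $G_0$ of $G$ (and if $G$ is edgeless the answer is ``no'' because $d\ge1$), so it suffices to decide whether $\pi(G_0)$ can be decreased by $d$ using at most $k$ contractions, with $G_0$ a connected split graph. Let $t=\pi(G_0)-d$; since a non-empty graph has $\pi\ge1$ and a contraction never empties a graph, the answer is ``no'' when $t\le0$. I then split into the regimes $t\ge2$ and $t=1$.

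In the regime $t\ge2$ --- and also for $\pi=\alpha$ with $t=1$ --- the crux is a \emph{one-step lemma}: if $H$ is a connected split graph with $\omega(H)\ge3$ (resp.\ $\alpha(H)\ge2$), then at most two edge contractions turn $H$ into a connected split graph $H'$ with $\omega(H')\le\omega(H)-1$ (resp.\ $\alpha(H')\le\alpha(H)-1$). For $\omega$ this is short: fix a split partition $(K_H,I_H)$ with $|K_H|=\omega(H)$ and contract two vertex-disjoint edges of $K_H$ if $|K_H|\ge4$, or two incident edges of $K_H$ if $|K_H|=3$; the clique side of the resulting partition then has at most $|K_H|-2$ vertices, so $\omega(H')\le|K_H|-1$ by fact (b). For $\alpha$ I would take a \emph{minimal} split partition $(K_H,I_H)$ (so $\alpha(H)=|I_H|$) and use the elementary criterion that contracting a $K_H$--$I_H$ edge $uv$ fails to lower $\alpha$ exactly when some vertex of $K_H$ has $v$ as its only neighbour in $I_H$: this either gives a good single contraction, or else every vertex of $I_H$ is such a ``private'' neighbour, in which case I would contract a private edge $u^{\ast}v$ and, since the new vertex then has no neighbour on the independent side, absorb it by a second contraction into a clique vertex chosen so as not to re-create a maximum independent set. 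Iterating the one-step lemma on $G_0$, applied only while the current $\pi$-value still exceeds $t$ (so it stays at least $3$ for $\omega$, at least $2$ for $\alpha$), shows that a yes-instance in this regime has a solution using at most $2d$ contractions. The algorithm then enumerates all sequences of at most $\min(k,2d)$ edge contractions --- there are only $n^{O(d)}$ such sequences, since at each of the at most $2d$ steps at most $\binom{n}{2}$ edges are available --- simulates each, computes $\pi$ of the outcome in linear time, and answers ``yes'' iff some sequence lowers $\pi$ to at most $t$. This is polynomial for fixed $d$.

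In the remaining regime, $\pi\in\{\omega,\chi\}$ with $t=1$, we have $\omega(G_0)=\chi(G_0)=d+1$ and the target graph must be edgeless. Since the number of components is invariant under edge contraction and an edgeless graph on $p$ vertices has $p$ components, turning the connected graph $G_0$ into an edgeless graph needs exactly $|V(G_0)|-1$ contractions; hence the answer is ``yes'' iff $k\ge|V(G_0)|-1$.

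The step I expect to be the main obstacle is the one-step lemma for $\alpha$, and within it the ``private neighbour'' case: after contracting a private edge $u^{\ast}v$ the new vertex is adjacent to no vertex of the independent side and thus, by itself, re-creates an independent set of the original size, so one has to find a second contraction that both removes this vertex and leaves no other private neighbour of $v$ able to play the same role; carrying this out needs minimality of the split partition together with connectedness of $H$ and $|I_H|\ge2$. The other ingredients --- facts (a)--(c), the clique-number formula, the one-step lemma for $\omega$, the component count for $t=1$, and the brute-force search --- are routine. \qed
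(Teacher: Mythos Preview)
Your overall plan is sound and close to the paper's: both bound the number of contractions needed in a yes-instance by a function of $d$ alone and then brute-force. The paper obtains the slightly sharper bound $d+1$ (rather than your $2d$), not by iterating a one-step lemma but by doing all the contractions in one batch: for $\pi=\alpha$ it contracts $d+1$ distinct vertices of $I'$ onto $K$-neighbours (yielding a split partition whose independent side has size $|I|-(d+1)$, hence $\alpha\le|I|-d$); for $\pi\in\{\omega,\chi\}$ it contracts $d+1$ arbitrary edges inside a maximum-size clique $K$. Your treatment of the edge case $t=1$ for $\omega,\chi$ matches the paper's Case~2.

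The genuine gap is in your proof of the one-step lemma for $\alpha$: the prescribed second contraction (``absorb the new vertex into a clique vertex'') can fail for every choice. Take $K_H=\{u_1,u_2,u_3,u_4\}$, $I_H=\{v_1,v_2\}$, with $K$--$I$ edges $u_1v_1,u_2v_1,u_3v_2,u_4v_2$; every $v\in I_H$ is private. After contracting the private edge $u_1v_1$ to $w$, absorbing $w$ into $u_2$ leaves the resulting vertex non-adjacent to $v_2$, while absorbing $w$ into $u_3$ or $u_4$ leaves $u_2$ non-adjacent to $v_2$; in all three cases $\alpha$ stays at~$2$. The lemma is nevertheless true, and the fix --- which is exactly the paper's move --- is much simpler than your private-neighbour analysis: contract \emph{two distinct vertices of $I_H$} onto their respective $K_H$-neighbours. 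The independent side of the natural split partition then has size $|I_H|-2$, so $\alpha\le|I_H|-1$. Your ``main obstacle'' dissolves once you stop insisting that the second contraction involve the output vertex of the first.
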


\begin{proof}
First consider $\pi=\alpha$.
Let $(G,k)$ be an instance of {\sc $d$-Contraction Blocker$(\alpha)$} where $G=(V,E)$ is a split graph. Let $(K,I)$ be a minimal split partition of $G$.
Let $I'$ be the set of vertices in $I$ that have at least one neighbour in $K$, and let $I''=I\setminus I'$. Because $G$ is a split graph, all vertices of $I'$ belong to the same connected component $D$
of $G$. Moreover, we have $\alpha(G)=|I|=|I'|+|I''|=\alpha(D)+|I''|$. 

First suppose that $|I'|\leq d$. 
For $(G,k)$ to be a yes-instance, $G$ must be contracted into a graph $G'$ with $\alpha(G')\leq \alpha(G)-d=|I'|+|I''|-d\leq |I''|$. This means that we must contract $D$ into the empty graph, which is not possible. Hence, $(G,k)$ is a no-instance in this case. Hence, we may assume without loss of generality that $|I'|\geq d+1$.

Suppose that $k\geq d+1$. If $k\geq |I'|$, then we contract every vertex of $I'$ onto a neighbour in $K$. In this way we have $k$-contracted $G$ into a graph $G'$ with $\alpha(G')=|I''|+1=
|I'|+|I''|-(|I'|-1)\leq |I'|+|I''|-d=\alpha(G)-d$. So, $(G,k)$ is a yes-instance in this case.
If $k\leq |I'|-1$, we contract each vertex of an arbitrary subset of $k$ vertices of $I'$ onto a neighbour in $K$. 
In this way we have $k$-contracted $G$ into a graph $G'$ with $\alpha(G')\leq |I'|-k+1+|I''|\leq |I'|+|I''|-d=\alpha(G)-d$. So, $(G,k)$ is a yes-instance in this case as well.

If $k\leq d$, then we consider all possible sequences of at most $k$ edge contractions. This takes time $O(|E(G)|^k)$, which is polynomial as $d$, and consequently $k$, is fixed.
For every such sequence we check in polynomial time whether the resulting graph has stability number at most $\alpha(G)-d$. As split graphs are closed under edge contraction and moreover are chordal graphs, the latter can be verified in linear time (see~\cite{Go80}).

\medskip
\noindent
Now let $\pi=\chi$.
Let $(G,k)$ be an instance of {\sc $d$-Contraction Blocker$(\chi)$} where $G=(V,E)$ is a split graph.

\medskip
\noindent
{\bf Case 1.} $\chi(G)\leq d$.\\
For $(G,k)$ to be a yes-instance, $G$ must be $k$-contracted into a graph $G'$ with $\chi(G')\leq \chi(G)-d\leq 0$. The only graph with chromatic number at most~$0$, is the empty graph. However, a non-empty graph cannot be contracted to an empty graph. Hence, $(G,k)$ is a no-instance in this case.

\medskip
\noindent
{\bf Case 2.} $\chi(G)=d+1$.\\ 
For $(G,k)$ to be a yes-instance, $G$ must be $k$-contracted into a graph $G'$ with $\chi(G')\leq \chi(G)-d=1$. Hence, every connected component of $G'$ must consist of exactly one vertex. If $G$ has no connected components with edges, then $(G,k)$ is a yes-instance. Otherwise,
because $G$ is a split graph, $G$ has exactly one connected component $D$ containing one or more edges. In that case, $(G,k)$ is a yes-instance if and only if $k\geq |V(D)|-1$; this can be checked in constant time.

\medskip
\noindent
{\bf Case 3.} $\chi(G)\geq d+2$.\\
First, assume that $k<d$.
Because every edge contraction reduces the chromatic number by at most~1, 
$(G,k)$ is a no-instance.

Second, assume that $k=d$. 
We consider all possible sequences of at most $k$ edge contractions. This takes time $O(|E(G)|^k)$, which is polynomial as $d$, and consequently $k$, is fixed.
For every such sequence we check in polynomial time whether the resulting graph has chromatic number at most $\chi(G)-d$. As split graphs are closed under edge contractions and moreover are chordal graphs, the latter can be verified in polynomial time  (see \cite{Go80}).

Third, assume that $k>d$. We claim that $(G,k)$ is a yes-instance. This can be seen as follows.
Let $(K,I)$ be a maximal split partition of $G$. 

If $k<|K|$, then we contract $k$ arbitrary edges of $K$. The resulting graph $G'$ has a split partition $(K',I)$ with $|K'|= |K|-k\leq |K|-d-1$.
Hence $\chi(G')\leq |K'|+1\leq |K|-d=\chi(G)-d$. Note that the latter equality follows from our assumption that $(K,I)$ is maximal.
Now suppose that $k\geq |K|$. We contract $|K|$ arbitrary edges of $K$. The resulting graph $G'$ has chromatic number $\chi(G')=2\leq \chi(G)-d$. 
Hence, in both cases, we conclude that $(G,k)$ is a yes-instance.

\medskip
\noindent
Finally consider $\pi=\omega$. We use the previous result combined with the fact that split graphs are perfect and closed under edge contractions.
\qed
\end{proof}

In our next theorem we give two hardness results which, as explained in Section~\ref{s-intro}, show that Theorem~\ref{t-splitalpha} can be seen as best possible.
In their proofs we will reduce from the {\sc Red-Blue Dominating Set} problem. This problem takes as input a bipartite graph $G=(R\cup B,E)$ and an integer $k$, and asks whether there exists a {\em red-blue dominating set} of size at most~$k$, that is, a subset $D\subseteq B$ of at most $k$ vertices such that every vertex in $R$ has at least one neighbour in $D$.
This problem is \NP-complete, because it is equivalent to the \NP-complete problems {\sc Set Cover} and {\sc Hitting Set}~\cite{GJ79}.
The  {\sc Red-Blue Dominating Set} problem is  also $\W[1]$-complete when parameterized by~$|B|-k$~\cite{GJY11}.
Belmonte et al.~\cite{BGHP13} reduced from the same problem for showing that $1$-{\sc Contraction Blocker$(\Delta$)} is \NP-complete and \W[2]-hard (with parameter $k$)
for split graphs, but the arguments we use to prove  our results are quite different from the ones they used. 

\begin{theorem}\label{t-splitalpha2}
For $\pi\in \{\alpha,\chi,\omega\}$,  the {\sc Contraction Blocker$(\pi)$} problem, restricted to split graphs, is \NP-complete 
 as well as $\W[1]$-hard when parameterized by $d$.
\end{theorem}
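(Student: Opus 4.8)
The plan is to reduce from \textsc{Red-Blue Dominating Set}, following the strategy suggested by the surrounding text. Let $(G,k)$ with $G=(R\cup B,E)$ be an instance. First I would build a split graph $G^\star$ by turning $B$ into a clique $K$, keeping $R$ as an independent set $I$, and retaining all edges of $E$ between $B$ and $R$; to make the analysis clean I would also pad with a few auxiliary vertices so that the relevant split partition is both minimal and maximal and so that $\alpha$, $\omega=\chi$ are well controlled (e.g.\ attach a private pendant vertex in $I$ to every vertex of $K$ to force minimality, and add isolated-in-$K$ gadgets to force maximality). The key structural observation is that contracting an edge of $G^\star$ that joins a vertex $r\in I$ to a vertex $b\in K$ merges $r$ into the clique, i.e.\ it ``covers'' $r$; and that contracting an edge inside $K$ merely shrinks the clique, which is the behaviour exploited in the proof of Theorem~\ref{t-splitalpha}. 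So a budget of $k$ contractions that drops $\alpha$ (respectively $\chi=\omega$) by the prescribed threshold $d$ will correspond exactly to a red-blue dominating set of size $\le k$.

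Concretely, for $\pi=\alpha$ I would set the threshold $d$ equal to $|R|$ (the number of independent-set vertices that come from $R$) minus a constant coming from the padding, so that $(G^\star,d,k)$ is a yes-instance of \textsc{Contraction Blocker}$(\alpha)$ iff every vertex of $R$ can be eliminated from the independent set using at most $k$ contractions. Since eliminating a vertex $r\in R$ from the independent set can only be done by contracting an edge incident to $r$, and since (by a small exchange argument analogous to the matching/exchange reasoning in Theorem~\ref{thm:trees} and the analysis in Theorem~\ref{t-splitalpha}) it is never worse to contract $r$ directly onto one of its neighbours in $K$, an optimal solution contracts $r$ onto some $b\in B$ adjacent to it; the set of such $b$'s is exactly a red-blue dominating set. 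For $\pi=\omega=\chi$ I would instead use the maximal split partition: now I make $R$ the clique side and $B$ the independent side (or, equivalently, complement the roles), choose $d$ so that the threshold forces the clique number to drop by the number of $R$-vertices, and observe that the only way to shrink the clique by contracting an edge is either to contract inside the clique (costing one contraction per unit of $\omega$, which is too expensive to meet the threshold within budget $k$ unless $k$ is large) or to contract a clique vertex onto an independent-set vertex adjacent to it — again yielding a red-blue dominating set. The two reductions are essentially the same construction viewed through complementation, which is legitimate since split graphs, being perfect and closed under edge contraction, give $\omega=\chi$ throughout.

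For the parameterized statement I would make the same reduction preserve the parameter: \textsc{Red-Blue Dominating Set} is $\W[1]$-complete when parameterized by $|B|-k$, so I would arrange the construction so that $d$ equals $|B|-k$ up to an additive constant (this is why the padding must be by a constant number of gadget vertices, not by a number depending on the instance). Then an \textsc{fpt} algorithm for \textsc{Contraction Blocker}$(\pi)$ on split graphs parameterized by $d$ would give one for \textsc{Red-Blue Dominating Set} parameterized by $|B|-k$, contradicting $\W[1]$-hardness. Membership in \NP\ is immediate because split graphs are closed under edge contraction and $\alpha,\omega,\chi$ are polynomial-time computable on (the chordal, hence perfect) split graphs, so the contraction sequence is a certificate.

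The main obstacle I expect is the correctness direction ``yes-instance of the blocker problem $\Rightarrow$ small red-blue dominating set'': I must show that an arbitrary optimal or feasible contraction sequence can be normalised into one that only contracts $I$-to-$K$ edges (in the $\alpha$ case) or only one of the two ``good'' types (in the $\omega$ case), without exceeding the budget or failing the threshold. This requires a careful exchange argument ruling out that mixing in clique-internal contractions, or contracting two incident $I$-to-$K$ edges, ever helps — exactly the kind of case analysis carried out in the proof of Theorem~\ref{t-splitalpha}, and the padding gadgets are what make these exchanges clean by pinning down the split partition and the exact values of $\alpha$ and $\omega$ after each operation. Getting the constants in the threshold and in the parameter to line up simultaneously for all three parameters is the fiddly part, but it is routine once the gadget is fixed.
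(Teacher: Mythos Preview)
Your overall plan (reduce from \textsc{Red-Blue Dominating Set}, build a split graph, exploit that this problem is $\W[1]$-hard parameterized by $|B|-k$) matches the paper's, but the concrete construction you describe does not work, for two reasons.

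\textbf{The $\alpha$ case.} You put $B$ on the clique side and $R$ on the independent side, set $d\approx |R|$, and keep the contraction budget equal to $k$. But a single contraction removes at most one vertex from the independent set, so with budget $k$ you can lower $\alpha$ by at most $k$; whenever $k<|R|$ your instance is a trivial no-instance, and the ``set of $b$'s used'' is not constrained to have size~$\le k$ in any event. The paper does it the other way round: $R$ becomes the clique, $B$ stays independent, and both the threshold and the budget are set to $|B|-k$. One contracts the vertices of $B\setminus D$ onto neighbours in the clique; what \emph{remains} of the independent side is the dominating set $D$ of size $k$, and minimality of the resulting split partition $(R,D)$ is exactly the statement that $D$ dominates $R$. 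This also makes the parameter line up immediately: $d=|B|-k$, no padding needed.

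\textbf{The $\omega=\chi$ case.} Your key claim, that contracting an edge between a clique vertex $r$ and an independent-set vertex $b$ shrinks the clique, is false: the merged vertex inherits all of $r$'s clique neighbours, so the clique stays the same size (and maximality of the partition is preserved). Hence only clique-internal contractions lower $\omega$, and your instance degenerates. The paper avoids this by taking the \emph{bipartite complement} of $G$, making $B$ the clique, and adjoining a universal vertex $x$; contracting a vertex of $B$ onto $x$ is then equivalent to deleting it from the clique, which is what makes the reduction go through. Again $d$ and the budget are both $|B|-k$, and the remaining clique $D\cup\{x\}$ being maximal translates (via the bipartite complement) to $D$ dominating $R$ in the original graph.

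So the missing idea is not the exchange argument you worry about, but getting the sides, the threshold and the budget right (all equal to $|B|-k$, with the dominating set being what survives rather than what is used), and, for $\omega$, the bipartite-complement-plus-universal-vertex trick.
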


\begin{proof}
The problem is in \NP\ for $\pi \in  \{\alpha,\chi,\omega\}$, as split graphs are closed under edge contraction and the three problems {\sc Clique}
{\sc Coloring} and {\sc Independent Set} are readily seen to be polynomial-time solvable on 
split graphs; hence, we can take the sequence of edge contractions as 
the certificate.
Recall that we reduce from {\sc Red Blue Dominating Set} in order to show \NP-hardness and \W[1]-hardness with parameter~$d$.

First consider $\pi=\alpha$.  
Let $G=(R\cup B,E)$ be a bipartite graph that together with an integer $k$ forms an instance of  {\sc Red-Blue Dominating Set}. 
We may assume without loss of generality that $k\leq |B|$. Moreover, we may assume that every vertex of~$R$ is adjacent to at least one vertex of $B$.
We add all possible edges between vertices in $R$. This yields a split graph $G^*$ with a split partition~$(R,B)$.
Because every vertex in $R$ is assumed to be adjacent to at least one vertex of $B$ in $G$, we find that $(R,B)$ is a minimal split partition of $G^*$.

Because  {\sc Red-Blue Dominating Set} problem is \NP-complete~\cite{GJ79} and \W[1]-complete when parameterized by $|B|-k$~\cite{GJY11},
it suffices to prove that $G$ has a red-blue dominating set of size at most~$k$ if and only if $(G^*,|B|-k)$ is a yes-instance of {\sc $(|B|-k)$-Contraction Blocker($\alpha$)}. We prove this claim below.   

First suppose that $G$ has a red-blue dominating set $D$ of size at most~$k$. Because $k\leq |B|$, we may assume without loss of generality that $|D|=k$ (otherwise we would just add some vertices from $B\setminus D$ to $D$).

In $G^*$ we contract every $u\in B\setminus D$ onto a neighbour in~$R$. In this way we $(|B|-k)$-contracted~$G^*$ into a graph $G'$. 
Note that $G'$ is a split graph that has a split partition $(R,D)$. Because 
every vertex in $R$ is adjacent to at least one vertex of $D$ in $G$ by definition of $D$, it is adjacent to at least one vertex of $D$ in $G^*$. 
The latter statement is still true for $G'$, as contracting an edge incident to a vertex $u\in B$ is equivalent to deleting $u$.
Hence, $(R,D)$ is a minimal split partition of $G'$, so $\alpha(G')=|D|$. Because $(R,B)$ is a minimal split partition of $G^*$, we have $\alpha(G^*)=|B|$.
This means that $\alpha(G')=|D|=|B|-(|B|-|D|)=\alpha(G^*)-(|B|-k)$.
We conclude that $(G^*,|B|-k)$ is a yes-instance of {\sc $(|B|-k)$-Contraction Blocker($\alpha$)}.

Now suppose that $(G^*,|B|-k)$ is a yes-instance of {\sc $(|B|-k)$-Blocker($\alpha$)}, that is, $G^*$ can be $(|B|-k)$-contracted into a graph $G'$ such that $\alpha(G')\leq \alpha(G^*)-(|B|-k)$.
Recall that $\alpha(G^*)=|B|$. Hence, $\alpha(G')\leq k$. Let $p$ be the number of contractions of edges with one end-vertex in $B$.
Note that any such contraction decreases the size of the independent set $B$ by exactly one.
If $p<|B|-k$, then $G'$ contains an independent set of size $|B|-p>k$, which would mean that $\alpha(G')>k$, a contradiction.
Hence, $p\geq |B|-k$, which implies that $p=|B|-k$ as we performed no more than $|B|-k$ contractions in total. 
Let $D$ denote the independent set obtained from $B$ after all edge contractions. Then we find that 
$k= |B|-(|B|-k)=|B|-p=|D|\leq \alpha(G')
\leq \alpha(G^*)-(|B|-k)
=|B|-(|B|-k)
=k$.
Hence, $|D|=\alpha(G')$, which means that $(D,R)$ is a minimal split partition of~$G'$.
This means that every vertex of $R$ is adjacent to at least one vertex of $D$ in~$G'$. Because all our contractions were performed on edges with one end-vertex in~$B$, we have only removed vertices from $G^*$, that is, $G'$ is an induced subgraph of $G^*$. Hence, every vertex of $R$ is adjacent to at least one vertex of $D$ in~$G'$.
Consequently, $D$ is a red-blue dominating set of $G$ with size $|D|=k$.

\medskip
\noindent
Now consider $\pi=\chi$.
Let $G=(R\cup B,E)$ be a bipartite graph that together with an integer $k$ forms an instance of  {\sc Red-Blue Dominating Set}. 
We may assume without loss of generality that $k\leq |B|$. Moreover, we may assume that every vertex of~$R$ is adjacent to at least one vertex of $B$.

We take the bipartite complement of $G$, that is, we construct the bipartite graph with partition classes $R$ and $B$, and we add an edge between any two vertices $u\in R$ and $v\in B$ if and only if $uv\notin E$. Then, we add all possible edges between vertices in $B$. Finally we add a new vertex
$x$ to the graph. We make $x$  adjacent to all vertices of $B\cup R$.
This yields a split graph $G^*$ with a split partition $(B\cup \{x\},R)$.
Because every vertex in $R$ is assumed to be adjacent to at least one vertex of $B$ in $G$, it  is non-adjacent to at least one vertex of $B$ in $G^*$. 
Hence, $(B\cup\{x\},R)$ is a maximal split partition of $G^*$ (we will explain the role of vertex $x$ in our construction later).
Similarly to the previous case, we claim that $G$ has a red-blue dominating set of size at most~$k$ if and only if $(G^*,|B|-k)$ is a yes-instance of {\sc $(|B|-k)$-Contraction Blocker($\chi$)}. We prove this claim below.

First suppose that $G$ has a red-blue dominating set $D$ of size at most~$k$. Because $k\leq |B|$, we may assume without loss of generality that $|D|=k$ (otherwise we would just add some vertices from $B\setminus D$ to $D$).

In $G^*$ we contract every $u\in B\setminus D$ onto $x$. In this way we $(|B|-k)$-contracted~$G^*$ into a graph
$G'$. Note that $G'$ is a split graph that has a split partition $(D\cup \{x\},R)$. Because every vertex in $R$ is adjacent to at least one vertex of $D$ in $G$ by definition of $D$, it is non-adjacent to at least one vertex of $D$ in $G^*$. The latter statement is still true for $G'$, as no vertex of $D\cup R$ was involved in any of the edge contractions performed.
Hence, $(D\cup \{x\},R)$ is a maximal split partition of $G'$, so $\chi(G')=|D|+1$. Because $(B\cup \{x\},R)$ is a maximal split partition of $G^*$, we have $\chi(G^*)=|B|+1$.
This means that $\chi(G')=|D|+1=k+1=|B|+1+k+1-(|B|+1)=\chi(G^*)-(|B|-k)$.
We conclude that $(G^*,|B|-k)$ is a yes-instance of {\sc $(|B|-k)$-Contraction Blocker($\chi$)}.

Now suppose that $(G^*,|B|-k)$ is a yes-instance of {\sc $(|B|-k)$-Blocker($\chi$)}, that is, $G^*$ can be $(|B|-k)$-contracted to a graph $G'$ such that $\chi(G')\leq \chi(G^*)-(|B|-k)$.
Recall that $\chi(G^*)=|B|+1$. Hence, $\chi(G')\leq k+1$. Let $p$ be the number of contractions of edges between two vertices of $B\cup \{x\}$. 
Note that any such contraction decreases the size of the clique $B\cup \{x\}$ by exactly one.
If $p<|B|-k$, then $G'$ contains a clique of size $|B|+1-p>k+1$, which would mean that $\chi(G')>k+1$, a contradiction.
Hence, $p\geq |B|-k$, which implies that $p=|B|-k$ as we performed no more than $|B|-k$ contractions in total. 
Let $B'$ denote the clique obtained from $B\cup \{x\}$ after all edge contractions. Then we find that 
$$
\begin{array}{lcl}
k+1 &= &|B|+1-(|B|-k)\\[1mm]
&=&|B|+1-p\\[1mm]
&=&|B'|\\[1mm]
&\leq &\chi(G')\\[1mm]
&\leq &\chi(G^*)-(|B|-k)\\[1mm]
&=&|B|+1-(|B|-k)\\[1mm]
&=&k+1.
\end{array}
$$
Hence, $|B'|=\chi(G')$, which means that $(B',R)$ is a maximal split partition of~$G'$.
This means that no vertex of $R$ is adjacent to all vertices of $B'$ in $G'$. 

We may assume without loss of generality that $x\in B'$, as we can view any edge contraction of an edge between a vertex $u\in B$ and $x$ as a contraction of $u$ onto $x$.
Furthermore, suppose we performed a contraction of an edge $uu'$ with $u,u'\in B$, say we contracted $u$ onto $u'$. We change this by contracting $u$ onto $x$ instead.
Because $x$ is adjacent to all vertices of $B\cup R$ in $G$, we find that $x$ is adjacent to all vertices (except to itself) of $G'$ and of any intermediate graph that we obtained while contracting $G$ into $G'$.
Hence, contracting $u$ onto $x$ is equivalent to deleting $u$. As such, contracting $u$ onto $x$ does not lead to a vertex $v\in R$ becoming adjacent to all vertices of $B'$. 
Consequently, the size of a maximum clique in the modified graph is also equal to $|B'|=\chi(G')$. As we can do the same for any other contraction of an edge between two vertices in~$B$,
we may assume without loss of generality that every edge contraction is a contraction of a vertex of~$B$ onto~$x$. 

Let $D=B'\setminus \{x\}\subseteq B$.
As noted, contracting a vertex of~$B$ onto $x$ is the same as deleting such a vertex of~$B$ from the graph.
Hence, every vertex of~$D$ has exactly the same neighbours in $G'$ as it has in $G^*$.
Because every vertex in $R$ is adjacent to $x$ but not to
all vertices of $B'=D\cup \{x\}$, we find that every vertex in $R$ is non-adjacent to at least one vertex of~$D$ in $G'$, and consequently, in $G^*$. 
Because $x\in B'$ and $|B'|=k+1$, we find that $|D|=k$.
We conclude that $D$ is a 
red-blue dominating set of $G$ with size $|D|=k$.

\medskip
\noindent
Finally, consider $\pi=\omega$. As split graphs are perfect and closed under edge contractions, this case follows directly from the previous case where $\pi=\chi$.\qed
\end{proof}

Regarding the {\sc Deletion Blocker$(\pi)$} problem, for $\pi\in \{\alpha,\chi,\omega\}$,  we know from \cite{CWP11} that it is \NP-complete. In the same paper it was shown that if $d$ is fixed, all three problems become polynomially solvable.

\section{Interval Graphs}\label{sec:interval}

Let $G=(V,E)$ be an interval graph with $n$ vertices and $m$ edges that corresponds to a set of intervals $\mathcal{I}=\{I_1, I_2, \ldots, I_n\}$ on the real line. Let $V=\{v_1,\ldots,v_n\}$ be such that vertex $v_i$
corresponds to interval $I_i$ for $i=1,\ldots,n$. Note that the class of interval graphs is closed under edge contraction. Indeed, contracting an edge $v_iv_i$ corresponds to removing the intervals $I_i$ and $I_j$ and adding a new interval $I_{ij}= I_i \cup I_j$.
It is well known (see e.g.~\cite{FG65}) that $G$ has at most $n$ maximal cliques which can be linearly ordered in $O(n+m)$ time so that
the maximal cliques containing a vertex $v_i$ appear consecutively for $i=1,\ldots,n$.

We first prove a useful lemma for the class of $C_4$-free graphs, which contains the class of interval graphs as a proper subclass.

\begin{lemma}
\label{lem:interval1}
Let $G=(V,E)$ be a $C_4$-free graph and let $v_1v_2\in E$. Let $G|v_1v_2$ be the graph obtained after the contraction of $v_1v_2$ and let $v_{12}$ be the new vertex replacing $v_1$ and $v_2$. Then 
every maximal clique $K$ in $G|v_1v_2$ containing $v_{12}$ corresponds to a maximal clique $K'$ in $G$ and vice versa, such that
\begin{enumerate}
\item[(a)] either $|K|=|K'|$ and $K\setminus\{v_{12}\}=K'\setminus\{v_1\}$;
\item[(b)] or $|K|=|K'|$ and $K\setminus\{v_{12}\}=K'\setminus\{v_2\}$;
\item[(c)] or $|K|=|K'|-1$ and $K\setminus\{v_{12}\}=K'\setminus\{v_1,v_2\}$.
\end{enumerate}
Moreover, every other maximal clique in $G|v_1v_2$ is a maximal clique in $G$ and vice versa.
\end{lemma}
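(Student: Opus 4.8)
The plan is to analyze which cliques of $G$ get affected by the contraction, distinguishing cliques that avoid $\{v_1,v_2\}$ entirely from those that meet this set. The key structural fact I would exploit is that in a $C_4$-free graph, $v_1$ and $v_2$ \emph{cannot have two common neighbours that are non-adjacent}: if $u,w\in N(v_1)\cap N(v_2)$ with $uw\notin E$, then $v_1,u,v_2,w$ would induce a $C_4$. More precisely, the common neighbourhood $N(v_1)\cap N(v_2)$ together with the edge $v_1v_2$ has the property that $N(v_1)\cap N(v_2)$ must itself be a clique (again by $C_4$-freeness: two non-adjacent common neighbours give a $C_4$). This is what forces the clean correspondence.

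\textbf{Step 1: the unaffected cliques.} First I would observe that a maximal clique $K$ of $G$ that contains neither $v_1$ nor $v_2$ survives unchanged in $G|v_1v_2$: none of its vertices or adjacencies are touched, and it remains maximal because $v_{12}$ is adjacent to $v\in K$ only if $v$ was adjacent to $v_1$ or $v_2$, and if $v_{12}$ extended $K$ then (say) $v_1$ would be adjacent to all of $K$, contradicting maximality of $K$ in $G$ — here one has to be slightly careful and argue that actually $v_1$ \emph{or} $v_2$ being complete to $K$ already contradicts maximality, which is immediate. Conversely a maximal clique of $G|v_1v_2$ not containing $v_{12}$ lifts back to a maximal clique of $G$ by the same reasoning. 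This handles the ``moreover'' sentence.

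\textbf{Step 2: the affected cliques.} Now take a maximal clique $K$ of $G|v_1v_2$ with $v_{12}\in K$; write $S = K\setminus\{v_{12}\}$, so $S$ is a clique in $G$ and every vertex of $S$ is adjacent in $G$ to $v_1$ or to $v_2$. Partition $S = S_1 \cup S_2 \cup S_{12}$ where $S_1 = S\cap(N(v_1)\setminus N(v_2))$, $S_2 = S\cap(N(v_2)\setminus N(v_1))$, and $S_{12}=S\cap N(v_1)\cap N(v_2)$. The crucial claim is that $S_1=\emptyset$ or $S_2=\emptyset$: if $a\in S_1$ and $b\in S_2$ then $a,b$ are adjacent (both in the clique $S$), $a\sim v_1$, $b\sim v_2$, $a\not\sim v_2$, $b\not\sim v_1$ — but then $v_1,a,b,v_2$ induce a $C_4$, contradiction. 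So assume WLOG $S_2=\emptyset$. Then $S\subseteq N(v_1)$, so $S\cup\{v_1\}$ is a clique in $G$. Extend it to a maximal clique $K'$ of $G$. I then want $K' = S\cup\{v_1\}$ or $K'=S\cup\{v_1,v_2\}$ — i.e. $K'$ adds at most $v_2$ to $S\cup\{v_1\}$. Suppose $w\in K'\setminus(S\cup\{v_1\})$ with $w\neq v_2$. Then $w\sim v_1$ and $w$ is adjacent to all of $S$, hence in $G|v_1v_2$ the vertex $w$ is adjacent to $v_{12}$ and to all of $S$, so $K\cup\{w\}$ is a clique in $G|v_1v_2$, contradicting maximality of $K$ — \emph{unless} $w=v_2$ (which was excluded) or $w$ got merged, but $w\neq v_1,v_2$ so it is a genuine vertex of $G|v_1v_2$. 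Hence $K'\setminus(S\cup\{v_1\})\subseteq\{v_2\}$, giving cases (a) (when $v_2\notin K'$, so $|K'|=|S|+1=|K|$, $K'\setminus\{v_1\}=S=K\setminus\{v_{12}\}$) and (c) (when $v_2\in K'$, so $|K'|=|S|+2=|K|+1$, $K'\setminus\{v_1,v_2\}=S$). Symmetrically, $S_1=\emptyset$ yields (b) or (c). Finally I must check $K'$ is genuinely maximal — it is, by construction as a maximal extension — and conversely that every maximal clique $K'$ of $G$ meeting $\{v_1,v_2\}$ produces such a $K$: contract and take $K = (K'\setminus\{v_1,v_2\})\cup\{v_{12}\}$, which is a clique containing $v_{12}$; argue it is maximal in $G|v_1v_2$ by pulling any extending vertex back to $G$ and contradicting maximality of $K'$, again using that a vertex adjacent to $v_{12}$ and to $K'\setminus\{v_1,v_2\}$ is adjacent to one of $v_1,v_2$ and then to the other by $C_4$-freeness when needed.

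\textbf{Main obstacle.} The delicate point — and the only place $C_4$-freeness is really used — is Step 2's claim that $S_1$ and $S_2$ cannot both be nonempty, together with the analogous bookkeeping when pulling cliques back from $G|v_1v_2$ to $G$: one must repeatedly invoke ``a common neighbour of $v_1$ and $v_2$ that is adjacent to a private neighbour of the other creates a $C_4$.'' I would state and prove this $C_4$-avoidance observation once as a sub-claim at the very start, then the three-case split and both directions of the correspondence follow by routine (if slightly tedious) maximality arguments. The size relations $|K|=|K'|$ versus $|K|=|K'|-1$ then just record whether the surviving maximal clique of $G$ happened to contain one or both of $v_1,v_2$.
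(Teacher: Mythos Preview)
Your approach is essentially the same as the paper's: partition the neighbours of $v_{12}$ into $A_1=N(v_1)\setminus N(v_2)$, $A_2=N(v_2)\setminus N(v_1)$, $A_3=N(v_1)\cap N(v_2)$ and use $C_4$-freeness to show that a clique through $v_{12}$ cannot meet both $A_1$ and $A_2$, which forces the three outcomes (a)--(c). Your Steps~1 and~2 are correct and in fact more complete than the paper's proof, which only treats the forward direction explicitly.

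One correction to your proof plan: the claim that $N(v_1)\cap N(v_2)$ must itself be a clique is \emph{false}. Two non-adjacent common neighbours $u,w$ together with $v_1,v_2$ yield a diamond (five edges on four vertices), not an induced $C_4$, precisely because $v_1v_2\in E$. Fortunately you never actually use this; the observation you do use in Step~2 --- that a vertex in $A_1$ and a vertex in $A_2$ cannot be adjacent (else $v_1,a,b,v_2$ is an induced $C_4$) --- is the correct one and is exactly what the paper invokes. Also note that in Step~1 the assertion ``$v_1$ or $v_2$ is complete to $K$'' is not ``immediate'' but requires this same $C_4$-argument; your ``Main obstacle'' paragraph correctly anticipates isolating it as a sub-claim up front.
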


\begin{proof}
Let $A_1$ (resp. $A_2$) be the set of neighbours of $v_1$ (resp. $v_2$) that are nonadjacent to $v_2$ (resp. $v_1$). Let $A_3$ be the set of vertices adjacent to both $v_1$ and $v_2$.  Now consider a clique~$K$ in $G|v_1v_2$ containing $v_{12}$. As $G$ is $C_4$-free, we find that $G$, and hence $G|v_1v_2$, contains no edge
between a vertex in $A_1$ and a vertex in $A_2$. Therefore we are in exactly one of the following cases: 
\begin{itemize}
\item [(i)] $K$ contains one or more vertices from both $A_1$ and $A_3$ but no vertices from $A_2$; 
\item [(ii)] $K$ contains one or more vertices from both $A_2$ and $A_3$ but no vertices from $A_1$;
\item [(iii)] $K$ contains one or more vertices from $A_1$ but no vertices from $A_2$ and $A_3$; 
\item [(iv)] $K$ contains one or more vertices from $A_2$ but no vertices from $A_1$ and $A_3$; 
\item [(v)] $K$ contains one or more vertices from $A_3$ but no vertices from $A_1$ and $A_2$. 
\end{itemize}

Suppose we are in case (i). 
Since $K$ is maximal, it follows that $(K\setminus \{v_{12}\})\cup \{v_1\}$ is a maximal clique in $G$ and thus outcome (a) holds. 
By symmetry, if we are in case (ii), outcome (b) holds.
Assume now that case (iii) occurs.
Since $K$ is maximal, it follows that $(K\setminus \{v_{12}\})\cup \{v_1\}$ is a maximal clique in $G$ and thus outcome (a) holds. By symmetry, we conclude that if case (iv) occurs, outcome (b) holds.
Finally, suppose that we are in case (v).
Then $(K\setminus \{v_{12}\})\cup \{v_1,v_2\}$ is a maximal clique in~$G$ and thus outcome (c) holds.
\qed
\end{proof}

Lemma \ref{lem:interval1} tells us that if we contract an edge $e$ in a $C_4$-free graph, every maximal clique containing both end-vertices of $e$ will have its size reduced by exactly one in the resulting graph, and moreover, the size of every other maximal clique of the original graph will remain the same and we do not create any new maximal clique.

\begin{lemma}
\label{lem:interval2}
Let $G=(V,E)$ be an interval graph and let $d\geq 0$ be an integer. Let 
$K^1$ be the first maximal clique of size strictly greater than $\omega(G)-d$ starting left on the real line, and let $I_x,I_y$ be the intervals with the rightmost right endpoints among all intervals corresponding to the vertices in $K^1$. Let $B\subseteq E$ be a set of edges such that the graph $G'$ obtained from $G$ after having contracted all edges from $B$ satisfies $\omega(G')\leq \omega(G)-d$. Then there exists a set $B'\subseteq E$ such that $B'=(B\setminus \{v_1v_2\})\cup \{xy\}$, where $v_1,v_2\in K^1$ and such that the graph $G''$ obtained from $G$ after contracting all edges in $B'$ satisfies $\omega(G'')\leq \omega(G)-d$. 
\end{lemma}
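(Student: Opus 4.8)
\textbf{Proof plan for Lemma~\ref{lem:interval2}.}
The plan is to perform an exchange argument: take an optimal (or any valid) set $B$ of contracted edges and show we may replace one of its edges inside $K^1$ by the specific edge $xy$ without destroying the property $\omega(G')\le \omega(G)-d$. First I would observe that, since $|K^1|>\omega(G)-d$, the set $B$ must contain at least one edge with both end-vertices in $K^1$; otherwise, by Lemma~\ref{lem:interval1}, the clique $K^1$ would survive the contractions with size $\ge |K^1| > \omega(G)-d$ (contractions of edges disjoint from $K^1$, or with at most one end in $K^1$, do not decrease its size, and no new larger clique is created). Fix such an edge $v_1v_2\in B$ with $v_1,v_2\in K^1$, and set $B'=(B\setminus\{v_1v_2\})\cup\{xy\}$.

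The heart of the argument is to show $\omega(G'')\le \omega(G)-d$ where $G''$ is obtained from $G$ by contracting all edges of $B'$. I would argue this by comparing, maximal clique by maximal clique of $G$, the effect of $B$ versus $B'$, using the consecutive-cliques structure of interval graphs and the crucial choice of $I_x,I_y$ (the two rightmost-ending intervals of $K^1$) together with the choice of $K^1$ as the \emph{leftmost} maximal clique of size $>\omega(G)-d$. The key geometric claim is: every maximal clique $K$ of $G$ that contains both $v_1$ and $v_2$ and has $|K|>\omega(G)-d$ must also contain both $x$ and $y$. Indeed, any maximal clique containing $v_1,v_2$ corresponds to a point of the real line lying in $I_{v_1}\cap I_{v_2}$; if this point lies to the left of $K^1$'s location then the clique has size $\le \omega(G)-d$ by minimality of $K^1$; if it lies at or to the right of $K^1$'s location, then since $I_x,I_y$ have the rightmost right endpoints in $K^1$ and already contain $K^1$'s location, they also contain this point (their left endpoints are $\le$ $K^1$'s location because $x,y\in K^1$, and their right endpoints are maximal), so $x,y$ are in that clique. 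Hence contracting $xy$ reduces the size of every ``large'' clique through $v_1,v_2$ that $B$ was relying on, so replacing $v_1v_2$ by $xy$ in $B$ only helps on those cliques and is neutral (by Lemma~\ref{lem:interval1}) on cliques not containing both $v_1$ and $v_2$; more care is needed for cliques containing exactly one of $v_1,v_2$, but there a $C_4$-free argument plus the fact that such contractions never increase any clique size finishes it. I would also need to handle the bookkeeping that after contraction the ``new'' vertices ($v_{12}$, $xy$) do not create a large clique, which is exactly the ``no new maximal clique'' conclusion of Lemma~\ref{lem:interval1}.

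The main obstacle I anticipate is the case analysis for maximal cliques of $G$ that contain exactly one of $\{v_1,v_2\}$ (or exactly one of $\{x,y\}$): for these, contracting $v_1v_2$ does not shrink them, so $B$ must have been handling them via some \emph{other} edge, and I must verify that this other edge is still present in $B'$ and still does the job — i.e.\ that removing $v_1v_2$ from $B$ did not silently change the structure relied upon elsewhere. Formally I would phrase the whole argument in terms of: for a set $C$ of contracted edges, a maximal clique $K$ of $G$ survives with size $\ge t$ in the contracted graph if and only if the edges of $C$ with both ends in $K$ form a forest on $\le |K|-t$ vertices of $K$ (so its ``deficiency'' on $K$ is the number of $K$-vertices covered by a spanning forest of $C[K]$), and then check that passing from $B$ to $B'$ does not decrease the deficiency on any maximal clique with $|K|>\omega(G)-d$. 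This reduces everything to the single geometric claim above, and the interval representation (intervals of $K^1$ all share the point $\ell(K^1)$, $I_x,I_y$ extend furthest right) is what makes that claim true. I expect no heavy computation — the difficulty is organizing the clique-by-clique comparison cleanly.
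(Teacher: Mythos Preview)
Your approach is essentially the same as the paper's: the key observation in both is that $x$ and $y$ lie in every maximal clique of size greater than $\omega(G)-d$ that contains at least two vertices of $K^1$ (your geometric claim is precisely this, specialised to the pair $v_1,v_2\in K^1$), and then Lemma~\ref{lem:interval1} guarantees that the swap creates no larger clique. The paper's proof is in fact only two sentences long, so your plan is considerably more careful about the bookkeeping than the original, but the core exchange argument is identical.
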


\begin{proof}
We first note that,
by their definition, $x$ and $y$ are
contained in all maximal cliques of size strictly greater than $\omega(G)-d$
that contain at least two vertices of  $K^1$.
Moreover, contracting the edge $xy$ instead of another edge $v_1v_2$ of $K^1$ does not create cliques of larger size, due to Lemma~\ref{lem:interval1}.
\qed
\end{proof}

Lemma \ref{lem:interval2} tells us that if for an interval graph the answer of  the {\sc Contraction Blocker($\omega$)} problem is {\tt yes}, then there always exists a set $B\subseteq E$ with $|B|\leq k$ such that $\omega(H)\leq \omega(G)-d$, where $H$ is the graph obtained from $G$ by contracting the edges of $B$, and $xy\in B$ where $x,y$ belong to the first maximal clique $K$ in $G$
with size strictly greater than $\omega(G)-d$ starting left on the real line and such that $I_x,I_y$ have the rightmost right endpoints among all intervals corresponding to vertices in $K$. Since interval graphs are closed under edge contractions, we can use this property recursively to obtain a polynomial-time algorithm for {\sc Contraction Blocker($\pi$)}, with $\pi\in\{\chi,\omega\}$, in interval graphs.

\begin{theorem}\label{thm:interval}
Let $\pi\in \{\chi,\omega\}$. Then {\sc Contraction Blocker($\pi$)} can be solved in polynomial time on interval graphs.
\end{theorem}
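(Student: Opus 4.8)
The plan is to design a greedy algorithm that repeatedly applies Lemma~\ref{lem:interval2}. Since interval graphs are perfect and closed under edge contraction, it suffices to handle $\pi=\omega$; the case $\pi=\chi$ then follows because $\chi=\omega$ on every induced (indeed, contracted) subgraph of $G$, so $\chi(G')\le\chi(G)-d$ is equivalent to $\omega(G')\le\omega(G)-d$. So fix an instance $(G,d,k)$ with $G$ an interval graph given together with an interval representation, and let $t=\omega(G)-d$ be the target clique number. First I would compute, in $O(n+m)$ time, the linearly ordered list of maximal cliques of $G$ using the standard consecutive-cliques property, so that throughout the algorithm we can cheaply locate ``the leftmost maximal clique of size $>t$''. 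The algorithm then does the following: if $\omega(G)\le t$ it answers \texttt{yes} (with $k$ unchanged); if $t\le 0$ it answers \texttt{no}; otherwise it finds the leftmost maximal clique $K$ of size $>t$, picks the two vertices $x,y\in K$ whose intervals have the rightmost right endpoints, contracts the edge $xy$, decrements $k$ by one, and recurses on the resulting interval graph. It answers \texttt{yes} as soon as the clique number drops to at most $t$ within the allotted budget, and \texttt{no} if $k$ hits $0$ first.

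The correctness argument is an exchange argument packaged as an induction on $k$. Suppose $(G,d,k)$ is a \texttt{yes}-instance, witnessed by an edge set $B$ with $|B|\le k$ whose contraction yields $G'$ with $\omega(G')\le t$. If $\omega(G)\le t$ already, $k=0$ suffices and we are done. Otherwise let $K$ be the leftmost maximal clique of size $>t$ and $x,y$ its two rightmost-endpoint vertices as in Lemma~\ref{lem:interval2}. Because $K$ must lose at least one vertex, $B$ contains some edge $v_1v_2$ with $v_1,v_2\in K$; Lemma~\ref{lem:interval2} produces $B'=(B\setminus\{v_1v_2\})\cup\{xy\}$, still of size at most $k$, whose contraction yields $G''$ with $\omega(G'')\le t$. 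Now contract $xy$ first: by Lemma~\ref{lem:interval1} this only shrinks maximal cliques that contained both $x$ and $y$ (each by exactly one) and creates no new maximal clique, so the resulting interval graph $G_1=G\,|\,xy$ can be transformed by the remaining $|B'|-1\le k-1$ contractions into the same $G''$ with $\omega(G'')\le t$. Hence $(G_1,d_1,k-1)$ is a \texttt{yes}-instance where $d_1=\omega(G_1)-t$, and by the induction hypothesis the algorithm accepts it; therefore the algorithm accepts $(G,d,k)$. The converse direction is immediate: the algorithm only ever performs contractions, so whenever it answers \texttt{yes} it has exhibited at most $k$ contractions reducing $\omega$ to at most $t$.

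For the running time: each iteration contracts one edge, so there are at most $\min\{k,n\}\le n$ iterations. Within an iteration we must recompute (or update) the interval representation of the contracted graph — replacing $I_x,I_y$ by $I_x\cup I_y$ — and then the ordered list of maximal cliques, together with the leftmost maximal clique of size exceeding $t$; all of this is doable in $O(n+m)$ time, and $m$ does not grow under contraction. Hence the total running time is $O(n(n+m))$, which is polynomial, and the theorem follows. (One can shave this further with incremental updates to the clique list, but a polynomial bound is all that is claimed.)

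I expect the main obstacle to lie not in the algorithm itself but in making the exchange argument fully watertight — specifically, in checking that after the prescribed first contraction of $xy$ the \emph{same} sequence of remaining contractions from $B'$ is still legal (each edge of $B'\setminus\{xy\}$ still being an edge after identifications) and still drives $\omega$ down to $t$. This is exactly what Lemma~\ref{lem:interval1} is engineered to guarantee, since it certifies that contracting $xy$ does not merge or destroy maximal cliques other than in the controlled way described, so the clique structure relevant to the remaining contractions is preserved up to the obvious identification of $x$ and $y$ with $v_{12}$; nonetheless this is the step that requires the most care to state precisely. A secondary subtlety is the degenerate case where $K$ has size exactly $t+1$ and a single well-chosen contraction might have to serve several overlapping cliques at once — but this, too, is precisely the content of Lemma~\ref{lem:interval2}, namely that $x$ and $y$ lie in \emph{every} maximal clique of size $>t$ meeting $K$ in at least two vertices, so choosing $xy$ is never worse than any other choice inside $K$.
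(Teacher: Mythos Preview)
Your proposal is correct and follows essentially the same approach as the paper: reduce to $\pi=\omega$ via perfection and closure under contraction, then greedily contract the edge $xy$ in the leftmost oversized maximal clique whose intervals have the two rightmost right endpoints, iterating and appealing to Lemmas~\ref{lem:interval1} and~\ref{lem:interval2} for correctness. Your write-up is in fact slightly more explicit than the paper's --- you spell out the exchange argument as an induction on $k$ and correctly track the fixed target $t=\omega(G)-d$ rather than $d$ itself --- and your running-time bound of $O(n(n+m))$ is a bit tighter than the paper's $O(nd(n+m))$, but the algorithm and the justification are the same.
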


\begin{proof}
Since interval graphs are perfect and closed under edge contractions, we may assume without loss of generality that $\pi=\omega$. Let $G=(V,E)$ be an interval graph and let $d\geq 0$ be an integer. Our algorithm goes as follows. Let $K^1$ be the first maximal clique of size strictly greater than $\omega(G)-d$ starting left on the real line. By Lemma \ref{lem:interval2}, we know that if there exists a solution, then there exists one in which we contract the edge $xy$ where $x,y\in K^1$ are such that the corresponding intervals $I_x,I_y$ have the rightmost right endpoints among all intervals corresponding to vertices in $K^1$. So we contract the edge $xy$. Since the resulting graph is still an interval graph, we may repeat our procedure. We consider again the first maximal clique of size strictly greater than $\omega(G)-d$ starting left on the real line and contract the edge whose end-vertices correspond to the intervals with the rightmost right endpoints among all intervals corresponding to vertices in that clique. We continue like this until there is no more maximal clique of size strictly greater than $\omega(G)-d$ in the graph.

The correctness of our algorithm follows from Lemmas \ref{lem:interval1} and \ref{lem:interval2}. Indeed, by Lemma \ref{lem:interval1} we know that our choice of the edges that we contract is such that at each step there is at least one maximal clique of size strictly greater than $\omega(G)-d$ whose size is reduced by one and furthermore, we do not create any new maximal clique. Since an interval graph on $n$ vertices contains at most $n$ maximal cliques, it follows that our algorithms stop after at most $nd$ steps.  Since all maximal cliques of an interval graph can be found in time $O(n+m)$, where $m$ is the number of edges, we then find that our algorithm runs in time $O(nd(n+m))$. Finally, Lemma \ref{lem:interval2} ensures that the set of edges we choose to contract has minimum size. 
\qed
\end{proof}

The proof of Theorem~\ref{thm:interval} can be readily adapted to show polynomial-time solvability of the  {\sc Deletion Blocker($\pi$)} problem on interval graphs for $\pi\in \{\chi,\omega\}$.

\begin{theorem}
\label{thm:interval2}
Let $\pi\in \{\chi,\omega\}$. Then {\sc Deletion Blocker($\pi$)} can be solved in polynomial time on interval graphs.
\end{theorem}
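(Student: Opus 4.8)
The plan is to mirror the proof of Theorem~\ref{thm:interval}, replacing edge contractions by vertex deletions throughout. As before, since interval graphs are perfect and closed under vertex deletion, it suffices to treat $\pi=\omega$. First I would establish the deletion analogue of Lemma~\ref{lem:interval1}: if $G$ is an interval graph (or more generally $C_4$-free) and we delete a vertex $v$, then every maximal clique $K$ of $G$ containing $v$ has its size reduced by exactly one, i.e.\ $K\setminus\{v\}$ becomes a clique (possibly no longer maximal, but it is contained in a maximal clique of $G-v$ of size $|K|-1$ or the same size), while maximal cliques not containing $v$ survive and no maximal clique of $G-v$ is larger than some maximal clique of $G$. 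In fact this direction is easier than the contraction case, because deleting a vertex cannot create new cliques at all.

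Next I would prove the deletion analogue of Lemma~\ref{lem:interval2}: let $K^1$ be the first (leftmost on the real line) maximal clique of size strictly greater than $\omega(G)-d$, and let $x$ be the vertex of $K^1$ whose interval $I_x$ has the rightmost right endpoint among all intervals of vertices in $K^1$. Then $x$ is contained in every maximal clique of size $>\omega(G)-d$ that meets $K^1$ in at least one vertex whose interval reaches that far; more precisely, any exchange argument shows that if some deletion set $B$ works, then $(B\setminus\{v\})\cup\{x\}$ works too for a suitable $v\in B\cap K^1$, since deleting $x$ kills at least as many ``large'' cliques around that point as deleting any other vertex of $K^1$, by the consecutiveness of the maximal cliques containing a given vertex. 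This greedy-exchange step is the only place requiring care: I must argue that deleting $x$ does not destroy the optimality, using the linear order on maximal cliques of an interval graph so that the cliques containing $x$ form a contiguous block extending as far right as possible among vertices of $K^1$.

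Given these two lemmas, the algorithm is the obvious greedy one: repeatedly locate the leftmost maximal clique of size $>\omega(G)-d$, delete the vertex of that clique whose interval has the rightmost right endpoint, and iterate on the resulting (still interval) graph until no maximal clique of size $>\omega(G)-d$ remains; accept iff the number of deletions used is at most $k$. Correctness follows from the exchange lemma exactly as in Theorem~\ref{thm:interval}, and since an interval graph on $n$ vertices has at most $n$ maximal cliques, each of size at most $n$, the process terminates after at most $nd$ deletions, each step costing $O(n+m)$ to recompute the maximal-clique ordering, for an $O(nd(n+m))$ running time. The main obstacle, as noted, is the exchange argument in the deletion analogue of Lemma~\ref{lem:interval2}, but it is if anything simpler than in the contraction setting because vertex deletion never introduces new cliques; I expect the bulk of the work to be bookkeeping about which maximal cliques contain the chosen rightmost-endpoint vertex.
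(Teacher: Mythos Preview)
Your proposal is correct and matches the paper's intent exactly: the paper does not write out a separate proof but simply states that the argument for Theorem~\ref{thm:interval} ``can be readily adapted'' to the deletion setting, and what you outline is precisely that adaptation---same greedy sweep over the linearly ordered maximal cliques, same exchange argument, with the single rightmost-endpoint vertex playing the role that the rightmost-endpoint \emph{pair} played for contractions. Your observation that the deletion analogue of Lemma~\ref{lem:interval1} is trivial (vertex deletion never creates new cliques) and that the exchange step is, if anything, cleaner than in the contraction case is accurate.
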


We recall that for $\pi=\alpha$ the complexity of both problems is open for interval graphs.

\section{Chordal Graphs}\label{sec:chordal}

The following result shows that Theorem~\ref{thm:interval} cannot be generalized to chordal graphs.

\begin{theorem}\label{t-chordal}
For $\pi\in \{\chi,\omega\}$,  {\sc $1$-Contraction Blocker($\pi$)} is \NP-complete for chordal graphs.
\end{theorem}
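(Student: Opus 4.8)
The plan is to reduce from a suitable \NP-complete problem that already ``lives'' on structures close to chordal graphs, so that a single edge contraction can be made to correspond to a covering-type choice. Since we have just proven (Theorem~\ref{t-splitalpha2}) that {\sc Contraction Blocker}$(\pi)$ is \NP-complete on split graphs via a reduction from {\sc Red-Blue Dominating Set}, and split graphs are chordal, it is natural to try to engineer a gadget in which the threshold $d$ is forced to be $1$. The obstacle is that the split-graph reduction crucially uses a large value of $d$ (namely $d=|B|-k$), so a direct reuse is impossible; instead I would reduce from {\sc Set Cover} (equivalently, {\sc Hitting Set} or {\sc Red-Blue Dominating Set}) and build a chordal graph in which $\omega=\chi$ is attained by one large ``universe clique'' whose size exceeds that of every other maximal clique by exactly one, so that decreasing $\omega$ by $1$ amounts to hitting all maximal cliques of maximum size with contractions, and the clique structure is arranged so that a set of $k$ contractions achieving this corresponds exactly to a set cover of size $k$.

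Concretely, first I would set up membership in \NP: chordal graphs are closed under edge contraction and {\sc Coloring} (hence $\omega$, since chordal graphs are perfect) is polynomial-time solvable on them, so the sequence of contracted edges is a valid certificate. Then, given a {\sc Set Cover} instance with universe $U=\{e_1,\dots,e_n\}$ and sets $S_1,\dots,S_m\subseteq U$, I would introduce a clique $Q$ of size roughly $n + (\text{padding})$ playing the role of ``everything of maximum size'', vertices $u_1,\dots,u_n$ for the universe elements, vertices $s_1,\dots,s_m$ for the sets, with $u_i s_j$ an edge iff $e_i\in S_j$, and attach the $u_i$'s to $Q$ so that each $u_i$ together with part of $Q$ forms a maximal clique of the (unique) maximum size $\omega(G)$, while the $s_j$'s and any remaining maximal cliques stay strictly smaller. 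The key point to verify is chordality: one checks there is no induced $C_4$ or longer hole, which follows because the ``element'' vertices form an independent set, the ``set'' vertices form an independent set, and the only large clique is $Q$ — so any cycle of length $\geq 4$ has a chord through $Q$ or collapses by the bipartite-like incidence structure. Contracting an edge $u_i s_j$ (with $e_i\in S_j$) should simultaneously reduce the size of \emph{all} maximum maximal cliques through $u_i$, by Lemma~\ref{lem:interval1}-type reasoning (the graph can be arranged $C_4$-free near these vertices), and the design ensures $k$ such contractions suffice to bring $\omega$ down by one precisely when the corresponding sets cover $U$.

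For the correctness equivalence: if $\{S_j : j\in J\}$ with $|J|\le k$ covers $U$, then for each $e_i$ pick $j\in J$ with $e_i\in S_j$ and contract $u_i s_j$; after processing one representative per set this is at most $k$ contractions, each maximum maximal clique (which passes through some $u_i$) has been shrunk, so $\omega$ drops by $1$. Conversely, if $k$ contractions reduce $\omega$ by $1$, then every maximum maximal clique — in particular the one through each $u_i$ — must have been hit; a short argument (using that contractions among $Q$ or incident only to $Q$ are wasteful, and that $s_j$'s form an independent set so are only reachable via some $u_i$) shows one may assume each contraction is of the form $u_i s_j$, and the sets $S_j$ used then form a cover of size $\le k$. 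The main obstacle I anticipate is the bookkeeping that guarantees the ``universe clique'' and its element-extensions are the \emph{unique} maximum maximal cliques while keeping the whole graph chordal and keeping $d=1$ genuinely forced — this requires choosing the padding of $Q$ carefully and ruling out alternative cheap ways of lowering $\omega$, but it is routine once the incidence skeleton is fixed. Since chordal graphs are perfect and closed under contraction, the $\pi=\chi$ case follows immediately from the $\pi=\omega$ case.
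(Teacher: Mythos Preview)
Your overall strategy---membership in \NP\ via closure under contraction and polynomial-time computability of $\omega$ on chordal graphs, then a covering-type reduction so that hitting all maximum cliques with $k$ contractions encodes a combinatorial cover---is the right shape, and it is exactly the shape the paper uses. But the specific reduction you sketch does not work, and the paper's reduction differs in a crucial way.

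In your construction the maximum cliques sit one per \emph{universe element} $u_i$, and a contraction $u_is_j$ kills only the single maximum clique through that $u_i$. Hence to reduce $\omega$ by~1 you must touch all $n$ elements, and the number of contractions required is $n$ regardless of the size of the set cover; your sentence ``after processing one representative per set this is at most $k$ contractions'' is simply false---you have described $n$ contractions (one per $e_i$), not $k$. The budget $k$ never gets linked to $|J|$. A secondary problem: the bipartite incidence $\{u_i\}\!-\!\{s_j\}$ generically creates induced $C_4$'s (whenever two elements lie in two common sets), so chordality is not guaranteed by the argument you give.

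The paper fixes both issues at once by reducing from {\sc Vertex Cover} instead. It makes the vertex set $V$ of the input graph into a clique, adds a universal vertex $y$, and replaces each edge $e$ by a private clique $K_e$ of size $|V|$ adjacent to $y$ and to the two endpoints of $e$. Now the maximum cliques are $K_e\cup\{u,v,y\}$ (one per \emph{edge}), and the single contraction of $v$ onto $y$ reduces \emph{every} maximum clique that contains $v$---i.e.\ one contraction corresponds to selecting one vertex for the cover, and $k$ contractions suffice iff there is a vertex cover of size $k$. The reverse direction uses Lemma~\ref{lem:interval1} (no new maximum cliques under contraction in $C_4$-free graphs) to argue one may assume all contractions lie inside $V\cup\{y\}$. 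The resulting graph is easily seen to be chordal because $V\cup\{y\}$ is a clique and each $K_e$ hangs off it as a clique with clique-separator. To salvage your approach you would need to flip the roles so that the objects to be hit (the maximum cliques) correspond to the sets and a single contraction encodes choosing one element---i.e.\ a {\sc Hitting Set} rather than {\sc Set Cover} orientation---which is precisely what the Vertex Cover reduction does.
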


\begin{proof}
Since chordal graphs are perfect 
and closed under taking edge contractions,
we may assume without loss of generality that $\pi=\omega$. 
As {\sc Clique} is polynomial-time solvable on chordal graphs, this means that the problem is in \NP\ (take the sequence of edge contractions as the certificate).
We reduce from {\sc Vertex Cover}, which is well known to be \NP-complete (see~\cite{GJ79}). 

Let $G=(V,E)$ be a graph that together with an integer $k$ forms an instance of {\sc Vertex Cover}. 
From $G$ we construct a chordal graph $G'$ as follows. We introduce a new vertex $y$ not in $G$. We represent each edge $e$ of $G$ by a clique $K_e$ in $G'$ of size $|V|$ so that $K_e\cap K_f=\emptyset$ whenever $e\neq f$. We represent each vertex $v$ of $G$ by a vertex in $G'$ that we also denote by $v$. Then we let the vertex set of $G'$ be $V\cup \bigcup_{e \in E}K_e \cup \{y\}$. We add an edge between every vertex in $K_e$ and a vertex $v\in V$ if and only if $v$ is incident with $e$ in $G$. In $G'$ we let the vertices of $V$ form a clique. Finally, we add all edges between $y$ and any vertex in $V\cup \bigcup_{e \in E}K_e$. Note that the resulting graph $G'$ is indeed chordal. Note also that $\omega(G')=|V|+3$ (every maximum clique consists of $y$, the vertices of a clique $K_e$ and their two neighbours in $V$).

We claim that $G$ has a vertex cover of size at most~$k$ if and only if $G'$ can be $k$-contracted to a graph $H$ with $\omega(H)\leq \omega(G')-1$.
First suppose that $G$ has a vertex cover $U$ of size at most~$k$. For each vertex $v\in U$, we contract the corresponding vertex $v$ in $G'$ to $y$. As $|U|\leq k$, this means that we
$k$-contracted $G'$ into a graph $H$. Since $U$ is a vertex cover, we obtain $\omega(H)\leq |V|+2=\omega(G')-1$.

Now suppose that $G'$ can be $k$-contracted to a graph $H$ with $\omega(H)\leq \omega(G')-1$. Let $S$ be a corresponding sequence of edge contractions (so $|S|\leq k$ holds).
By Lemma~\ref{lem:interval1} and the fact that chordal graphs are closed under taking edge contractions,  we find that no contraction in $S$ results in a new maximum clique. Hence, as we need to reduce the size of each maximum clique $K_{uv}\cup \{u,v,y\}$ by at least~1, we may assume without loss of generality that each contraction in $S$ concerns an edge with both its end-vertices in $V\cup \{y\}$. We construct a set $U$ as follows. If $S$ contains the contraction of an edge $uy$ we select $u$. If $S$ contains the contraction of an edge $uv$, we select one of $u,v$ arbitrarily. Because each maximum clique $K_{uv}\cup \{u,v,y\}$ must be reduced, we find that $U\subseteq V$ is a vertex cover. By construction, $|U|\leq k$. This completes the proof.\qed
\end{proof}

Similar arguments as in the above proof can be readily used to prove the following result, which shows that Theorem~\ref{thm:interval2} cannot be generalized to chordal graphs.

\begin{theorem}\label{t-chordal2}
For $\pi\in \{\chi,\omega\}$,  {\sc $1$-Deletion Blocker($\pi$)} is \NP-complete for chordal graphs.
\end{theorem}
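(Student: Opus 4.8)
The plan is to mimic the reduction from {\sc Vertex Cover} used in the proof of Theorem~\ref{t-chordal}, but now deleting vertices instead of contracting edges. Membership in \NP\ is immediate: chordal graphs are closed under vertex deletion and {\sc Clique} is polynomial-time solvable on chordal graphs, so the sequence of at most $k$ deleted vertices is a valid certificate; and since chordal graphs are perfect we have $\omega=\chi$, so it again suffices to treat $\pi=\omega$. First I would take an instance $(G,k)$ of {\sc Vertex Cover} with $G=(V,E)$, and build the same auxiliary chordal graph $G'$: add a universal-to-$V\cup\bigcup_e K_e$ vertex $y$, represent each edge $e=uv$ by a fresh clique $K_e$ of size $|V|$ with every vertex of $K_e$ adjacent exactly to $u$, $v$ (and to $y$), make $V$ a clique, and keep the $K_e$'s pairwise disjoint and nonadjacent. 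As before, $G'$ is chordal and $\omega(G')=|V|+3$, every maximum clique being of the form $K_{uv}\cup\{u,v,y\}$.

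Next I would prove the equivalence: $G$ has a vertex cover of size at most $k$ iff $G'$ can be $k$-vertex-deleted to a graph $H$ with $\omega(H)\le\omega(G')-1$. For the forward direction, given a vertex cover $U\subseteq V$ with $|U|\le k$, delete exactly the vertices of $U$ from $G'$; since every edge $uv\in E$ has $u\in U$ or $v\in U$, each maximum clique $K_{uv}\cup\{u,v,y\}$ loses at least one vertex, and no new clique of size $\ge|V|+3$ can appear (deletion only removes vertices), so $\omega(H)\le|V|+2=\omega(G')-1$. For the converse, suppose a set $W$ with $|W|\le k$ satisfies $\omega(G'-W)\le\omega(G')-1$. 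The key observation is that deleting a vertex from inside a clique $K_e$, or deleting $y$, is ``wasteful'': I would argue that we may assume without loss of generality that $W\subseteq V$. Indeed, each clique $K_e$ together with its two neighbours in $V$ and with $y$ forms a maximum clique, and there are $|E|$ such cliques sharing only the vertex $y$ and the vertices of $V$; deleting a single $K_e$-vertex kills only that one clique, while every clique $K_{uv}\cup\{u,v,y\}$ must be hit, so with only $k\le|V|$ deletions one is forced to hit these cliques through the shared part $V\cup\{y\}$ — and deleting $y$ reduces all of them by one but can be simulated for the purpose of the argument, or simply ruled out by noting that if $y\in W$ we can redistribute. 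More cleanly: if $W$ contains a vertex of some $K_e$, that deletion hits at most the single clique associated with $e$, hence could be replaced by deleting one of $e$'s endpoints in $V$, which hits at least that clique; and if $y\in W$, then since $\omega(G'-W)\le|V|+2$ we still need, for every edge $uv\notin$ (edges covered by $W\cap V$), the clique $K_{uv}\cup\{u,v\}$ of size $|V|+2$ to be reduced, forcing $W$ to effectively contain a vertex cover anyway. Making this redistribution argument precise — showing we lose nothing by taking $W\subseteq V$ — is the main obstacle; once done, $W$ is directly a vertex cover of $G$ of size $\le k$, since every clique $K_{uv}\cup\{u,v,y\}$ being reduced forces $W$ to contain $u$ or $v$.

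The only genuinely delicate point, as flagged, is the normalisation step ruling out deletions of $K_e$-vertices and of $y$; I expect the cleanest route is an exchange argument: take a solution $W$ with $|W|\le k$ minimising $|W\cap(\bigcup_e K_e\cup\{y\})|$, and show that if this quantity is nonzero we can swap a ``bad'' vertex for an endpoint of the corresponding edge (or, for $y$, for an arbitrary endpoint of each still-uncovered edge, using $|W|\le k\le |V|$ to afford it) without increasing $|W|$ and without increasing $\omega(G'-W)$, contradicting minimality. After that, everything else is immediate, and we conclude that {\sc $1$-Deletion Blocker($\omega$)} — and hence {\sc $1$-Deletion Blocker($\chi$)} — is \NP-complete on chordal graphs, establishing that Theorem~\ref{thm:interval2} does not extend to chordal graphs.
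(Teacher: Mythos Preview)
Your reduction is broken precisely at the point you flag as ``delicate.'' With the universal vertex $y$ in $G'$, deleting $y$ alone already gives $\omega(G'-\{y\})=|V|+2=\omega(G')-1$: every maximum clique $K_{uv}\cup\{u,v,y\}$ drops to $K_{uv}\cup\{u,v\}$, and no larger clique exists. Hence $(G',k)$ is a yes-instance of {\sc $1$-Deletion Blocker($\omega$)} for every $k\ge 1$, irrespective of whether $G$ has a vertex cover of size $k$. Your attempted repair (``since $\omega(G'-W)\le|V|+2$ we still need \ldots\ the clique $K_{uv}\cup\{u,v\}$ of size $|V|+2$ to be reduced'') is simply false: $|V|+2$ \emph{is} the target, so those cliques need no further reduction once $y$ is gone. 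The alternative repair (``for $y$, [swap in] an arbitrary endpoint of each still-uncovered edge, using $|W|\le k\le |V|$ to afford it'') would in general enlarge $W$ beyond $k$, so it does not yield the required exchange either.

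The fix is to drop $y$ from the construction altogether. The graph $G'$ on $V\cup\bigcup_e K_e$ is still chordal (every vertex of every $K_e$ is simplicial, and removing them leaves the clique $V$), and now $\omega(G')=|V|+2$ with the maximum cliques being exactly the sets $K_{uv}\cup\{u,v\}$. The forward direction is unchanged. For the converse, your exchange argument for $K_e$-vertices goes through cleanly --- a deletion inside $K_e$ hits only the single maximum clique for $e$ and can be replaced by an endpoint of $e$ --- and there is no longer any $y$-case to worry about. After normalising to $W\subseteq V$, the fact that every $K_{uv}\cup\{u,v\}$ must lose a vertex forces $W$ to be a vertex cover of $G$ of size at most $k$. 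This is presumably what the paper means by ``similar arguments''; the role of $y$ in the contraction proof (providing edges $uy$ to contract) has no analogue, and no use, in the deletion setting.
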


\section{Six Dichotomy Results and $C_4$-free Perfect Graphs with $\omega=3$}\label{sec:general}

In this section we first prove that for $\pi\in\{\alpha,\chi, \omega\}$ the contraction and deletion blocker problems become very quickly \NP-hard when we increase $\pi$, that is, we prove Theorem~\ref{thm:NPC}.

\medskip
\noindent
{\bf Theorem~\ref{thm:NPC} (restated).}
{\it The following six dichotomies hold:
\begin{itemize}
\item [{\bf (i)}] {\sc Contraction Blocker($\alpha$)}  is polynomial-time solvable for graphs with $\alpha=1$ and 
{\sc 1-Contraction Blocker($\alpha$)} is \NP-complete for graphs with $\alpha=2$;\\[-8pt]
\item[{\bf (ii)}] {\sc Contraction Blocker($\chi$)}  is polynomial-time solvable for graphs with $\chi=2$ and 
 {\sc $1$-Contraction Blocker($\chi$)} is \NP-complete for graphs with $\chi=3$;\\[-8pt]
\item  [{\bf (iii)}] {\sc Contraction Blocker($\omega$)} is polynomial-time solvable for graphs with $\omega=2$ and
{\sc $1$-Contraction Blocker($\omega$)} is \NP-complete for graphs with $\omega=3$;\\[-8pt]
\item  [{\bf (iv)}] {\sc Deletion Blocker($\alpha$)} is polynomial-time solvable for graphs with $\alpha=1$ and\\ 
{\sc $1$-Deletion Blocker($\alpha$)} is \NP-complete for graphs with $\alpha=2$;\\[-8pt]
\item [{\bf (v)}] {\sc Deletion Blocker($\chi$)} is polynomial-time solvable for graphs with $\chi=2$ and\\
{\sc $1$-Deletion Blocker($\chi$)} is \NP-complete for graphs with $\chi=3$;\\[-8pt]
\item [{\bf (vi)}] {\sc Deletion Blocker($\omega$)} is polynomial-time solvable for graphs with $\omega=1$ and\\
{\sc $1$-Deletion Blocker($\omega$)}  is \NP-complete for graphs with $\omega=2$;\end{itemize}}

\begin{proof}
All six problems are readily seen to be in~\NP\ for the above graph classes (it suffices to take the sequence of edge contractions or vertex deletions as a certificate). We prove each of the six statements separately.

\medskip
\noindent
{\bf (i)}  The problem is trivial if $\alpha=1$.
As cobipartite graphs have independence number at most~2, we can apply Theorem~\ref{t-firstco} to obtain \NP-completeness if $\alpha=2$.

\medskip
\noindent
{\bf (ii)} The problem is trivial if $\chi\leq 2$. We now consider the class of graphs with $\chi=3$.
Recall that the problem {\sc Bipartite Contraction} is to test whether a graph can be made bipartite by at most $k$ edge contractions. It is readily seen that {\sc $1$-Contraction Blocker($\chi$)}   and {\sc Bipartite Contraction} are equivalent for graphs of chromatic number~3. Heggernes, van 't Hof, Lokshtanov and Paul~\cite{HHLP} observed that {\sc Bipartite Contraction} is \NP-complete
by reducing from the \NP-complete problem {\sc Edge Bipartization}, which is that of testing whether a graph can be made bipartite by deleting at most $k$ edges. Given an instance $(G,k)$ of {\sc Edge Bipartization}, they obtain an instance $(G',k')$ of {\sc Bipartite Contraction} by replacing every edge in $G$ by a path of sufficiently large odd length. 
Note that the resulting graph $G'$ has chromatic number~3 (assign colour~1 to the vertices of $G$ and give the new vertices colours~2 and~3).

\medskip
\noindent
{\bf (iii)} The problem is trivial if $\omega\leq 2$. We now consider the class of graphs with $\omega=3$.
We use a polynomial reduction from the problem {\sc ONE-IN-3-SAT}, which is well known to be \NP-complete (see~\cite{GJ79}). This problem has as input a set $X=\{x_1,\ldots, x_n\}$ of $n$ boolean variables and a collection $C=\{c_1,\ldots,c_m\}$ of clauses over $X\cup \bar X$ such that $\vert c_i\vert=3$
for $i=1,\ldots,m$.  The question is whether there a truth assignment for $X$ such that each clause of $C$ contains exactly one true literal.

Let $I=(X,C)$ be an instance of {\sc ONE-IN-3-SAT}. We construct an instance $(G,n+m)$ of {\sc $1$-Contraction Blocker($\omega$)}, where $G$ is constructed as follows  (see Fig. \ref{cliqfig1} for an example): 

\begin{itemize}
\item  For each variable $x\in X$, introduce five vertices forming a triangle and a square sharing exactly one edge. This yields the gadget for the variable $x$,
where the two edges that do not belong to the square correspond to the two literals $x$ and $\bar x$.
\item For each clause $c_i\in C$, introduce three vertices forming a triangle $T_i$. This yields the gadget for the clause $c_i$,
where each edge corresponds to one of the three literals forming $c_i$.
\item For every edge of a triangle $T_i$ corresponding to a literal $\lambda$, link its two end-vertices by a matching to the two end-vertices of the edge corresponding to~$\lambda$ in the variable gadget.
\end{itemize}

Observe that $(G,n+m)$ can be obtained in polynomial time. Moreover, $\omega(G)=3$ and $G$ contains exactly $n+m$ disjoint triangles. Thus, in order to obtain a graph $G'$ from $G$ with  $\omega(G')=2$, we need to contract at least one edge from each of these triangles.
We claim that $I$ is a yes-instance of {\sc ONE-IN-3-SAT} if and only if $(G,n+m)$ is a yes-instance of {\sc 1-Contraction Blocker($\omega$)}.

\begin{figure}[ht!]
	\centering
		\includegraphics[scale=0.5,keepaspectratio=true]{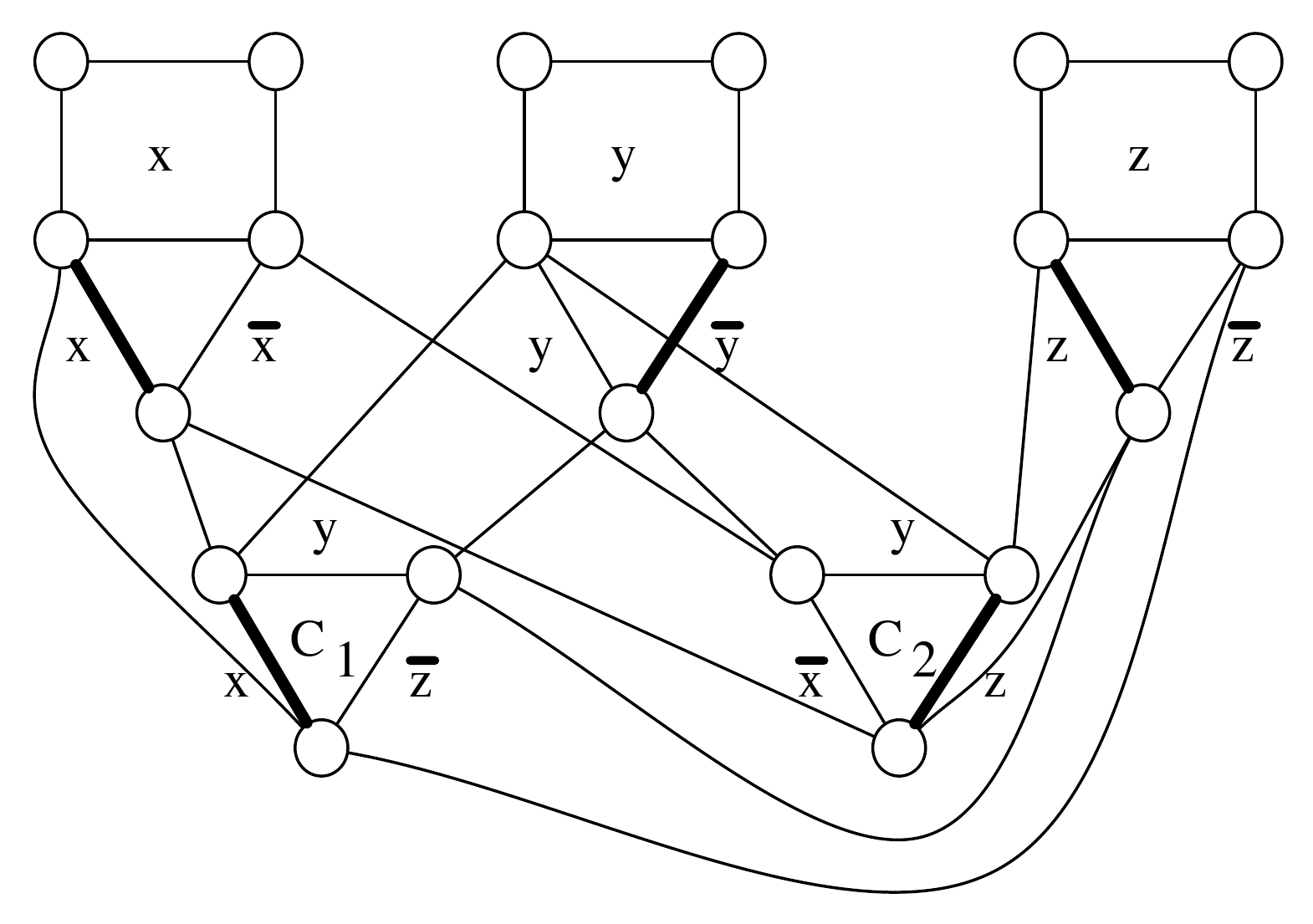}
	\caption{The graph $G$ encoding $I=(x,y,{\bar z})\vee({\bar x},y,z)$. Bold lines correspond to 
	 true literals.}
	\label{cliqfig1}
\end{figure}

First suppose that $I$ is a yes-instance. For each variable $x$ which is {\it true} (resp. {\it false}), we contract the edge corresponding to the literal $x$ (resp. the literal $\bar x$) in the triangle of the variable gadget; for each clause $c_i$, we contract the unique edge of the clause gadget corresponding to the literal which is set to {\it true} (see Fig. \ref{cliqfig2}). Thus we contract exactly $n+m$ edges, one in each of the $n+m$ disjoint triangles. For each clause gadget in $G$, the unique contracted edge is linked to the unique contracted edge in the variable gadget corresponding to the true literal. Hence the four original vertices are transformed into two adjacent vertices. 

We claim that no new triangles are created by performing the $n+m$ edge contractions. Indeed, when contracting an edge from a clause gadget, we do create a triangle $T$ one edge of which belongs to a variable gadget. But by construction, this edge will necessarily be contracted as well. Thus this triangle $T$ is transformed into a single edge. Hence $\omega(G')=2$, which means that $(G,n+m)$ is a yes-instance.

\begin{figure}[ht!]
	\centering
		\includegraphics[scale=0.5,keepaspectratio=true]{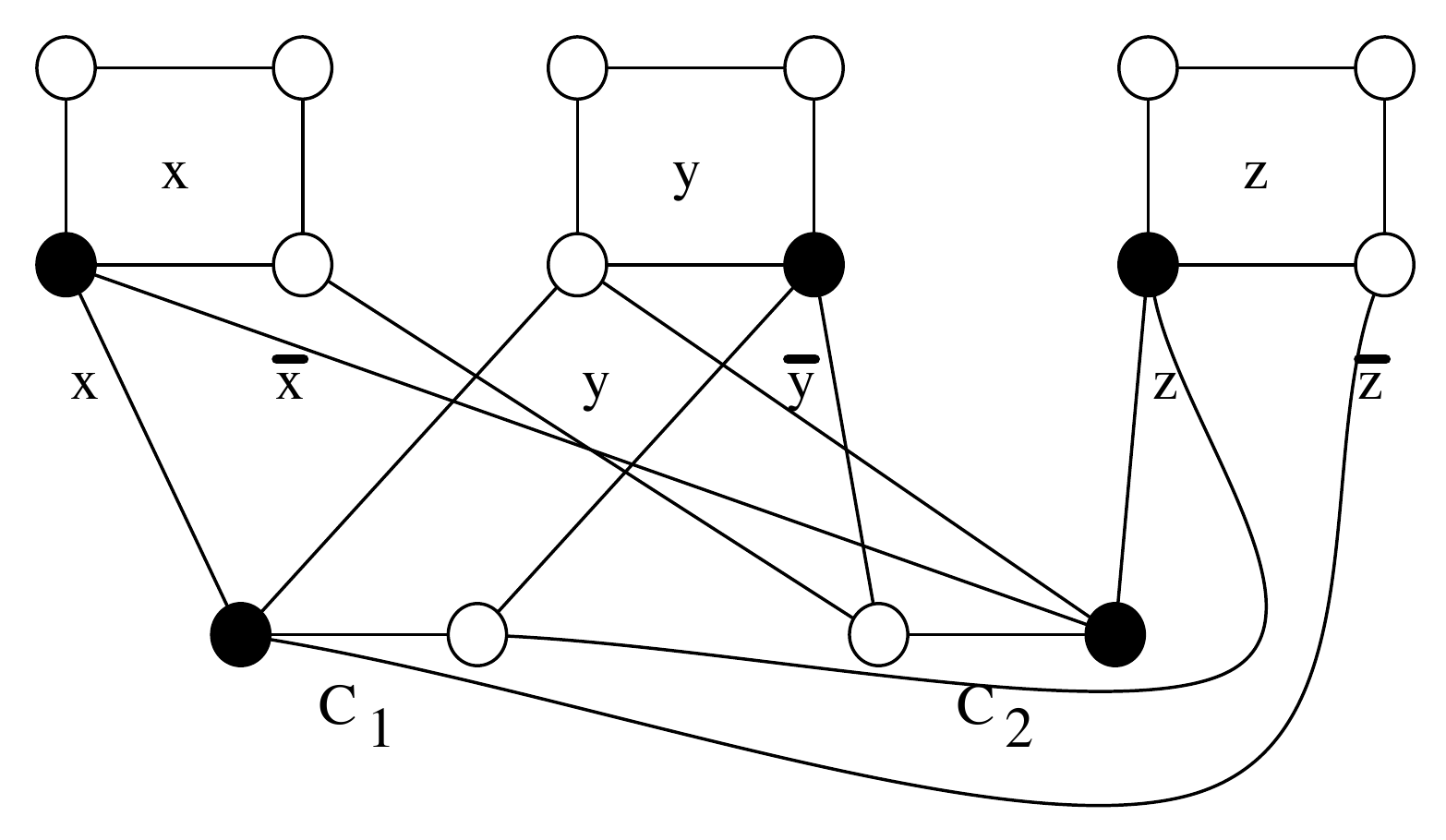}
	\caption{The graph $G$ after the contractions. The black vertices correspond to the contracted vertices.}
	\label{cliqfig2}
\end{figure}

Suppose now that $(G,n+m)$ is a yes-instance. This means that we can obtain a graph $G'$ with $\omega(G')=2$ by contracting $n+m$ edges of $G$.  Since~$G$ contains exactly $k$ disjoint triangles, we must, as already mentioned before, contract exactly one edge in each of these triangles. Furthermore, in a variable gadget we must contract an edge not belonging to the square, as otherwise a new triangle is created and hence we would need more than $n+m$ contractions, a contradiction. Let $e$ be an edge in a variable gadget that is contracted. Suppose that $e$ corresponds to a literal $\lambda$. In $G$, $e$ is contained in some squares containing edges of clause gadgets which correspond to $\lambda$. Thus, after this contraction, we create new triangles each containing an edge of a clause gadget corresponding to $\lambda$. It follows that we must contract the edges in the clause gadgets corresponding to the literal $\lambda$, otherwise triangles will remain in $G'$. Since we use $n+m$ edge contractions, exactly one edge in each clause gadget is contracted. Hence, by assigning the value true to the literal corresponding to the edge contracted in each variable gadget, one literal has value {\it true} and the other two have value {\it false} in each clause. This yields a positive answer for $I$, so $I$ is a yes-instance.

\medskip
\noindent
{\bf (iv) \& (vi)} Both problems are trivial if $\pi\in\{\alpha,\omega\}$ has value~1. Now consider the class of graphs with $\omega=2$, or equivalently the class of triangle-free graphs.
Since {\sc Vertex Cover} is \NP-complete for triangle-free graphs 
by Lemma~\ref{l-po2},
we conclude from Corollary~\ref{c-vc} that {\sc $1$-Deletion Blocker($\omega$)} is \NP-complete for triangle-free graphs. The remainder of statement~(iv) follows immediately after recalling that {\sc 1-Deletion Blocker($\alpha$)} can be solved by taking the complement of the input graph and solving {\sc 1-Deletion Blocker($\omega$)} instead.

\medskip
\noindent
{\bf (v)} First consider the class of graphs with $\chi=2$, which coincides with the class of bipartite graphs. Then the problem becomes equivalent to {\sc Independent Set}, which is
 polynomial-time solvable for bipartite graphs (due to K\"onig's Theorem; see, for example, \cite{Di05}).
Now consider the class of graphs with $\chi=3$.
Recall that the {\sc Maximum Induced Bipartite Subgraph} problem is to test if a given graph contains an induced bipartite subgraph with at least $k$ vertices for some integer~$k$ and that this problem is \NP-complete even for the class of 3-colourable perfect
graphs~\cite{AKKLR10}. As for 3-colourable graphs {\sc $1$-Deletion Blocker($\chi$)} is equivalent to {\sc Maximum Induced Bipartite Subgraph}, we find that {\sc $1$-Deletion Blocker($\chi$)} is \NP-complete for graphs with chromatic number~$3$. 

\medskip
\noindent
We have proven each of the six claims and thus have proven the theorem.\qed
\end{proof}

We note that the graph $G$ in the proof of Theorem~\ref{thm:NPC}~(iii) contains no induced {\it diamond} (the complete graph $K_4$ on four vertices minus an edge) and no induced {\it butterfly} (the graph with vertices $a,b,c,d,e$ and edges $ab,bc,ca,cd,de,ec$). As a graph $G$ is $K_4$-free if and only if $\omega(G)\leq 3$, we have in fact proven the following.

\begin{corollary}\label{c-butterfly}
The $1$-{\sc Contraction Blocker($\omega$)} problem is \NP-complete for the class of ($\mbox{butterfly,diamond},K_4$)-free graphs.
\end{corollary}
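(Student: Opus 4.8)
The plan is to prove this by revisiting the reduction from {\sc One-in-3-Sat} in the proof of Theorem~\ref{thm:NPC}~(iii): it suffices to observe that the graph $G$ constructed there is already ($\mbox{butterfly},\mbox{diamond},K_4$)-free, so that the very same reduction establishes \NP-hardness for this more restricted class. Membership in \NP\ is easy to handle: a $K_4$-free graph $G$ satisfies $\omega(G)\le 3$, and the only interesting case of {\sc $1$-Contraction Blocker($\omega$)} on such graphs is $\omega(G)=3$, where we must decide whether $G$ can be $k$-contracted into a triangle-free graph; we take the sequence of at most $k$ edge contractions as a certificate and verify triangle-freeness of the resulting graph in polynomial time.

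For the class membership, the key observation I would record first is the following characterisation: a graph $H$ is ($\mbox{butterfly},\mbox{diamond},K_4$)-free if and only if $\omega(H)\le 3$ and the triangles of $H$ are pairwise vertex-disjoint. Only one direction is needed here, and it is immediate: a diamond contains two triangles sharing an edge and a butterfly contains two triangles sharing a vertex, so if the triangles of $H$ are pairwise vertex-disjoint then $H$ has no induced diamond and no induced butterfly, and if moreover $\omega(H)\le 3$ then $H$ is $K_4$-free as well. (The reverse direction, which we do not strictly need, follows from a short case check: two triangles sharing two vertices yield a $K_4$ or an induced diamond, and two triangles sharing exactly one vertex yield, depending on which edges are present among their remaining vertices, an induced butterfly, an induced diamond or a $K_4$.)

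It then remains to check that the graph $G$ of Theorem~\ref{thm:NPC}~(iii) meets this criterion. That $\omega(G)=3$ is already established in that proof, and the triangles of $G$ are exactly the $n+m$ gadget triangles (one per variable, one per clause), which are pairwise vertex-disjoint by construction: a variable gadget's square is an induced $C_4$ whose two non-shared vertices are non-adjacent to the triangle apex, so that gadget contributes only its single triangle, and a clause gadget likewise contributes only its single triangle. The one point needing care --- and the only real obstacle, though a modest one --- is to confirm that no matching edge joining an endpoint of a literal-edge in a variable gadget to an endpoint of the corresponding literal-edge in a clause gadget lies in a triangle; since a common neighbour of the two endpoints of such an edge would have to be simultaneously adjacent to a vertex of a variable gadget and a vertex of a clause gadget, the matching structure rules this out. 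With this verified, $G$ is ($\mbox{butterfly},\mbox{diamond},K_4$)-free, the reduction of Theorem~\ref{thm:NPC}~(iii) proves \NP-hardness for this class, and together with \NP-membership this gives \NP-completeness.
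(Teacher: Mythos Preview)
Your proposal is correct and follows essentially the same approach as the paper. The paper's own argument is simply the one-sentence observation preceding the corollary: the graph $G$ built in the proof of Theorem~\ref{thm:NPC}~(iii) contains no induced diamond and no induced butterfly, and since $\omega(G)=3$ it is $K_4$-free. You reproduce this, adding the useful intermediate characterisation that a graph is (butterfly, diamond, $K_4$)-free precisely when $\omega\le 3$ and its triangles are pairwise vertex-disjoint, and then verifying explicitly that the only triangles in $G$ are the $n+m$ pairwise-disjoint gadget triangles; this is more detail than the paper gives, but it is the same argument.
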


\medskip
\noindent
We use Theorem~\ref{thm:NPC}~(iii) to prove the following hardness result.

\begin{theorem}\label{t-perfect}
For $\pi\in \{\chi, \omega\}$,  $1$-{\sc Contraction Blocker($\pi$)}  is \NP-complete for the class of $C_4$-free perfect graphs with clique number~$3$.
\end{theorem}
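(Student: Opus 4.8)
The plan is to keep the reduction from \textsc{One-in-3-Sat} used for Theorem~\ref{thm:NPC}~(iii) but to \emph{surgically modify} the graph $G$ produced there. Recall that this $G$ already has $\omega(G)=3$ and, by Corollary~\ref{c-butterfly}, is (butterfly, diamond, $K_4$)-free, so its $n+m$ triangles are pairwise vertex-disjoint and must each lose a vertex under any solution. What fails is that $G$ contains induced $C_4$'s (the square in every variable gadget, and the $4$-cycle closed by every clause--variable matching) and, a priori, is not perfect. So the goal is to replace each square and each matching by a small local gadget that (a) has no induced $C_4$, (b) creates no induced odd hole, (c) does not raise the clique number above $3$, and (d) preserves the two forcing mechanisms of the reduction: contracting the ``base'' edge of a variable gadget still produces a fresh triangle (so an optimal contraction set is pushed to contract a literal edge instead), and contracting a literal edge still produces a fresh triangle in every clause gadget containing that literal (so the corresponding clause edges must also be contracted).

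Concretely, since an induced $C_4$ can only be destroyed by adding one of its two diagonals, each replacement will add suitable diagonal edges to the squares of $G$ together with, where necessary, a bounded number of new vertices, chosen so that no $K_4$ is completed and so that the triangles thereby created are under control. First I would perform this surgery to obtain a graph $G'$ and redo the triangle bookkeeping: the modification will fuse some formerly disjoint triangles of $G$ into diamonds, so one records precisely which triangles $G'$ contains, which edge contractions destroy them, and how many ``free'' contractions (those creating no new triangle) are needed to clear the auxiliary triangles. Next I would check that $G'$ is perfect: by construction $G'$ is $C_4$-free, and every odd antihole on at least seven vertices contains an induced $C_4$ (while $\overline{C_5}=C_5$ is itself an odd hole), so by the Strong Perfect Graph Theorem it suffices to verify that $G'$ has no induced cycle of odd length at least $5$; this is done by case analysis on how a cycle can traverse the few vertices inside one gadget and cross the matchings between a variable gadget and the clause gadgets it meets. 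Finally I would confirm $\omega(G')=3$ by checking that none of the added diagonals completes a $K_4$ with the existing gadget edges.

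With $G'$ built, the correctness argument mirrors the one for Theorem~\ref{thm:NPC}~(iii): from a one-in-three satisfying assignment, contract in each variable gadget the literal edge of its truth value, in each clause gadget the unique edge of its true literal, plus the fixed number of free contractions, and check that no triangle survives; conversely, an optimal solution must spend exactly one contraction per triangle (after the bookkeeping), must avoid the base edges, and must obey the propagation from variable to clause gadgets, which forces a consistent choice of exactly one true literal per clause. Membership in \NP\ is immediate, since a sequence of at most $k$ edge contractions is a certificate and one checks in polynomial time that the resulting graph is triangle-free, i.e.\ has clique number at most $\omega(G')-1=2$. The main obstacle is the gadget design: one must simultaneously eliminate all squares, avoid creating any odd hole, keep $\omega=3$, and retain enough rigidity for the counting-and-propagation argument to still pin down a satisfying assignment, and the perfectness verification---being a statement about every induced cycle of the whole graph---is the delicate step.
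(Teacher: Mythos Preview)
Your plan is plausible in spirit but takes a harder road than the paper, and you have not actually built the gadgets on which everything hinges. The paper does \emph{not} try to repair the induced $C_4$'s of the construction from Theorem~\ref{thm:NPC}~(iii) by inserting diagonals. Instead it applies a uniform transformation to \emph{every} edge $e$ of $G$: subdivide $e$ into two edges $e_1,e_2$, then attach two new non-adjacent vertices $u_e,v_e$, each adjacent to both endpoints of $e_1$. Call the result $G^*$. Subdividing all edges wipes out every induced $C_4$ (and indeed every short cycle) of $G$ in one stroke; the only triangles of $G^*$ are the pairs $\{u_e\}\cup e_1$ and $\{v_e\}\cup e_1$, which share the edge $e_1$, so $\omega(G^*)=3$ and no $C_4$ is reintroduced. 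Perfectness is then almost free: the original vertices together with the subdivision vertices form a bipartite graph, and each $u_e,v_e$ sits in a single triangle on top of it, so there is no odd hole or odd antihole. The forcing argument is equally clean: the two triangles at $e_1$ can be killed with one contraction only by contracting $e_1$ itself, so with the adjusted budget every $e_1$ must be contracted, which collapses $G^*$ back to $G$ decorated with harmless pendant edges, and the original reduction takes over verbatim.

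By contrast, your diagonal-insertion strategy immediately breaks the vertex-disjointness of the $n+m$ triangles (you say so yourself), and the diagonals you would add to the clause--variable squares create edges \emph{between} gadgets, which threatens both the perfectness case analysis and the propagation argument. You recognise these as obstacles but do not resolve them; what you have is a programme, not a construction. The paper's subdivide-then-decorate trick sidesteps all of this: it is worth internalising as a general device for turning a reduction with clique number~$3$ into one that is additionally $C_4$-free and perfect.
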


\begin{proof}
As before, the problem is readily seen to be in \NP.
Let $\pi=\omega$, or equivalently, $\pi=\chi$. We adapt the construction used in the proof of Theorem~\ref{thm:NPC}~(iii)
by doing as follows for each edge $e$ of the graph $G$ in this proof. First we subdivide $e$. This gives us two new edges $e_1$ and $e_2$. We introduce two new non-adjacent vertices $u_e$ and $v_e$ and make them adjacent to both end-vertices of $e_1$. 
Denote the resulting graph by~$G^*$.
Note that we got rid of all the induced $C_4$s while not creating any new induced $C_4$ in this way. Hence $G^*$ is $C_4$-free. Moreover, we did not introduce any clique on four vertices. Hence, as $\omega(G)=3$, we also have $\omega(G^*)=3$.
The vertices of the original graph together with the subdivision vertices form a bipartite graph on top of which we placed a number of triangles. Hence, $G^*$ contains no odd hole and no odd antihole. By Theorem~\ref{t-spgt},
$G^*$ is perfect.

We increase the allowed number of edge contractions accordingly and observe that, because of the presence of the vertices $u_e$ and $v_e$ for each edge~$e$, we are always forced to contract the edge $e_1$, which gives us back the original construction extended with a number of pendant edges (which do not play a role). Note that we have left the class of $C_4$-free perfect graphs after contracting away the triangles, but this is allowed.
\qed
\end{proof}

We recall that {\sc Contraction Blocker($\alpha$)} is still open for the class of $C_4$-free perfect graphs as well as {\sc Deletion Blocker($\pi$)} for $\pi\in \{\alpha,\chi, \omega\}$, even if $d$ is fixed.

\section{$H$-free Graphs}\label{s-clas}

In this section we prove our complexity results for the six blocker problems restricted to $H$-free graphs, that is, we prove Theorem~\ref{t-mainmainmain}. To summarize, for $\pi\in \{\alpha,\omega,\chi\}$ we are able to give a dichotomy both for the contraction and deletion blocker problem except for one open case for the contraction blocker problem when $\pi=\omega$. 
We first consider $\pi=\alpha$, then $\pi=\omega$ and then $\pi=\chi$.

\subsection{When $\pi=\alpha$}\label{s-alphaclas}

We call a vertex {\it forced} if it is in every maximum independent set of a graph~\cite{CWP11}. 
Recall that the set of all forced vertices is called the {\it core} of a graph and that  
Boros, Golumbic and Levit~\cite{BGL02} proved that computing whether the core of a graph has size at least $k$ is co-\NP-hard for every fixed $k\geq 1$. 
As a special case of their result, the problem of testing the existence of a forced vertex is co-\NP-hard. 
We prove that the latter problem, or equivalently, {\sc Deletion Blocker($\alpha$)} with $d=k=1$, stays co-\NP-hard even for
 graphs of girth $p+1$, or equivalently,  $(C_3,\ldots,C_p)$-free graphs, for any constant $p\geq 3$ ((the {\it girth} of a graph is the length of a shortest cycle in it).

\begin{theorem}\label{t-po}
{\sc Deletion Blocker($\alpha$)} is co-\NP-hard for $(C_3,\ldots,C_p)$-free graphs for any constant~$p\geq 3$ even if $d=k=1$. 
\end{theorem}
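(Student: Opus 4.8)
The starting point is the result of Boros, Golumbic and Levit~\cite{BGL02} (equivalently the co-\NP-hardness of deciding whether a graph has a forced vertex, i.e.\ a vertex in every maximum independent set), which is exactly {\sc Deletion Blocker($\alpha$)} with $d=k=1$ for general graphs. The plan is to reduce from this problem and to boost the girth of the instance while preserving the ``has a forced vertex'' property. The natural tool is \emph{edge subdivision}: subdividing every edge of a graph a fixed number of times increases the girth, and we want to subdivide enough times so that all cycles of length $\le p$ are destroyed, while controlling how the subdivision affects the set of maximum independent sets and, crucially, the presence of a forced vertex.

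\textbf{Key steps.} First, I would fix the reduction: given a graph $G=(V,E)$, replace each edge $uv\in E$ by a path with a constant number $t$ of internal (subdivision) vertices, where $t$ is chosen (depending only on $p$) large enough that the resulting graph $G'$ has girth at least $p+1$; since each cycle of $G$ of length $\ell$ becomes a cycle of length $\ell(t+1)$ in $G'$, and there are no short cycles through subdivision paths that were not already cycles of $G$, taking $t$ linear in $p$ suffices. This makes $G'$ a $(C_3,\dots,C_p)$-free graph computable in polynomial time. Second, I would analyze maximum independent sets of $G'$. Subdividing a single edge $uv$ once is the classical operation: it is well known (and easy to check) that subdividing each edge exactly twice increases the independence number by exactly $|E|$ and sets up a clean correspondence between maximum independent sets of $G'$ and maximum independent sets of $G$ (each original vertex is ``in or out'' as before, and the subdivision vertices are filled in greedily along each path). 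The cleanest route is to subdivide each edge exactly $2s$ times for a suitable $s$, so that the parity works out and one gets: $\alpha(G')=\alpha(G)+s|E|$, and a maximum independent set of $G'$ restricted to $V$ is always a maximum independent set of $G$, while every maximum independent set of $G$ extends to one of $G'$. Third, with this correspondence in hand, I would argue that $G$ has a forced vertex (a vertex of $V$ in every maximum independent set of $G$) if and only if $G'$ has a forced vertex. The ``only if'' direction: if $v\in V$ lies in every maximum independent set of $G$, then by the extension/restriction correspondence $v$ lies in every maximum independent set of $G'$. The ``if'' direction needs a little care: one must show that if $G$ has \emph{no} forced vertex then neither does $G'$, i.e.\ that no subdivision vertex can be forced either. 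This follows because along any subdivision path one can always ``shift'' which internal vertices are chosen (the path on $2s$ internal vertices has more than one maximum independent set compatible with any fixed choice at its endpoints, as long as the path is long enough — which is why we take $s\ge 1$ and the path length even), so no internal vertex is in all maximum independent sets; and an original vertex is forced in $G'$ only if it is forced in $G$, which we assumed fails. Finally, since {\sc Deletion Blocker($\alpha$)} with $d=k=1$ asks precisely whether a graph has a forced vertex, this equivalence completes the reduction.

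\textbf{Main obstacle.} The delicate part is the second and third steps together: choosing the number of subdivisions so that (a) the girth jumps above $p$, (b) the maximum-independent-set structure on the original vertices is preserved \emph{exactly} (no spurious new maximum independent sets that omit some originally-forced vertex, and no collapse that forces a previously-free vertex), and (c) no newly introduced subdivision vertex becomes forced. Getting (b) and (c) simultaneously is what pins down that each edge should be subdivided an \emph{even} number $2s$ of times with $s$ chosen so that $2s+1\ge$ (something like) $p$, and it requires a careful case check of how a long even path behaves inside a maximum independent set depending on whether $0$, $1$, or $2$ of its endpoints are selected. Once that local path analysis is done, the global equivalence and hence co-\NP-hardness of {\sc Deletion Blocker($\alpha$)} on $(C_3,\dots,C_p)$-free graphs for $d=k=1$ follows.
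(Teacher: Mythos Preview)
Your high-level strategy matches the paper's: reduce from the co-\NP-hard forced-vertex problem of~\cite{BGL02} and use even subdivisions to raise the girth while preserving the property ``has a forced vertex''. However, two of your key technical claims are false as stated, and the arguments you build on them do not go through.

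First, the assertion that ``a maximum independent set of $G'$ restricted to $V$ is always a maximum independent set of $G$'' is wrong. Take $G=K_2$ with vertices $u,v$ and subdivide $2s$ times to get the path $u\,w_1\cdots w_{2s}\,v$. Then $\{u,v,w_2,w_4,\ldots,w_{2s-2}\}$ is a maximum independent set of $G'$ whose restriction $\{u,v\}$ is not even independent in $G$. So your ``only if'' direction (a vertex forced in $G$ stays forced in $G'$) cannot be obtained by straight restriction; one needs an extra step (e.g.\ observing that if $I=I'\cap V$ spans $f$ edges of $G$ then $|I|\ge\alpha(G)+f$, whence $\alpha(G[I])\ge|I|-f\ge\alpha(G)$, yielding a genuine maximum independent set of $G$ inside $I$). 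Second, the local ``shifting'' claim---that for any fixed choice at the endpoints the $2s$ internal vertices admit more than one optimal filling---fails when exactly one endpoint is selected: the remaining internal path then has odd length $2s-1$ and a \emph{unique} maximum independent set. So your ``if'' direction argument that no subdivision vertex can be forced needs a different justification (one can instead exploit the assumption that \emph{neither} $u$ nor $v$ is forced in $G$, and choose the starting maximum independent set of $G$ accordingly, but this is not the argument you gave).

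The paper avoids both pitfalls by subdividing \emph{one} edge at a time: it shows that a single double-subdivision $G\rightsquigarrow G'$ preserves the forced-vertex property via a short direct case analysis, explicitly allowing that a subdivision vertex $x$ may be forced in $G'$ and then proving that in that case the opposite original endpoint is forced in $G$. Iterating this one-edge lemma enough times raises the girth above $p$. Your global route can be repaired along the lines above, but the paper's incremental argument is both shorter and sidesteps the false intermediate claims.
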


\begin{proof}
Let $G$ be a graph. We pick one of its edges $uv$ and subdivide $uv$ twice, that is, we replace the edge $uv$ by two new vertices 
$x$ and $y$ and edges $ux$, $xy$, $yv$. We let $G'$ denote the resulting graph. Note that $\alpha(G')=\alpha(G)+1$ (see also~\cite{Po74}).
We claim that $G$ has a forced vertex if and only if $G'$ has a forced vertex.

First suppose that $G$ has a forced vertex~$s$. Then $s$ is also a forced vertex of~$G'$. In order to see this
consider a maximum independent set~$I'$ of $G'$.
For contradiction, suppose that $I'$ does not contain $s$. Recall that $I$ has size $\alpha(G)+1$. If $x$ is in~$I'$, then its neighbour $y$ is not in $I'$, and thus $I'\setminus \{x\}$ is a maximum independent set of $G$ that does not contain $s$, a contradiction. Hence
$x$ is not in $I'$, and for the same reason $y$ is not in $I'$ either. Then $u$ is in $I'$, as otherwise we could put 
$x$ in $I'$ to get a larger independent set than $I'$. However, we now find that $I'\setminus \{u\}$ is a maximum independent set of $G$ that does not contain~$s$, a contradiction.
Hence $s$ belongs $I'$. We conclude that $s$ is a forced vertex of $G'$ as well.

Now suppose that $G'$ has a forced vertex $s$. First suppose $s\in \{x,y\}$, say $s=x$. Then $v$ is a forced vertex of $G$. In order to see this consider a maximum independent set~$I$ of $G$. For contradiction, suppose that $I$  does not contain $v$. Then 
$I\cup \{y\}$ is a maximum independent
 set of $G'$ not containing $s=x$, a contradiction. Hence $s$ does not belong to $\{x,y\}$,
so $s$ must be in $G$. Then $s$ is also a forced vertex of $G$. In order to see this consider a maximum independent set~$I$ of~$G$. For contradiction, suppose that $I$ does not contain~$s$.
 As $u$ and $v$ are adjacent
 in $G$, not both of them are in $I$. Assume without loss of generality that
 $u$
 is not in $I$. Then $I\cup \{x\}$ is a maximum independent set of $G'$ that does not contain
 $s$,
 a contradiction. We conclude that $s$ is a forced vertex of $G$.
 
We now subdivide each edge of $G$ a sufficiently number of times (say $p$ times) so that the resulting graph $G''$ is $(C_3,\ldots,C_p)$-free. 
By repeatedly applying the above claim, we find that $G$ has a forced vertex if and only if $G''$ has a forced vertex.
As deciding whether a graph has a forced vertex is co-\NP-hard~\cite{BGL02}, the result follows.
 \qed
\end{proof}

Before we present our two complexity dichotomies for $\pi=\alpha$ we need one additional  observation.

\begin{lemma}\label{l-p44}
If $H$ is a $(3P_1,2P_2)$-free forest, then $H\ssi P_4$.
\end{lemma}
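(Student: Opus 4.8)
The plan is to characterize the forests $H$ that are both $3P_1$-free and $2P_2$-free, and show the only possibilities are induced subgraphs of $P_4$. First I would use the $3P_1$-freeness: a forest with no independent set of size $3$ has independence number at most $2$, which is very restrictive. Since every forest is bipartite, and a bipartite graph with $\alpha \le 2$ can have at most $2$ vertices in each of its colour classes, $H$ has at most $4$ vertices. So it suffices to check the (few) forests on at most $4$ vertices.

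Next I would enumerate: the forests on at most $4$ vertices are, up to isomorphism, $P_1, P_2, P_1\oplus P_1, P_3, P_1\oplus P_2, P_1\oplus P_1\oplus P_1 = 3P_1, P_4, P_1\oplus P_3, P_2\oplus P_2 = 2P_2, K_{1,3}$, and $P_1\oplus P_1\oplus P_2$. Now I impose the two forbidden conditions. Any forest containing $3P_1$ as an induced subgraph is excluded: this kills $3P_1$ itself, $P_1\oplus P_3$ (it contains three pairwise non-adjacent vertices), $P_1\oplus P_1\oplus P_2$, and the claw $K_{1,3}$ (its three leaves form an independent set of size $3$). Then $2P_2$-freeness kills $P_2\oplus P_2$. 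What remains is $P_1, P_2, 2P_1, P_3, P_1\oplus P_2, P_4$, and each of these is readily seen to be an induced subgraph of $P_4$: indeed $2P_1 \ssi P_4$ (take the two endpoints), $P_1\oplus P_2 \ssi P_4$ (take one endpoint and the two vertices of the far end), and the path cases are immediate.

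Alternatively, and perhaps more cleanly for the write-up, I would argue directly without a full enumeration. From $\alpha(H)\le 2$ and bipartiteness, write $V(H) = A \cup B$ with $|A|, |B| \le 2$. Since $H$ is a forest it has no cycle, so in particular if $|A| = |B| = 2$ then $H[A\cup B]$ cannot contain $C_4$ and, being bipartite with parts of size $2$, has at most $3$ edges forming no cycle — i.e. $H$ is a subforest of $K_{2,2}$ minus an edge, which is $P_4$; and $2P_2$-freeness rules out the remaining disconnected $2$-regular-ish case. If $|A|+|B|\le 3$ then $H$ is a forest on at most $3$ vertices, hence an induced subgraph of $P_3 \ssi P_4$. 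Either way $H \ssi P_4$.

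The main obstacle — really the only subtlety — is making the $|A|=|B|=2$ case airtight: one must check that among the bipartite forests with both parts of size exactly $2$, the only ones surviving the $2P_2$ restriction are induced subgraphs of $P_4$ (the dangerous competitor being $2P_2$ itself, which has exactly this bipartition). This is a finite check but should be stated explicitly rather than waved through. Everything else is routine counting with K\H{o}nig-type reasoning on $\alpha$ versus the size of the colour classes.
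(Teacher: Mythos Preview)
Your proposal is correct. Both the enumeration argument and the bipartition argument go through (in the enumeration you omitted $4P_1$ from the list of four-vertex forests, but this is harmless since it obviously contains $3P_1$).

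Your route is genuinely different from the paper's. The paper never bounds $|V(H)|$; instead it observes that $3P_1$-freeness forces at most two connected components, and then splits: if $H$ has two components, $2P_2$-freeness makes one of them a single vertex and $3P_1$-freeness then forces the other to be at most $P_2$, so $H\ssi P_1\oplus P_2\ssi P_4$; if $H$ is connected, it is a tree, and $3P_1$-freeness forbids both the claw and $P_5$, so $H$ is a path on at most four vertices. Your approach instead uses bipartiteness together with $\alpha(H)\le 2$ to get $|V(H)|\le 4$ immediately, and then finishes by a finite check. The paper's argument avoids enumeration by exploiting the ``claw-free tree is a path'' fact; yours is more systematic and does not need that observation. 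One cosmetic point on your alternative write-up: the sentence ``$H$ is a subforest of $K_{2,2}$ minus an edge, which is $P_4$'' does not by itself yield $H\ssi P_4$, since spanning subforests of $P_4$ on four vertices (such as $P_3\oplus P_1$) need not be induced subgraphs of $P_4$. You correctly flag this in your final paragraph, and the fix is exactly the small finite check you describe (the surviving cases are eliminated by $3P_1$-freeness and $2P_2$-freeness), so the argument stands.
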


\begin{proof}
As $H$ is $3P_1$-free, $H$ contains at most two connected components. Suppose $H$ contains exactly two connected components. Then, as $H$ is $2P_2$-free, at least one of these components must be a $P_1$. As $H$ is $3P_1$-free, this means that $H$ is an induced subgraph of $P_1\oplus P_2$, so $H\ssi P_4$. Suppose $H$ is connected. As $H$ is $3P_1$-free, $H$ contains no claw and no path on more than five vertices. Hence, $H\ssi P_4$.\qed
\end{proof}

We are now ready to present our first dichotomy.

\begin{theorem}\label{t-indepnumber2}
Let $H$ be a graph.
 If $H\subseteq_i P_4$, then {\sc Deletion Blocker($\alpha$)} is polynomial-time solvable for $H$-free graphs, otherwise it is 
 \NP-hard or co-\NP-hard for $H$-free graphs.
\end{theorem}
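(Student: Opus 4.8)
The plan is to establish both directions of the dichotomy. For the positive side, I would show that if $H\ssi P_4$, then {\sc Deletion Blocker($\alpha$)} is polynomial-time solvable for $H$-free graphs. Since an induced subgraph of $P_4$ is one of $P_1,P_2,P_3,P_4,2P_1,P_1\oplus P_2,3P_1$ (and a few others), the strongest case is $H=P_4$; all other cases follow because $H$-free is a subclass of $P_4$-free whenever $H\ssi P_4$. But $P_4$-free graphs are exactly the cographs, and by Corollary~\ref{c-coalpha} (equivalently Theorem~\ref{t-coalpha} with $k=0$), {\sc Deletion Blocker($\pi$)} is polynomial-time solvable for cographs for all $\pi\in\{\alpha,\chi,\omega\}$. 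This handles the tractable part immediately.

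For the hardness side, I must show that whenever $H\not\ssi P_4$, the problem is \NP-hard or co-\NP-hard for $H$-free graphs. The key structural step is a Ramsey-type case analysis on $H$. First suppose $H$ contains a cycle. Then $H$ contains an induced cycle $C_p$ for some $p\geq 3$, and hence the class of $(C_3,\dots,C_q)$-free graphs (for $q$ one less than the girth realized, say any large constant) is a subclass of $H$-free graphs for $H$ having girth exactly $p$; more carefully, for any fixed cycle-containing $H$, graphs of sufficiently large girth are $H$-free, so Theorem~\ref{t-po} gives co-\NP-hardness even for $d=k=1$. Now suppose $H$ is a forest. If $H$ is not $3P_1$-free, then $H$ contains $3P_1$, so every triangle-free graph is... no — I need the other direction: if $3P_1\ssi H$, then the class of $H$-free graphs \emph{contains} all $3P_1$-free graphs? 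No: $H$-free means forbidding $H$, and $3P_1\ssi H$ does \emph{not} imply $3P_1$-free $\subseteq H$-free. The correct reasoning is: if $H\not\ssi P_4$ and $H$ is a forest, then by the contrapositive of Lemma~\ref{l-p44}, $H$ is \emph{not} $(3P_1,2P_2)$-free, so either $3P_1\ssi H$ or $2P_2\ssi H$. If $3P_1\ssi H$, then $H$-free graphs include all graphs with no independent set of size $3$, i.e., all $3P_1$-free graphs, but that's a restrictive class — instead I should observe that $3P_1\ssi H$ means $H$-free $\supseteq$ $3P_1$-free is wrong again. The right move: $3P_1\ssi H$ implies that every $3P_1$-free graph is $H$-free (forbidding a supergraph is weaker), hence $H$-free graphs form a \emph{superclass} of $3P_1$-free graphs — so a hardness result on $3P_1$-free (equivalently, complements of triangle-free, i.e.\ $\overline{G}$ for triangle-free $G$) transfers. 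By Theorem~\ref{thm:NPC}~(iv), {\sc $1$-Deletion Blocker($\alpha$)} is \NP-complete for graphs with $\alpha=2$, i.e.\ for $3P_1$-free graphs; hence it is \NP-hard for $H$-free graphs. If instead $2P_2\ssi H$, then $H$-free graphs form a superclass of $2P_2$-free graphs, which contains all triangle-free graphs of girth large... hmm, $2P_2$-free does not contain triangle-free graphs in general. Here I would instead invoke Theorem~\ref{t-po}: for large $p$, $(C_3,\dots,C_p)$-free graphs are $2P_2$-free? A graph of girth $\geq 5$ can still contain $2P_2$. So I cannot route through Theorem~\ref{t-po} for the $2P_2$ case via girth alone. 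The cleaner argument for $2P_2\ssi H$: it suffices that the hard instances constructed in Theorem~\ref{t-po} (graphs of large girth) are also $2P_2$-free — but they are not. So the $2P_2$ subcase needs its own reduction or a different known class.

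The main obstacle, then, is the case $2P_2\ssi H$ (and $H$ a forest, $H\not\ssi P_4$, $3P_1\not\ssi H$): here I would construct hardness directly, likely by a reduction producing $2P_2$-free (and $3P_1$-free-complement-style) instances, or by noting that split graphs are $(2P_2,C_4,C_5)$-free and appealing to the \NP-completeness of {\sc Deletion Blocker($\alpha$)} for split graphs from~\cite{CWP11} — but split graphs are $2P_2$-free, so if $2P_2\ssi H$ then the split graphs are $H$-free, giving \NP-hardness immediately. Combining: if $H$ is a forest with $H\not\ssi P_4$, Lemma~\ref{l-p44} forces $3P_1\ssi H$ or $2P_2\ssi H$; in the first case use Theorem~\ref{thm:NPC}~(iv) on $\alpha=2$ graphs, in the second use \cite{CWP11} on split graphs; either way the problem is \NP-hard on $H$-free graphs. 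If $H$ contains a cycle, use Theorem~\ref{t-po} with $p$ chosen so that $H$ contains an induced cycle but all graphs of girth exceeding the length of every cycle in $H$ are $H$-free (for a \emph{forest}-free... no, $H$ has a cycle, so long-girth graphs are $H$-free), giving co-\NP-hardness. Assembling these three subcases with the cograph tractability result completes the dichotomy. I expect the bookkeeping to check that each constructed hard instance indeed lies in the relevant $H$-free class to be the most delicate part, but each subcase reduces to an already-established theorem once the inclusion of graph classes is set up correctly.
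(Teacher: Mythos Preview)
Your final assembled argument is correct and follows exactly the paper's proof: polynomial-time solvability via Corollary~\ref{c-coalpha} when $H\ssi P_4$; co-\NP-hardness via Theorem~\ref{t-po} (choosing $p$ at least the length of an induced cycle in $H$) when $H$ contains a cycle; and for forests $H\not\ssi P_4$, the split via Lemma~\ref{l-p44} into the $3P_1\ssi H$ case (Theorem~\ref{thm:NPC}~(iv), i.e.\ $3P_1$-free graphs are precisely those with $\alpha\le 2$) and the $2P_2\ssi H$ case (split graphs are $2P_2$-free, and {\sc Deletion Blocker($\alpha$)} is \NP-complete there by~\cite{CWP11}). One harmless slip: $3P_1$ is \emph{not} an induced subgraph of $P_4$ (since $\alpha(P_4)=2$), but this does not affect your reasoning, as you correctly reduce the tractable side to the single case $H=P_4$.
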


\begin{proof}
Let $H$ be a graph. If $H\subseteq_i P_4$, then we use Corollary~\ref{c-coalpha} to obtain polynomial-time solvability.
Suppose $H$ is not an induced subgraph of $P_4$. 
If $H$ contains an induced cycle $C_r$ for some $r\geq 3$, then we pick $p=r+1$ and apply Theorem~\ref{t-po} to obtain
co-\NP-hardness even if $d=k=1$.
Note that for $r=5$, we could have applied Theorem~\ref{t-splitalpha2} to obtain \NP-hardness, as split graphs are $C_5$-free.
Similarly, if $r\geq 6$, then $H$ contains an induced $2P_2$ and we could have applied Theorem~\ref{t-splitalpha2} (as split graphs are $2P_2$-free) to obtain \NP-hardness as well.

Now assume that $H$ is forest. As $H$ is not an induced subgraph of $P_4$, by Lemma~\ref{l-p44}
either $2P_2\ssi H$ or $3P_1\ssi H$.
If $2P_2\ssi H$, then we apply Theorem~\ref{t-splitalpha2} again to obtain \NP-hardness.
If $3P_1\ssi H$, then we use Theorem~\ref{thm:NPC}~(iv) to obtain \NP-hardness even if $d=1$, after observing that
a graph~$G$ is $3P_1$-free if and only if $\alpha(G)=2$.
\qed
\end{proof}

\noindent
{\bf Remark 2.} Recall that $H$-free graphs are closed under vertex deletion. Hence, {\sc Deletion Blocker($\alpha$)} for $H$-free graphs will be in \NP\  if we can solve {\sc Independent Set} for $H$-free graphs in polynomial time; in that case we can take a sequence of vertex deletions as certificate. To give an example, {\sc Independent Set} is polynomial-time solvable for $P_5$-free graphs~\cite{LVV14}. Hence, for $P_5$-free graphs, {\sc Deletion Blocker($\alpha$)} is not only \NP-hard (which, as argued in the proof of Theorem~\ref{t-indepnumber2}, follows from
Theorem~\ref{t-splitalpha}) but even \NP-complete.
 
\medskip
\noindent
We now consider the edge contraction variant and present our second dichotomy.

\begin{theorem}\label{t-indepnumber0}
Let $H$ be a graph.
 If $H\subseteq_i P_4$, then {\sc Contraction Blocker($\alpha$)} is polynomial-time solvable for $H$-free graphs, otherwise it is 
\NP-hard for $H$-free graphs.
\end{theorem}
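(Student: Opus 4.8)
The plan is to mirror the proof of Theorem~\ref{t-indepnumber2}, but to replace its deletion-specific hardness inputs by the corresponding contraction results, namely Theorems~\ref{thm:bipartite} and~\ref{t-splitalpha2} together with Theorem~\ref{thm:NPC}~(i). For the tractable direction, note that if $H\ssi P_4$ then every $H$-free graph is $P_4$-free: by transitivity of the induced subgraph relation, an induced $P_4$ in a graph $G$ would force an induced copy of $H$ in $G$. Hence the class of $H$-free graphs is contained in the class of cographs, and polynomial-time solvability is immediate from Corollary~\ref{c-coalpha}.

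For the hardness direction I would assume $H\not\ssi P_4$ and distinguish two cases. First suppose $H$ contains an induced cycle $C_r$ for some $r\ge 3$. If $r=3$, then every bipartite graph is triangle-free and hence $H$-free, so Theorem~\ref{thm:bipartite} gives \NP-hardness. If $r\in\{4,5\}$, then every split graph, being $(2P_2,C_4,C_5)$-free, is $C_r$-free and therefore $H$-free, so Theorem~\ref{t-splitalpha2} applies. If $r\ge 6$, then $C_r$ already contains an induced $2P_2$, so $2P_2\ssi H$, and again every split graph is $H$-free; Theorem~\ref{t-splitalpha2} applies once more.

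The remaining case is that $H$ is a forest. Since $H\not\ssi P_4$, Lemma~\ref{l-p44} yields $2P_2\ssi H$ or $3P_1\ssi H$. If $2P_2\ssi H$, then split graphs are $H$-free and Theorem~\ref{t-splitalpha2} gives \NP-hardness. If $3P_1\ssi H$, then every graph with independence number at most~$2$ is $3P_1$-free, hence $H$-free; since {\sc $1$-Contraction Blocker($\alpha$)} is \NP-complete for graphs with $\alpha=2$ by Theorem~\ref{thm:NPC}~(i), we obtain \NP-hardness for $H$-free graphs. Combining the cases proves the theorem.

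This is essentially a case analysis over the structure of $H$, so I do not expect a serious obstacle. The two points that need attention are: (a) triangle-containing graphs $H$ must be routed through bipartite graphs rather than split graphs, since the latter contain triangles; and (b) in contrast to the deletion variant there is no contraction analogue of Theorem~\ref{t-po}, which is precisely why this dichotomy is a clean \NP-hardness statement with no co-\NP-hardness branch. I would also remark that, unlike Remark~2 for the deletion case, membership in \NP\ is not asserted here, since $H$-free graphs are in general not closed under edge contraction.
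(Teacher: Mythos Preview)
Your proof is correct and follows essentially the same approach as the paper, invoking the same ingredients (Corollary~\ref{c-coalpha}, Theorems~\ref{thm:bipartite}, \ref{t-splitalpha2}, \ref{thm:NPC}(i), and Lemma~\ref{l-p44}). The only cosmetic difference is that the paper partitions the cycle case into odd versus even cycles (routing all odd cycles through bipartite graphs), whereas you partition by $r=3$, $r\in\{4,5\}$, and $r\ge 6$; both case splits are valid and lead to the same conclusion.
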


\begin{proof}
Let $H$ be a graph.
If $H$ is an induced subgraph of $P_4$, then we use Corollary~\ref{c-coalpha} to obtain polynomial-time solvability. 
Now suppose that $H$ is not an induced subgraph of $P_4$.
If $H$ contains an induced cycle that is odd, then we use Theorem~\ref{thm:bipartite} to obtain \NP-hardness. 
If $H$ contains an induced cycle that is even, then $H$ either contains an induced $C_4$ or, if the even cycle has at least six vertices, an induced $2P_2$. This means that we can use Theorem~\ref{t-splitalpha2} to obtain \NP-hardness after recalling that split graphs are $(2P_2,C_4)$-free.
Assume $H$ contains no cycle. Then $H$ is a forest.
If $H$ contains an induced $3P_1$, then we use Theorem~\ref{thm:NPC}~(i) to obtain \NP-hardness even if $d=1$, after observing that
a graph~$G$ is $3P_1$-free if and only if $\alpha(G)=2$.
Assume $H$ is $3P_1$-free. 
Then $2P_2\ssi H$ by Lemma~\ref{l-p44}, which means we can use  Theorem~\ref{t-splitalpha2} again to obtain \NP-hardness. 
\qed
\end{proof}

\subsection{When $\pi=\omega$}\label{s-omegaclas}

The complexity dichotomy for {\sc Deletion Blocker($\omega$)} follows immediately from Theorem~\ref{t-indepnumber2} after making two observations. First,
{\sc Deletion Blocker($\omega$)} for $H$-free graphs is equivalent to {\sc Deletion Blocker($\alpha$)} for $\overline{H}$-free graphs. Second, the graph $P_4$ is self-complementary, that is, $\overline{P_4}=P_4$.

\begin{theorem}
\label{thm-deletion-omega}
Let $H$ be a graph.
If $H\subseteq_i P_4$, then {\sc Deletion Blocker($\omega$)} is polynomial-time solvable for $H$-free graphs; otherwise it is co-\NP-hard or \NP-hard for $H$-free graphs.
\end{theorem}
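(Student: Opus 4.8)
The plan is to reduce the statement to Theorem~\ref{t-indepnumber2} by complementation, using the two facts already flagged before the statement: $\omega(G)=\alpha(\overline G)$ for every graph $G$, and $\overline{P_4}=P_4$.

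First I would record the instance-level equivalence. For a graph $G$ and a vertex $v$ we have $\overline{G-v}=\overline G-v$, so complementation commutes with vertex deletion; iterating, $G$ can be $k$-vertex-deleted into $G'$ if and only if $\overline G$ can be $k$-vertex-deleted into $\overline{G'}$. Combined with $\omega(G')=\alpha(\overline{G'})$ and $\omega(G)=\alpha(\overline G)$, this shows that $(G,d,k)$ is a yes-instance of {\sc Deletion Blocker($\omega$)} if and only if $(\overline G,d,k)$ is a yes-instance of {\sc Deletion Blocker($\alpha$)}. Since $\overline G$ is computable from $G$ in polynomial time, and $G$ is $H$-free if and only if $\overline G$ is $\overline H$-free, the map $G\mapsto \overline G$ is a polynomial-time isomorphism between {\sc Deletion Blocker($\omega$)} restricted to $H$-free graphs and {\sc Deletion Blocker($\alpha$)} restricted to $\overline H$-free graphs. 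In particular it transfers polynomial-time solvability in one direction and transfers \NP-hardness as well as co-\NP-hardness in the other, without blurring the two kinds of hardness.

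Next I would translate the case condition. Because $F\ssi G$ if and only if $\overline F\ssi \overline G$, and because $\overline{P_4}=P_4$, we have $\overline H\ssi P_4$ if and only if $H\ssi P_4$. Hence, if $H\ssi P_4$ then $\overline H\ssi P_4$, and Theorem~\ref{t-indepnumber2} (applied with $\overline H$ in the role of $H$) gives that {\sc Deletion Blocker($\alpha$)} is polynomial-time solvable on $\overline H$-free graphs, so by the equivalence above {\sc Deletion Blocker($\omega$)} is polynomial-time solvable on $H$-free graphs. Conversely, if $H\not\ssi P_4$ then $\overline H\not\ssi P_4$, and Theorem~\ref{t-indepnumber2} gives that {\sc Deletion Blocker($\alpha$)} is \NP-hard or co-\NP-hard on $\overline H$-free graphs, whence {\sc Deletion Blocker($\omega$)} is \NP-hard or co-\NP-hard on $H$-free graphs. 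This proves the dichotomy.

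There is no substantial obstacle here; the only points requiring care are the routine verifications that vertex deletion and the induced-subgraph relation both behave correctly under complementation, and the bookkeeping that $G\mapsto\overline G$ is a genuine two-way reduction so that the (co-)\NP-hardness side is not lost. If one additionally wanted the sharper \NP-complete or co-\NP-complete statements in particular subcases, one would invoke membership arguments (for instance, whenever {\sc Clique} is polynomial-time solvable on the class in question one gets membership in \NP, mirroring Remark~2), but that is not needed for the stated theorem.
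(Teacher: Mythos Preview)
Your proposal is correct and mirrors the paper's own argument exactly: the paper derives Theorem~\ref{thm-deletion-omega} immediately from Theorem~\ref{t-indepnumber2} via the two observations that {\sc Deletion Blocker($\omega$)} on $H$-free graphs is equivalent to {\sc Deletion Blocker($\alpha$)} on $\overline H$-free graphs (by complementation) and that $\overline{P_4}=P_4$. Your additional verifications (that complementation commutes with vertex deletion and with the induced-subgraph relation) are routine but appropriate to spell out.
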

We now consider the {\sc Contraction Blocker($\omega$)} problem for $H$-free graphs.
We start by giving a sufficient condition for computational hardness. 
Let ${\cal G}$ be a graph class with the following property: if $G\in {\cal G}$, then so are $2G$ and $G\oplus K_r$ for any $r\geq 1$. We call such a graph class {\it clique-proof}.

\begin{theorem}\label{t-cll}
If {\sc Clique}  is \NP-complete for a clique-proof graph class~${\cal G}$, then  {\sc Contraction Blocker($\omega$)} is co-\NP-hard for ${\cal G}$, even if $d=k=1$.
\end{theorem}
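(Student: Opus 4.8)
The plan is to give a polynomial many-one reduction from the complement of {\sc Clique} on ${\cal G}$ — which is co-\NP-complete by hypothesis — to {\sc Contraction Blocker($\omega$)} on ${\cal G}$ restricted to instances with $d=k=1$. Given an instance $(G,t)$ of {\sc Clique} with $G\in{\cal G}$, I would first dispose of the trivial cases $t\le 1$ (these are decidable in constant time and can be mapped to a fixed yes- or no-instance of the target problem whose graph still lies in ${\cal G}$ and which has $d=k=1$, for example $2G$, which admits no $\omega$-reducing contraction, or $G\oplus K_{|V(G)|+1}$, which does). So assume $t\ge 2$. I would then output the instance $(G',1,1)$, where $G'=2G\oplus K_t$. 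Since ${\cal G}$ is clique-proof, $2G\in{\cal G}$ and hence $G'=(2G)\oplus K_t\in{\cal G}$, applying the two defining properties in sequence; clearly $G'$ is computable in polynomial time.

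The core of the argument is to verify that $(G',1,1)$ is a yes-instance of {\sc Contraction Blocker($\omega$)} if and only if $\omega(G)<t$, i.e.\ if and only if $(G,t)$ is a no-instance of {\sc Clique}. Here one uses that $\omega(G')=\max(\omega(2G),\omega(K_t))=\max(\omega(G),t)$ and that, with $d=k=1$, the question is simply whether $G'$ has a single edge whose contraction strictly decreases $\omega$. If $\omega(G)<t$, then $\omega(G)\le t-1$ and $\omega(G')=t$; contracting any edge of the component $K_t$ produces $2G\oplus K_{t-1}$, with clique number $\max(\omega(G),t-1)=t-1=\omega(G')-1$, so $(G',1,1)$ is a yes-instance.

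Conversely, suppose $\omega(G)\ge t$, so $\omega(G')=\omega(G)$. Contracting an edge inside the component $K_t$ yields $2G\oplus K_{t-1}$, whose clique number is $\max(\omega(G),t-1)=\omega(G)=\omega(G')$; and contracting an edge lying inside one of the two copies of $G$ leaves the other copy intact, so the resulting graph still contains a clique of size $\omega(G)=\omega(G')$. In both cases $\omega$ is not decreased, hence $(G',1,1)$ is a no-instance. This establishes the equivalence and therefore the co-\NP-hardness of {\sc Contraction Blocker($\omega$)} on ${\cal G}$, even when $d=k=1$.

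I do not expect a genuine obstacle here; the proof is essentially a gadget construction. The only points requiring a little care are (i) the normalization to $t\ge 2$, so that $K_t$ actually contains an edge that can be contracted, and (ii) the remark that although an edge contraction can \emph{create} larger cliques (contracting an edge of $C_4$ yields a triangle), it can never decrease $\omega$ by more than one — which, combined with the case analysis above, is what makes the $d=k=1$ formulation coincide with "does there exist an $\omega$-reducing contraction". The conceptual content is the choice of $G'=2G\oplus K_t$: the factor $2G$ neutralizes any single contraction performed inside a copy of $G$, while the summand $K_t$ is reducible precisely when $\omega(G)$ already lies strictly below $t$.
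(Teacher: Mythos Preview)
Your argument is correct and essentially identical to the paper's: both construct $G'=2G\oplus K_t$ (the paper writes $K_{\ell+1}$ with $\ell\ge 1$, which is the same after the relabelling $\ell=t-1$) and verify that a single contraction can lower $\omega(G')$ if and only if $\omega(G)<t$, using the two copies of $G$ to absorb any contraction not performed in the $K_t$ component. Your explicit handling of the degenerate cases $t\le 1$ is a harmless elaboration; the paper simply assumes $\ell\ge 1$.
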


\begin{proof}
Let ${\cal G}$ be a graph class that is clique-proof. From a given graph $G\in {\cal G}$ and given integer $\ell\geq 1$ we construct the graph $G'=2G\oplus K_{\ell+1}$. Note that $G'\in {\cal G}$ by definition and that $\omega(G')=\max\{\omega(G),\ell+1\}$. 
It suffice to prove that $\omega(G)\leq \ell$ if and only if $G'$ can be $1$-contracted into a graph $G^*$ with $\omega(G^*)\leq \omega(G')-1$. 

First suppose that  $\omega(G)\leq \ell$. Then $\omega(G')=\omega(K_{\ell+1})=\ell+1$. In $G'$ we contract an edge of the $K_{\ell+1}$. This yields the graph $G^*=2G\oplus K_\ell$, which has
clique number $\omega(G^*)=\ell$, as $\omega(K_\ell)=\ell$ and $\omega(G)\leq \ell$. As $\omega(G')=\ell+1$, this means that  $\omega(G^*)\leq \omega(G')-1$. 

Now suppose that $G'$ can be $1$-contracted into a graph $G^*$ with $\omega(G^*)\leq \omega(G')-1$. 
As contracting an edge in one of the two copies of $G$ in $G'$ does not lower the clique number of $G'$, the contracted edge must be in the $K_{\ell+1}$, that is, $G^*=2G\oplus K_\ell$.
As this did result in a lower clique number, we conclude that $\omega(G')=\omega(K_{\ell+1})=\ell+1$ and $\omega(G^*)=\omega(2G\oplus K_\ell)=
\max\{\omega(G),\ell\}=\ell$.
The latter equality implies that $\omega(G)\leq \ell$.\qed
\end{proof}

We need a number of special graphs, namely the {\it cobanner}, {\it bull}, the aforementioned butterfly and the
 {\it paw} (the graph $\overline{P_1\oplus P_3}$), which are all displayed in Figure~\ref{fig:prelim}.
\begin{figure}
\begin{center}
\includegraphics[scale=0.6]{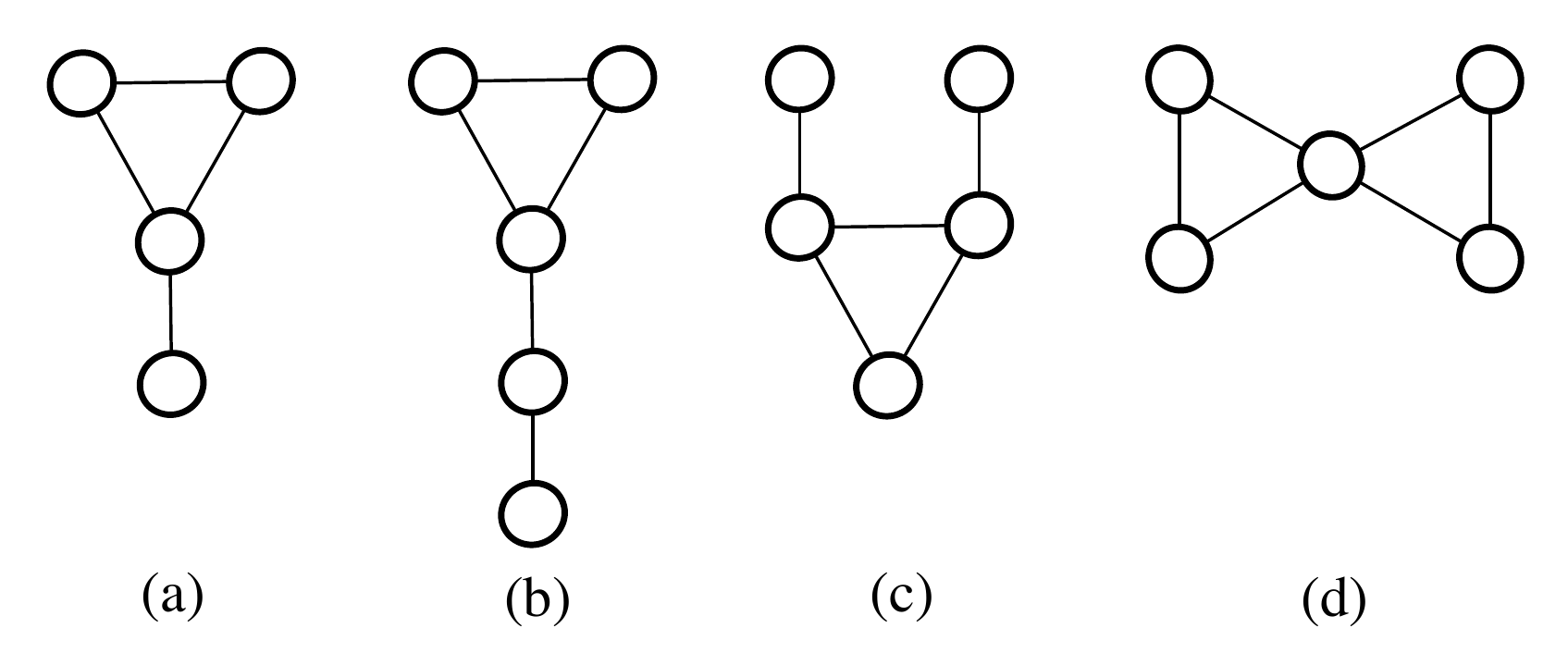}
\caption{(a) Paw. (b) Cobanner. (c) Bull. (d) Butterfly.}
\label{fig:prelim}
\end{center}
\end{figure}

We also need the following lemma from Poljak.

\begin{lemma}[\cite{Po74}]\label{lem:poljak2}
The {\sc Clique} problem is \NP-complete for the following classes:
$(C_5,P_5)$-free graphs, $K_{1,3}$-free graphs, $\mbox{cobanner}$-free graphs and $(\mbox{bull},P_5)$-free graphs.
\end{lemma}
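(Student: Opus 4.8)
The plan is to handle all four classes with a single reduction, producing graphs that lie in the intersection of all of them. The starting point is the fact that {\sc Independent Set} (equivalently {\sc Vertex Cover}) is \NP-complete on graphs of girth at least~$6$, i.e.\ on $(C_3,C_4,C_5)$-free graphs. To get this I would take an arbitrary graph $G=(V,E)$ and subdivide every edge of $G$ exactly twice; call the result $G^{**}$. Each cycle of length $\ell$ in $G$ becomes a cycle of length $3\ell\geq 9$ in $G^{**}$, so $G^{**}$ has girth at least~$9$ and in particular is $(C_3,C_4,C_5)$-free. Applying the fact used in the proof of Theorem~\ref{t-po} (subdividing one edge twice raises the independence number by exactly~$1$), edge by edge, gives $\alpha(G^{**})=\alpha(G)+|E|$, so $(G,s)$ is a yes-instance of {\sc Independent Set} if and only if $(G^{**},s+|E|)$ is; since {\sc Vertex Cover}, and hence {\sc Independent Set}, is \NP-complete (see~\cite{GJ79}), {\sc Independent Set} is \NP-hard on $(C_3,C_4,C_5)$-free graphs.

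Next I would pass to complements. For any graph $H$, a set is a clique in $\overline H$ exactly when it is an independent set in $H$, so {\sc Clique} is \NP-hard on $\{\overline H : H \text{ is } (C_3,C_4,C_5)\text{-free}\}$; membership in \NP\ is immediate, the clique itself being a certificate. It remains to check that $\overline H$ lies in all four classes whenever $H$ is $(C_3,C_4,C_5)$-free. Using $F\ssi H \Leftrightarrow \overline F\ssi \overline H$ together with $\overline{C_3}=3P_1$, $\overline{C_4}=2P_2$ and $\overline{C_5}=C_5$, we get that $\overline H$ is $(3P_1,2P_2,C_5)$-free.

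The conclusion then follows from small-graph containments. Since $3P_1\ssi K_{1,3}$, $3P_1\ssi \mbox{bull}$ and $3P_1\ssi P_5$ (each of $K_{1,3}$, the bull and $P_5$ contains three pairwise non-adjacent vertices), every $3P_1$-free graph is $K_{1,3}$-free, $\mbox{bull}$-free and $P_5$-free. Since $2P_2\ssi\mbox{cobanner}$ — equivalently $C_4\ssi\mbox{banner}$, so a $C_4$-free graph is banner-free and its complement cobanner-free — every $2P_2$-free graph is $\mbox{cobanner}$-free. Combining these with the explicit $C_5$-freeness, $\overline H$ is simultaneously $(C_5,P_5)$-free, $K_{1,3}$-free, $\mbox{cobanner}$-free and $(\mbox{bull},P_5)$-free, which yields the lemma for all four classes at once.

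The only genuine obstacle is getting the bookkeeping of the source instances right: it is not enough for the {\sc Independent Set} instances to be triangle-free, because $C_5$ and the cobanner both have independence number~$2$ and so cannot be excluded by a bound on $\alpha(\overline H)$; this is exactly why the double-subdivision step, which pushes the girth past~$5$, is carried out before complementing. Everything else is a routine verification on five-vertex graphs.
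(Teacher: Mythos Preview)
Your argument is correct. The paper does not actually supply its own proof of this lemma; it simply quotes the result from Poljak~\cite{Po74}. What you have written is precisely the argument underlying that citation: Poljak's double-subdivision trick (each subdivision of an edge into a $P_4$ raises $\alpha$ by exactly~$1$) pushes the girth arbitrarily high while preserving the {\sc Independent Set} instance, so {\sc Independent Set} is \NP-complete on $(C_3,C_4,C_5)$-free graphs; complementing yields \NP-completeness of {\sc Clique} on $(3P_1,2P_2,C_5)$-free graphs; and the easy containments $3P_1\ssi K_{1,3}$, $3P_1\ssi\mbox{bull}$, $3P_1\ssi P_5$, $2P_2\ssi\mbox{cobanner}$ then place this single hard class inside all four target classes simultaneously. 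The only thing worth adding is that your observation in the final paragraph---that triangle-freeness alone would not suffice and one genuinely needs girth${}>5$ before complementing---is exactly the point of Poljak's iterated construction, so you have identified the one non-routine step correctly.
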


We use Lemma~\ref{lem:poljak2} in the proof of our next lemma.

\begin{lemma}\label{t-cliquenumber}
Let $H$ be a connected graph. If $H$ is neither an induced subgraph of $P_4$ nor of 
the paw, then $1$-{\sc Contraction Blocker($\omega$)} is \NP-hard or co-\NP-hard for $H$-free graphs.
\end{lemma}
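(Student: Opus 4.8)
Here is how I would attack Lemma~\ref{t-cliquenumber}.

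The plan is to combine the co-\NP-hardness consequences of Theorem~\ref{t-cll} with the \NP-hardness results of Theorem~\ref{thm:NPC}~(iii), Corollary~\ref{c-butterfly} and Theorem~\ref{t-perfect}, and then to show that the hypotheses on $H$ force $H$ to contain, as an induced subgraph, one of a fixed short list of ``obstructions'', each of which suffices to transfer hardness to $H$-free graphs. First I would check that the four classes in Lemma~\ref{lem:poljak2} are clique-proof: each of $C_5$, $P_5$, $K_{1,3}$, the bull and the cobanner is a connected, non-complete graph, so for such an $F$ an induced copy of $F$ in $2G$ or in $G\oplus K_r$ must lie inside a single copy of $G$ (it cannot be complete and cannot meet two components). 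Hence $F$-freeness survives $G\mapsto 2G$ and $G\mapsto G\oplus K_r$, and Theorem~\ref{t-cll} gives that $1$-{\sc Contraction Blocker($\omega$)} is co-\NP-hard for $K_{1,3}$-free graphs, for cobanner-free graphs, for $(C_5,P_5)$-free graphs and for $(\mbox{bull},P_5)$-free graphs. Together with the facts that graphs with $\omega=3$ form a subclass of $K_4$-free graphs (Theorem~\ref{thm:NPC}~(iii)), that $1$-{\sc Contraction Blocker($\omega$)} is \NP-complete for $(\mbox{butterfly},\mbox{diamond},K_4)$-free graphs (Corollary~\ref{c-butterfly}), and that it is \NP-complete for $C_4$-free perfect graphs with $\omega=3$, which form a subclass of $C_4$-free graphs (Theorem~\ref{t-perfect}), I would conclude: if $H$ contains, as an induced subgraph, one of
\[
K_{1,3},\quad C_5,\quad P_5,\quad \mbox{cobanner},\quad \mbox{bull},\quad C_4,\quad \mbox{diamond},\quad \mbox{butterfly},\quad K_4,
\]
then $1$-{\sc Contraction Blocker($\omega$)} is \NP-hard or co-\NP-hard for $H$-free graphs, since a class forbidding a set $\{F_1,\dots,F_t\}$ of induced subgraphs is contained in the class of $H$-free graphs as soon as some $F_j\ssi H$.

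It then remains to prove the combinatorial claim: \emph{every connected $H$ with $H\not\ssi P_4$ and $H\not\ssi\mbox{paw}$ contains one of the nine graphs above as an induced subgraph.} I would split on whether $H$ is triangle-free. If $H$ is triangle-free and claw-free, then every vertex has degree at most~$2$ (the three neighbours of a vertex of degree $\ge 3$ would be pairwise non-adjacent, forming a claw), so $H$ is a path or a cycle; being in addition $C_4$-free, $C_5$-free and $P_5$-free forces $H\in\{P_1,P_2,P_3,P_4\}$ (any $C_n$ with $n\ge 6$ has an induced $P_5$), contradicting $H\not\ssi P_4$. Hence a triangle-free $H$ meeting the hypotheses already contains a claw, a $C_4$, a $C_5$ or a $P_5$.

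If instead $H$ contains a triangle $T=\{a,b,c\}$ and $|V(H)|\le 4$, then $H$ is the diamond or $K_4$ (the only other $4$-vertex graph with a triangle is the paw, excluded). So assume $|V(H)|\ge 5$ and, for contradiction, that $H$ avoids all nine graphs. Since $H$ is $K_4$-free and diamond-free, every edge lies in at most one triangle, so any neighbour $d\notin T$ of $T$ is adjacent to exactly one vertex of $T$; say $d\sim c$, so $\{a,b,c,d\}$ induces a paw. By connectivity some $e\notin\{a,b,c,d\}$ has a neighbour in $\{a,b,c,d\}$, and a case analysis on $S=N(e)\cap\{a,b,c,d\}$ finishes the argument: $S=\{d\}$ gives an induced cobanner (the triangle $abc$ with pendant path $c$--$d$--$e$); $S=\{c\}$ gives an induced claw centred at $c$ with leaves $a,d,e$; $S=\{a\}$ or $S=\{b\}$ gives an induced bull; $S=\{c,d\}$ gives an induced butterfly (triangles $abc$ and $cde$ meeting only in $c$); $S=\{a,d\}$ or $S=\{b,d\}$ gives an induced $C_4$; and if $S$ contains two vertices of $T$, then those two together with $e$ and the third vertex of $T$ induce a $K_4$ (if $e$ sees all of $T$) or a diamond (otherwise); any larger $S$ contains one of these subsets. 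This proves the claim, and the lemma follows.

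The main obstacle is the last case analysis: one has to make sure the obstruction list is exactly rich enough that no connected $H$ beyond $P_4$ and the paw escapes it, while each listed obstruction genuinely transfers hardness. In particular, the identification of the cobanner with ``a triangle plus a pendant path of length two'' is precisely what makes the case $S=\{d\}$ close, and explains why the cobanner (rather than some other $5$-vertex graph) is the relevant graph from Lemma~\ref{lem:poljak2}.
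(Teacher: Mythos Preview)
Your proof is correct and follows essentially the same route as the paper: the same nine obstructions $\{K_{1,3},C_5,P_5,\mbox{cobanner},\mbox{bull},C_4,\mbox{diamond},\mbox{butterfly},K_4\}$ are identified via Theorem~\ref{t-cll}, Corollary~\ref{c-butterfly} and Theorem~\ref{t-perfect}, and the same combinatorial argument (triangle-free versus containing a triangle, then a fifth-vertex case analysis) shows that any connected $H$ avoiding all nine must satisfy $H\ssi P_4$ or $H\ssi\mbox{paw}$. The only cosmetic differences are that you explicitly verify the clique-proof property (which the paper uses implicitly) and enumerate the subsets $S=N(e)\cap\{a,b,c,d\}$ exhaustively rather than pruning them upfront with the assumed exclusions.
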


\begin{proof}
Let $H$ be a connected graph that is neither an induced subgraph of $P_4$ nor of the paw.
If $H$ contains an induced $C_4$, use Theorem~\ref{t-perfect}. If $H$ contains an induced $K_4$, diamond or butterfly, use 
Corollary~\ref{c-butterfly}.
If $H$ contains an induced $K_{1,3}$, $C_5$, $P_5$, bull or cobanner, use Lemma~\ref{lem:poljak2} with Theorem~\ref{t-cll}. So
from now on we may assume that $H$ is $(C_4,C_5,P_5,K_{1,3},K_4,\mbox{diamond},\mbox{bull},\mbox{butterfly},$ $\mbox{cobanner})$-free.
Below we show that this leads to a contradiction.

First suppose that $H$ contains no cycle. Then, as $H$ is connected, $H$ is a tree. Because $H$ is $K_{1,3}$-free, $H$ is a path. 
Our assumption that $H$ is neither an induced subgraph of $P_4$ nor of the paw
implies that $H$ contains an induced $P_5$, which is not possible as $H$ is $P_5$-free.

Now suppose that $H$ contains a cycle $C$. Then $C$ must have exactly three vertices, because $H$ is $(C_4,C_5,P_5)$-free.
As $H$ is not an induced subgraph of the paw, we find that $H$ contains at least one vertex~$x$ not on $C$.
As $H$ is connected, we may assume that $x$ has a neighbour on $C$. Because $H$ is $(\mbox{diamond},K_4)$-free, $x$ has exactly one neighbour on $C$. Let $v$ be this neighbour. Hence, $H$ contains an induced paw (consisting of $x$, $v$ and the other two vertices of $C$). 
As $H$ is not an induced subgraph of the paw
 and $H$ is connected, it follows that $H$ contains a vertex $y\notin V(C)\cup \{x\}$ that is adjacent to a vertex on $C$ or to~$x$. 

Suppose that $y$ is adjacent to a vertex of $C$. Then, as $H$ is $(\mbox{diamond},K_4)$-free, $y$ has exactly one neighbour $u$ in $C$. If $u=v$ then $H$ either contains an induced claw (if $x$ and $y$ are non-adjacent) or an induced butterfly (if $x$ and $y$ are adjacent). Since, by our assumption, this is not possible, it follows that $u\neq v$. Then, because $H$ is bull-free, we deduce that $x$ and $y$ are adjacent. However, then the vertices, $u,v,x,y$ form an induced $C_4$, which is not possible as $H$ is $C_4$-free. We conclude that $y$ is not adjacent to a vertex of $C$, so $y$ must be adjacent to $x$ only. However, then $H$ contains an induced cobanner, a contradiction. This completes the proof of Lemma~\ref{t-cliquenumber}.
\qed
\end{proof}

A graph~$G$ is \emph{complete multipartite} if~$V(G)$ can be partitioned into~$k$ independent sets $V_1,\ldots,V_k$ for some integer~$k$, such that two vertices are adjacent if and only if they belong to two different sets~$V_i$ and~$V_j$.
We need a result of Olariu on paw-free graphs. 

\begin{lemma}[\cite{Ol88}]
\label{l-paw}
Every connected paw-free graph is either triangle-free or complete multipartite.
\end{lemma}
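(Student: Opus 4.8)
The plan is to prove the equivalent statement that a \emph{connected} paw-free graph $G$ containing a triangle must be complete multipartite. First I would record the standard fact that a graph is complete multipartite if and only if it contains no induced $P_1\oplus P_2$ (a non-edge $uv$ together with a vertex $w$ adjacent to neither endpoint), equivalently, non-adjacency is an equivalence relation on its vertex set. So, fixing a triangle $T=\{x,y,z\}$ of $G$, it suffices to show that $G$ has no induced $P_1\oplus P_2$.

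The engine of the argument is the claim that in any connected paw-free graph, every vertex has exactly $2$ or $3$ neighbours in every triangle (counting a triangle vertex as having $2$ neighbours in its own triangle). For this, I would first note that paw-freeness alone forbids a vertex with exactly one neighbour in a triangle, since a vertex $v$ with $v\sim x$, $v\not\sim y$, $v\not\sim z$ would make $G[\{v,x,y,z\}]$ a paw; hence every vertex has $0$, $2$, or $3$ neighbours in any triangle. To exclude the value $0$, I would use connectivity: if a vertex $v\notin T'$ had no neighbour in some triangle $T'$, take a shortest path from $v$ to $T'$ and let $v_j$ be the last vertex on it with no neighbour in $T'$; then its successor $v_{j+1}$ has at least one, hence (by the dichotomy, or because $v_{j+1}\in T'$) at least two neighbours in $T'$, so two of those neighbours together with $v_{j+1}$ form a triangle to which $v_j$ is attached by exactly one edge, producing a paw. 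This contradiction proves the claim.

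To finish, suppose for contradiction that $G[\{a,b,c\}]\cong P_1\oplus P_2$ with $bc\in E(G)$ and $a$ adjacent to neither $b$ nor $c$. Applying the claim to the triangle $T$, both $b$ and $c$ have at least two neighbours in $T$, so (as $|T|=3$) they share a common neighbour $w\in T$, and $\{b,c,w\}$ induces a triangle. Applying the claim once more to $\{b,c,w\}$, the vertex $a$ must have at least two neighbours among $b,c,w$; but $a$ is non-adjacent to both $b$ and $c$, a contradiction. Hence $G$ contains no induced $P_1\oplus P_2$ and is complete multipartite, as required.

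I expect the main obstacle to be the claim itself, and within it the elimination of the ``$0$ neighbours'' case: this is the only step that genuinely uses connectivity of $G$, and it needs the right choice of vertex along a shortest path to the triangle in order to exhibit a paw. The remaining ingredients — the $0/2/3$ dichotomy from a single forbidden paw, and the two pigeonhole applications at the end — are routine.
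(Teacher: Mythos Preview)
The paper does not supply a proof of this lemma; it is quoted from Olariu~\cite{Ol88} without argument, so there is nothing in the paper to compare your approach against. Your proof is correct and complete: the $0/2/3$ neighbour count for any vertex versus a fixed triangle follows directly from paw-freeness, your shortest-path argument correctly eliminates the $0$ case using connectivity (note that your $v_{j+1}$ cannot lie in $T'$ since $v_j\sim v_{j+1}$ and $v_j$ has no neighbour in $T'$, so the dichotomy applies to it), and the two pigeonhole steps at the end go through exactly as you describe.

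One small slip to fix: your parenthetical gloss of $P_1\oplus P_2$ reads ``a non-edge $uv$ together with a vertex $w$ adjacent to neither endpoint,'' which actually describes $3P_1$; it should be an \emph{edge} $uv$ together with a vertex $w$ adjacent to neither endpoint. The body of your argument uses the correct structure (you take $bc\in E(G)$), so this is purely a wording issue in the aside.
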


We are ready to present our result for {\sc Contraction Blocker($\omega$)} restricted to $H$-free graphs. This is the only result where we do not have a dichotomy due to one missing case.

\begin{theorem}\label{t-cliquenumber2}
Let $H\neq C_3\oplus P_1$ be a graph. If $H\ssi P_4$ or $H\ssi \mbox{paw}$, then {\sc Contraction Blocker$(\omega$)} is polynomial-time solvable for $H$-free graphs, otherwise it is \NP-hard or co-\NP-hard for $H$-free graphs.
\end{theorem}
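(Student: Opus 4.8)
The strategy is to split into the polynomial and the hardness regimes. For the hardness direction, suppose $H\neq C_3\oplus P_1$ is not an induced subgraph of $P_4$ nor of the paw. If $H$ is connected, we are done immediately by Lemma~\ref{t-cliquenumber}. So assume $H$ is disconnected, say $H=H_1\oplus\cdots\oplus H_t$ with $t\ge 2$ components. If $H$ contains an induced $3P_1$, then $H$-free graphs with $\omega\le 2$ form a subclass on which we can already invoke the reduction behind Theorem~\ref{thm:NPC}~(i)/(iii) style arguments; more directly, $3P_1$-free means $\alpha\le2$, and one checks whether the hardness constructions used earlier (Theorem~\ref{t-perfect}, Corollary~\ref{c-butterfly}) can be arranged to avoid an induced $3P_1$ — this needs a short separate argument. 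If $H$ is $3P_1$-free, then $H$ has at most two components and (being $3P_1$-free) at most one component that is not a single vertex; since $H\neq C_3\oplus P_1$ and $H\not\ssi\mathrm{paw}$, the non-trivial component $H_1$ is either not $\ssi$ paw (then we want to reduce to the connected case, but $H$-freeness is weaker than $H_1$-freeness, so instead we note $H_1\oplus P_1$-free graphs still contain all the obstruction graphs used in Lemma~\ref{t-cliquenumber} after adding an isolated vertex, since each obstruction $G'$ can be replaced by $G'\oplus K_1$ which is still in the relevant class), or $H_1\ssi$ paw but $H\not\ssi$ paw forces $H$ to have a second vertex, i.e. $H=\mathrm{paw}\oplus P_1$ or a subgraph thereof other than $C_3\oplus P_1$, which must then be handled together with the clique-proof machinery of Theorem~\ref{t-cll}.

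For the polynomial direction, if $H\ssi P_4$ we are done by Corollary~\ref{c-coalpha}. So suppose $H\ssi\mathrm{paw}$ but $H\not\ssi P_4$; the only such $H$ are the paw itself and $C_3$ (and $C_3\oplus P_1$, which is excluded). For $C_3$-free (i.e.\ triangle-free) graphs, $\omega\le 2$, so {\sc Contraction Blocker($\omega$)} is trivial (one cannot reduce the clique number below the number of components-with-an-edge worth of contractions — in fact the answer is determined in constant time as in Case~2 of Theorem~\ref{t-splitalpha}). For paw-free graphs we invoke Lemma~\ref{l-paw}: each connected component is triangle-free or complete multipartite. On triangle-free components nothing need be done; on a complete multipartite component $G_i$ with parts $V_1,\dots,V_{k_i}$ we have $\omega(G_i)=k_i$, and contracting an edge (necessarily between two parts) merges those two parts, lowering the clique number by exactly one, while never creating a larger clique. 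Hence for a connected complete multipartite graph, reducing $\omega$ by $d$ costs exactly $d$ contractions, and globally the cheapest strategy is to spend contractions on the components achieving the current maximum clique number; this yields a simple greedy/counting polynomial-time algorithm, and membership in \NP\ follows since paw-free graphs are closed under edge contraction (a contraction of a complete multipartite graph is complete multipartite, and a contraction inside a triangle-free component either keeps it triangle-free or turns the component complete multipartite) and {\sc Clique} is polynomial on paw-free graphs by Lemma~\ref{l-paw}.

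The main obstacle I expect is the bookkeeping in the disconnected hardness case: one must verify that for every obstruction graph $G'$ used in Lemma~\ref{t-cliquenumber} (the $C_4$-reduction of Theorem~\ref{t-perfect}, the butterfly/diamond/$K_4$ reduction of Corollary~\ref{c-butterfly}, and the clique-proof reductions via Lemma~\ref{lem:poljak2} and Theorem~\ref{t-cll}), the modified instance $G'\oplus K_1$ (or more generally $G'\oplus$ (enough isolated vertices)) is still $H$-free and still realizes the same answer, which is immediate for the clique-proof classes but needs a line of checking for the perfect-graph construction; and one must pin down exactly which disconnected $H$ with a non-trivial component $\ssi$ paw survive, i.e.\ confirm that $C_3\oplus P_1$ is genuinely the only leftover open case and that e.g.\ $\mathrm{paw}\oplus P_1$ or $P_3\oplus P_1$ fall on the hard side via the $3P_1$ or clique-proof arguments.
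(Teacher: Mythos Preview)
Your proposal has a genuine error in the polynomial direction and significant gaps in the hardness direction.

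\textbf{Polynomial direction.} Your claim that in a complete multipartite graph ``contracting an edge (necessarily between two parts) merges those two parts, lowering the clique number by exactly one'' is false. If $u\in V_a$ and $v\in V_b$ with $a\neq b$, then $N(u)\cup N(v)$ is the whole remaining vertex set, so the contracted vertex becomes \emph{universal}; the resulting graph is complete multipartite with parts $V_1,\dots,V_a\setminus\{u\},\dots,V_b\setminus\{v\},\dots,V_k,\{w\}$. Hence if $|V_a|,|V_b|\ge 2$ the clique number \emph{increases} by one, and it decreases only when both parts are singletons. Your greedy algorithm therefore does not work. The paper avoids this entirely: it simply observes that complete multipartite graphs are $P_4$-free and invokes Corollary~\ref{c-coalpha}. (Also, $C_3\oplus P_1$ is \emph{not} an induced subgraph of the paw; the only $4$-vertex induced subgraph of the paw is the paw itself.)

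\textbf{Hardness direction.} For disconnected $H$ you are making this far harder than necessary and missing the right tools. First, if some connected component $H_i$ of $H$ is neither $\ssi P_4$ nor $\ssi$ paw, then $H_i$-free graphs form a subclass of $H$-free graphs, so hardness follows \emph{directly} from Lemma~\ref{t-cliquenumber} with no need to pad instances with isolated vertices. Second, for the case $3P_1\ssi H$ you point at Theorem~\ref{t-perfect} and Corollary~\ref{c-butterfly} and propose to ``arrange'' those constructions to be $3P_1$-free; those instances are full of large independent sets and this will not work. The correct reference is Theorem~\ref{t-cohard}: cobipartite graphs are $3P_1$-free and the problem is \NP-complete there. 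Third, you never handle the case $2P_2\ssi H$; here the paper uses Theorem~\ref{t-splitalpha2} (split graphs are $2P_2$-free). Once you use these two results, the only disconnected $H$ that remain (all components $\ssi P_4$ or $\ssi$ paw, and $H$ is $(3P_1,2P_2)$-free) are exactly $2P_1$, $P_2\oplus P_1$, and $C_3\oplus P_1$; the first two are $\ssi P_4$ and the last is the excluded case. Your extended discussion of $\mathrm{paw}\oplus P_1$ and $P_3\oplus P_1$ is unnecessary: both contain $3P_1$ and are dispatched by the cobipartite result.
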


\begin{proof}
First assume that $H$ is connected.
If $H$ is an induced subgraph of $P_4$ then we use Corollary~\ref{c-coalpha}. If $H$ is an induced subgraph of the paw, then we know from
Lemma~\ref{l-paw} that $G$ is either $C_3$-free or complete multipartite. In the first case one must contract all the edges of an $H$-free graph in order to decrease its clique number. Hence {\sc Contraction Blocker($\omega$)} is polynomial-time solvable for $C_3$-free graphs. In the second case $H$ is $P_4$-free, so we can use Corollary~\ref{c-coalpha} again. If $H$ is  neither an induced subgraph of $P_4$ nor of the paw, then we use 
Lemma~\ref{t-cliquenumber}.

Now assume that $H$ is not connected. If $H$ contains a connected component that is not an induced subgraph of $P_4$ or the paw 
then we use Lemma~\ref{t-cliquenumber} again. Assume that each connected component of $H$ is an induced subgraph of $P_4$ or the paw. If $3P_1\ssi H$ or $2P_2\ssi H$ then we use Theorem~\ref{t-cohard} or Theorem~\ref{t-splitalpha2}, respectively. Hence, $H\in \{2P_1,P_2\oplus P_1,C_3\oplus P_1\}$. In the first two cases $H\ssi P_4$ and thus we can use Corollary~\ref{c-coalpha}, whereas we excluded the last case.
\qed
\end{proof}

\subsection{When $\pi=\chi$}\label{s-chiclas}

Recall that {\sc Deletion Blocker($\chi$)} and {\sc Contraction Blocker($\chi$)} are called {\sc Critical Vertex} and 
{\sc Contraction-Critical Edge}, respectively, if $d=k=1$. We need the following result announced in~\cite{PPR17b}; see~\cite{PPR17c} for its proof.

\begin{theorem}[\cite{PPR17b}]\label{t-critical}
If a graph $H\ssi  P_4$ or of $H\ssi P_1\oplus P_3$, then
{\sc Critical Vertex} and {\sc Contraction-Critical Edge} restricted to $H$-free graphs are polynomial-time solvable, otherwise they are \NP-hard or co-\NP-hard.
\end{theorem}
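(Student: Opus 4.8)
Since {\sc Critical Vertex} and {\sc Contraction-Critical Edge} are equivalent on every graph class~\cite{PPR17b}, the plan is to prove the classification once, say for {\sc Critical Vertex}, and to split the argument into the two polynomial cases and the hardness cases. \emph{Polynomial cases.} If $H\ssi P_4$, every $H$-free graph is a cograph, so polynomial-time solvability is immediate from Corollary~\ref{c-coalpha} with $d=k=1$. If $H\ssi P_1\oplus P_3$, I would exploit that $\overline{P_1\oplus P_3}$ is the paw: then $\overline G$ is paw-free, so by Olariu's Lemma~\ref{l-paw} each component of $\overline G$ is triangle-free or complete multipartite, hence $G=G_1\otimes\cdots\otimes G_t$ where each factor $G_i$, being the complement of one such component, is either $3P_1$-free (i.e. $\alpha(G_i)\le 2$) or a disjoint union of cliques. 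Since $\chi(G)=\sum_i\chi(G_i)$ and a vertex of $G_i$ is critical in $G$ iff it is critical in $G_i$, it suffices to decide this on each factor: a cluster factor has a critical vertex iff its maximum clique is unique, while for a factor with $\alpha(G_i)\le 2$ one has $\chi(G_i)=|V(G_i)|-\mu(\overline{G_i})$ with $\overline{G_i}$ triangle-free, so a vertex of $G_i$ is critical iff some maximum matching of $\overline{G_i}$ avoids it; all of this is polynomial.

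\emph{Hardness cases.} If $H$ is neither an induced subgraph of $P_4$ nor of $P_1\oplus P_3$, a short case check shows that $H$ contains, as an induced subgraph, one of $C_3$, $C_4$, $2P_2$, $K_{1,3}$, $P_2\oplus 2P_1$, $4P_1$ (the minimal elements of the hardness regime under $\ssi$). It therefore suffices to prove that {\sc Critical Vertex} is \NP-hard or co-\NP-hard when restricted to $C_3$-free, $C_4$-free, $2P_2$-free, $K_{1,3}$-free, $(P_2\oplus 2P_1)$-free, or $4P_1$-free graphs, since each such class is contained in the class of $H$-free graphs for the corresponding $H$. Each base case I would settle by a polynomial reduction from a tailored \NP-hard problem: $3$-{\sc Colourability} and restricted variants of {\sc Coloring}, {\sc Clique Cover} on $K_4$-free graphs for the $4P_1$-free case, {\sc Edge Colouring} via the line-graph correspondence for the claw-free case, and a $k=1$ adaptation of the $C_4$-free construction of Theorem~\ref{t-perfect} for the $C_4$-free case; the instance is always engineered so that exactly one vertex (equivalently one edge) can possibly be critical, which is what makes the budget $d=k=1$ tight. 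For the triangle-free case a fixed $4$-chromatic triangle-free graph with no critical vertex serves as a padding gadget.

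\emph{Main obstacle.} The structural side is routine once Olariu's theorem is invoked. The hard part is the hardness direction: the $d=1$, unbounded-$k$ reductions used earlier in the paper (Theorems~\ref{thm:NPC}(v) and~\ref{t-perfect}) collapse when $k=1$, so each of the six base classes needs its own gadget that pins down the unique potential critical vertex/edge. Designing such gadgets for the triangle-free and $C_4$-free classes, where triangles are not available to force structure, is the crux; this is also where the asymmetry with $\pi=\omega$ (where the triangle-free contraction case is trivial) disappears, which is why no case of the $\chi$-dichotomy is left open.
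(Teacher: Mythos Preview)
The paper does not actually prove this theorem; it is quoted from~\cite{PPR17b} with the proof deferred to~\cite{PPR17c}, so there is no in-paper argument to compare against. That said, your proposal has a concrete gap on the hardness side.

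Your ``short case check'' asserting that every $H$ with $H\not\ssi P_4$ and $H\not\ssi P_1\oplus P_3$ contains one of $C_3$, $C_4$, $2P_2$, $K_{1,3}$, $P_2\oplus 2P_1$, $4P_1$ is false: the graph $C_5$ is a counterexample. Every proper induced subgraph of $C_5$ is an induced subgraph of $P_4$, yet $C_5$ itself is not in the easy region and contains none of your six listed graphs (it has girth~5, is $2$-regular hence claw-free, every four of its vertices induce a $P_4$, and $\alpha(C_5)=2$). Consequently $C_5$ is a seventh minimal obstruction that you have omitted, and none of your six base hardness classes is contained in the class of $C_5$-free graphs (indeed $C_5$ belongs to each of them), so the $H=C_5$ case is genuinely uncovered by your plan. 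You therefore still need a separate hardness argument for {\sc Critical Vertex} on $C_5$-free graphs, and this is not obviously subsumed by the gadgets you sketch for the other base classes.

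Your polynomial side, on the other hand, is sound and matches the structural input the paper itself uses elsewhere (Lemma~\ref{l-copaw} is exactly the join decomposition you describe via Olariu's theorem), and the matching argument for $3P_1$-free factors is correct.
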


We also need the following result of Kr\'al', Kratochv\'{\i}l, Tuza, and Woeginger.

\begin{theorem}[\cite{KKTW01}]\label{t-dicho}
Let $H$ be a graph. If $H\ssi  P_4$ or of $H\ssi P_1\oplus P_3$, then
{\sc Coloring} is polynomial-time solvable for $H$-free graphs, otherwise it is \NP-complete for $H$-free graphs.
\end{theorem}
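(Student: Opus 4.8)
The plan is to prove the two directions of the dichotomy separately: the polynomial direction by exhibiting colouring algorithms for the two maximal tractable families $P_4$-free and $(P_1\oplus P_3)$-free graphs, and the hardness direction by reducing to a short list of minimal forbidden induced subgraphs and handling each of them. For the polynomial direction, if $H\ssi P_4$ then every $H$-free graph is a cograph and {\sc Coloring} is linear-time solvable via the bottom-up computation on the cotree, using $\chi(G_1\oplus G_2)=\max\{\chi(G_1),\chi(G_2)\}$ and $\chi(G_1\otimes G_2)=\chi(G_1)+\chi(G_2)$ (compare Section~\ref{s-co}). If $H\ssi P_1\oplus P_3$ then every $H$-free graph is $(P_1\oplus P_3)$-free, and I would distinguish on connectivity. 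A disconnected $(P_1\oplus P_3)$-free graph has each component $P_3$-free, hence each component is a clique, so the graph is a disjoint union of cliques and $\chi$ equals its clique number. A connected $(P_1\oplus P_3)$-free graph $G$ has paw-free complement, because the paw is $\overline{P_1\oplus P_3}$; hence by Lemma~\ref{l-paw} every component $F$ of $\overline G$ is triangle-free or complete multipartite, and $\chi(G)$ equals the clique cover number $\theta(\overline G)=\sum_F\theta(F)$, where $\theta(F)=|V(F)|-\mu(F)$ when $F$ is triangle-free and $\theta(F)$ is the size of a largest part when $F$ is complete multipartite; both are computable in polynomial time.

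\textbf{Hardness: reduction to few obstructions.} For the hardness direction I would first prove the combinatorial fact that if $H$ is neither an induced subgraph of $P_4$ nor of $P_1\oplus P_3$, then $H$ contains an induced copy of some graph in the list $\mathcal L=\{C_r:r\ge 3\}\cup\{K_{1,3},P_5,2P_2,P_2\oplus 2P_1,4P_1\}$; this is a short case analysis on whether $H$ has a cycle, whether $H$ has a vertex of degree at least $3$, and, when $H$ is a disjoint union of paths, on the number and orders of its components. Since $H'$-free graphs form a subclass of $H$-free graphs whenever $H'\ssi H$, it then suffices to show that {\sc Coloring} is \NP-complete for $H'$-free graphs for each $H'\in\mathcal L$ (membership in \NP\ being immediate, with a colouring as certificate). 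For $H'=C_r$ I would invoke the classical fact that {\sc 3-Coloring} is \NP-complete on graphs of girth larger than any fixed constant; such graphs are $\{C_3,\dots,C_r\}$-free, hence $C_r$-free. For $H'=K_{1,3}$ I would use that line graphs are claw-free together with Holyer's theorem that $3$-colourability of the line graph of a cubic graph (equivalently, $3$-edge-colourability of the cubic graph) is \NP-complete.

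\textbf{Hardness: the linear-forest cases, and the main obstacle.} It remains to handle $P_5$, $2P_2$, $P_2\oplus 2P_1$ and $4P_1$. Each of these needs a tailored polynomial reduction, from {\sc Coloring} (or, for $4P_1$, from partitioning a $K_4$-free graph into cliques, i.e.\ essentially from {\sc Partition into Triangles}) to {\sc Coloring} on the corresponding restricted class, using a gadget that destroys the forbidden pattern while preserving the chromatic number up to a controlled additive or max-type shift. I expect the heart of the proof to be exactly these four reductions: the cyclic and claw cases inherit hardness from well-known theorems, whereas the linear-forest cases each call for their own construction, and the delicate point is to keep the constructed graph free of the prescribed small pattern while still being able to encode an arbitrary chromatic number. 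A secondary, not entirely elementary, ingredient is the statement that {\sc 3-Coloring} stays \NP-complete under an arbitrary lower bound on the girth; this can be derived from a hardness reduction for {\sc 3-Coloring} by iteratively replacing short cycles by longer sparse gadgets.
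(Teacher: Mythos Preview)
The paper does not give a proof of this statement at all: Theorem~\ref{t-dicho} is quoted verbatim from Kr\'al', Kratochv\'{\i}l, Tuza and Woeginger~\cite{KKTW01} and is used only as a black box (namely, in the running-time argument of Lemma~\ref{l-known2}). So there is nothing in the paper to compare your attempt against.

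That said, your outline is essentially the route taken in~\cite{KKTW01}. The polynomial side is fine; your treatment of connected $(P_1\oplus P_3)$-free graphs via Lemma~\ref{l-paw} on the complement and the identity $\theta(F)=|V(F)|-\mu(F)$ for triangle-free $F$ is correct (cliques in a triangle-free graph are vertices and edges), and the disconnected case is indeed a disjoint union of cliques. Your reduction of the hardness side to the obstruction list $\{C_r:r\ge 3\}\cup\{K_{1,3},2P_2,P_2\oplus 2P_1,4P_1\}$ is also right; note that $P_5$ is redundant here since $2P_2\ssi P_5$, so proving hardness for $2P_2$-free graphs already covers it. The genuinely missing content is exactly what you flag: the three ad hoc reductions for $2P_2$-, $(P_2\oplus 2P_1)$-, and $4P_1$-free graphs. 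In~\cite{KKTW01} these are handled by explicit gadget reductions (for instance, the $4P_1$-free case goes through the complementary problem of partitioning a $K_4$-free graph into triangles, as you suggest), and none of them follows from the cycle/claw arguments; without supplying those constructions the hardness direction is incomplete.
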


We also need the following lemma.

\begin{lemma} \label{l-known2}
{\sc Deletion Blocker($\chi$)} is polynomial-time solvable for $3P_1$-free graphs.
\end{lemma}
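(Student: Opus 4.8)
The statement to prove is that \textsc{Deletion Blocker($\chi$)} is polynomial-time solvable for $3P_1$-free graphs, i.e.\ for graphs $G$ with $\alpha(G)\le 2$. The plan is to exploit the fact that in such a graph every colour class of any proper colouring has size at most $2$, which rigidly constrains both $\chi(G)$ and how vertex deletions can lower it. First I would observe that for a graph $G$ on $n$ vertices with $\alpha(G)\le 2$ we have $\lceil n/2\rceil \le \chi(G)\le n$, and that $\chi(G)$ can be computed in polynomial time (e.g.\ since $\alpha(G)\le 2$ is equivalent to $\overline G$ being triangle-free, a minimum colouring of $G$ corresponds to a minimum clique-edge cover / maximum matching computation in $\overline G$: a colour class of size $2$ is a non-edge of $G$, hence an edge of $\overline G$, so $\chi(G)=n-\mu(\overline G)$ by König-type reasoning on the matching $\mu(\overline G)$ of $\overline G$). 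This gives us $\chi(G)$ and $\chi(G-V')$ for any candidate set $V'$ quickly, and the class $3P_1$-free is closed under vertex deletion.

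The core combinatorial claim I would establish is: deleting a set $V'$ of $t$ vertices from $G$ reduces the chromatic number by exactly $t-\mu(\overline{G}[V'])$ in the best case, or more usefully, that the maximum drop $\chi(G)-\chi(G-V')$ over all $V'$ with $|V'|\le k$ can be read off from matching numbers. Concretely, using $\chi(H)=|V(H)|-\mu(\overline H)$ for every $3P_1$-free graph $H$, we get
\[
\chi(G)-\chi(G-V') \;=\; |V'| \;-\; \bigl(\mu(\overline G)-\mu(\overline{G}-V')\bigr),
\]
so maximizing the left-hand side over $|V'|\le k$ is equivalent to minimizing $|V'| + \mu(\overline{G}-V')$ while keeping $|V'|\le k$ — equivalently, finding a set of at most $k$ vertices of $\overline G$ whose removal destroys as many matching edges as possible relative to its size. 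Since each vertex removal can decrease $\mu(\overline G)$ by at most $1$, the optimum drop for a budget of $k$ deletions is simply $k$ minus the minimum number of "wasted" vertices, and one shows the drop equals $k$ exactly when $\overline G$ has a matching-relevant structure (a set of $k$ vertices covered by $k$ disjoint matching edges need not exist, but deleting one endpoint of each of $k$ disjoint edges of a maximum matching of $\overline G$ achieves drop $k$ provided $\mu(\overline G)\ge k$). So the algorithm is: compute $\mu(\overline G)$; if $k\le \mu(\overline G)$ the best achievable drop is $k$ (delete one endpoint from each of $k$ edges of a maximum matching of $\overline G$); if $k>\mu(\overline G)$, first spend $\mu(\overline G)$ deletions as above to reach an edgeless $\overline G$, i.e.\ a complete $G$, where $\chi$ equals the number of remaining vertices, and each further deletion drops $\chi$ by exactly $1$; then compare the resulting maximum drop against the input threshold $d$.

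I would then wrap this up: given $(G,d,k)$, answer \textbf{yes} iff the maximum drop computed above is at least $d$, which is $\min\{k,\ \mu(\overline G)\} + \max\{0,\ k-\mu(\overline G)\} = k$ whenever $k\le n - \text{(something)}$ — I must be careful here about the boundary case where deletions would empty the graph, since $\chi$ of the empty graph is $0$ and a non-empty graph cannot reach it except by deleting all vertices. Handling that endpoint case is the main obstacle: I need the precise formula for the best drop when $k$ is large, namely that deleting all $n$ vertices gives drop $\chi(G)$ and nothing smaller does, so the maximum drop with budget $k$ is $\min\{k,\ n-\mu(\overline G)\} = \min\{k,\ \chi(G)\}$ once we account for the matching-saturated part, and verifying this equality rigorously (that the first $\mu(\overline G)$ deletions each drop $\chi$ by $1$, and so do all subsequent ones until the graph is empty) requires a short induction. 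All quantities — $\chi(G)$, $\mu(\overline G)$, $n$ — are computable in polynomial time, so the whole procedure runs in polynomial time, completing the proof. The hard part will be stating and proving the exact "maximum drop $=\min\{k,\chi(G)\}$" lemma cleanly, including the degenerate small cases ($G$ edgeless, $G$ complete, $k$ exceeding $n$); everything else is bookkeeping with König's theorem applied to $\overline G$.
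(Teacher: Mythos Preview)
Your framework is essentially the paper's: the identity $\chi(H)=|V(H)|-\mu(\overline{H})$ for $3P_1$-free $H$ is just a reformulation of the paper's observation that an optimal colouring has $\ell$ singleton classes and $\mu(\overline G)$ classes of size~2, with $\ell=n-2\mu(\overline G)$. So the reduction to a matching computation in $\overline G$ is fine and is what the paper does implicitly.

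However, your optimisation step is wrong in a way that breaks the argument. You claim that if $k\le\mu(\overline G)$ one can achieve a drop of $k$ by deleting one endpoint from each of $k$ edges of a maximum matching of $\overline G$. This is false: deleting such an endpoint typically \emph{reduces} $\mu(\overline G)$ by~1, so by your own formula $\chi(G)-\chi(G-V')=|V'|-(\mu(\overline G)-\mu(\overline G-V'))$ the drop can be~0, not~1. The vertices whose deletion certainly drops $\chi$ by~1 are those \emph{unsaturated} by a maximum matching of $\overline G$ (there are $\ell=n-2\mu(\overline G)$ of them), not the matching endpoints. Your subsequent claim that the maximum drop with budget $k$ equals $\min\{k,\chi(G)\}$ is therefore also wrong. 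Concrete counterexample: $G=C_5$ is $3P_1$-free with $\chi(G)=3$, $\mu(\overline G)=2$, $\ell=1$; your formula predicts a drop of~2 with $k=2$, but every $3$-vertex induced subgraph of $C_5$ still has chromatic number~2, so the maximum drop with $k=2$ is~1.

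The correct analysis (which the paper carries out) is: the first $\ell$ deletions can each lower $\chi$ by~1 (remove unsaturated vertices), and after that every further unit drop costs two deletions (remove both endpoints of a matching edge). Hence the minimum number of deletions to lower $\chi$ by $d$ is $d$ if $d\le\ell$ and $2d-\ell$ if $d>\ell$, i.e.\ $\max\{d,2d-\ell\}$. Since $\ell=n-2\mu(\overline G)$ is computable in polynomial time, comparing this quantity with $k$ gives the algorithm. Your proposal can be repaired by replacing ``matching endpoints'' with ``unsaturated vertices'' and redoing the count; as written it does not go through.
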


\begin{proof}
Let $G=(V,E)$ be a $3P_1$-free graph with $|V|=n$ and let $k \geq 1$ be an integer. Consider an instance $(G,k,d)$ of {\sc Deletion Blocker($\chi$)}. We proceed as follows.
Consider an optimal colouring of $G$.
 Since $G$ is $3P_1$-free, the size of each colour class is at most~$2$. Moreover, the number of colour classes of size 1 is the same for every optimal colouring of $G$. Let~$\ell$ be this number. Hence, there are $\frac{n-\ell}{2}$ colour classes of size~2 and $\chi(G)=\ell+\frac{n-\ell}{2}$. 
 
 Now $(G,k,d)$ is a yes-instance if and only if we can obtain a graph $G'$ from $G$ by deleting at most $k$ vertices such that $\chi(G') \leq \chi(G)-d=\ell+\frac{n-\ell}{2} -d$. Since $G'$ is also $3P_1$-free, the colour classes in any optimal colouring of $G'$ have size at most~2 and thus, $G'$ contains at most $2(\ell+\frac{n-\ell}{2} -d)=n+\ell-2d$ vertices. In other words, we need to delete at least $2d-\ell$ vertices from $G$ in order to get such a graph $G'$. As such, $(G,k,d)$ is a no-instance if $k < 2d-\ell$. 
 
 Next we will show that if $k \geq 2d-\ell$, then $(G,k,d)$ is a yes-instance and this will complete the proof. If $d \leq \ell$, we delete $d$ vertices representing colour classes of size~1. If $d > \ell$, we delete the $\ell$ vertices representing the colour classes of size 1 and $2(d-\ell)$ vertices of $d-\ell$ colour classes of size~2. In this way we obtain a graph $G'$ whose chromatic number is exactly $\chi(G)-d$.
 
 Due to the above, all we need to do is check if $k\geq 2d-\ell$. This can be done in polynomial time, since we can compute $\ell$ in polynomial time due to Theorem~\ref{t-dicho}.\qed
\end{proof}

Two disjoint subsets of vertices in a graph are {\it complete} if there is an edge between every vertex of $A$ and every vertex of $B$.
Lemma~\ref{l-paw} implies the following lemma, which we use together with Corollary~\ref{c-coalpha} and Lemma~\ref{l-known2} to prove Lemma~\ref{l-new}. 

\begin{lemma}\label{l-copaw}
The vertex set of every $(P_1\oplus P_3)$-free graph $G$ can be decomposed into two disjoint sets $A$ and $B$ such that
$G[A]$ is $3P_1$-free, $G[B]$ is $P_4$-free and $A$ and $B$ are complete to each other.
\end{lemma}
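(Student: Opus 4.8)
The plan is to pass to complements and invoke Olariu's structure theorem. First I would observe that the paw is precisely $\overline{P_1\oplus P_3}$, so a graph $G$ is $(P_1\oplus P_3)$-free if and only if $\overline{G}$ is paw-free. Set $H=\overline{G}$. Since being paw-free is a hereditary property, every connected component of $H$ is a connected paw-free graph, hence by Lemma~\ref{l-paw} is either triangle-free or complete multipartite.

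Next I would build the desired partition directly from the components of $H$. Let $A$ be the union of the vertex sets of those components of $H$ that are triangle-free, and let $B$ be the union of the vertex sets of the remaining components (each of which is therefore complete multipartite); if some component happens to be both triangle-free and complete multipartite, I simply place it in $A$. Then $H[A]$ is a disjoint union of triangle-free graphs, $H[B]$ is a disjoint union of complete multipartite graphs, and there is no edge of $H$ between $A$ and $B$, because $A$ and $B$ are unions of distinct connected components of $H$.

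Then I would translate everything back through complementation. Since $C_3$ is connected, a disjoint union of triangle-free graphs is triangle-free, so $H[A]=\overline{G[A]}$ is $C_3$-free; as $\overline{C_3}=3P_1$, this says exactly that $G[A]$ is $3P_1$-free. Since a complete multipartite graph has no induced $P_4$ and $P_4$ is connected, a disjoint union of complete multipartite graphs is $P_4$-free, so $H[B]=\overline{G[B]}$ is $P_4$-free; as $P_4$ is self-complementary, $G[B]$ is $P_4$-free. Finally, the absence of $H$-edges between $A$ and $B$ means precisely that, in $G$, there is an edge between every vertex of $A$ and every vertex of $B$, i.e.\ $A$ and $B$ are complete to each other.

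The routine verifications are the two elementary facts that a disjoint union of $F$-free graphs is $F$-free whenever $F$ is connected, and that complete multipartite graphs contain no induced $P_4$; neither is a genuine obstacle. The only point needing a moment's care is the bookkeeping with complementation — matching ``triangle-free in $\overline{G}$'' with ``$3P_1$-free in $G$'' and ``$P_4$-free in $\overline{G}$'' with ``$P_4$-free in $G$'' — together with the harmless arbitrary choice made for a component that is simultaneously triangle-free and complete multipartite.
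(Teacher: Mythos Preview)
Your proposal is correct and follows essentially the same approach as the paper: pass to the complement, apply Olariu's structure theorem (Lemma~\ref{l-paw}) to the components of $\overline{G}$, put the triangle-free components into $A$ and the complete multipartite ones into $B$, and translate back via complementation. You are in fact slightly more careful than the paper in spelling out why disjoint unions of $F$-free graphs remain $F$-free for connected $F$ and in addressing the overlap case.
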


\begin{proof}
Let $G=(V,E)$ be a $(P_1\oplus P_3)$-free graph. Then $\overline{G}$ is $\overline{P_1\oplus P_3}$-free. By Lemma~\ref{l-paw}
every connected component of $\overline{G}$ is triangle-free or complete multipartite. Let $A$ be the union of the vertices 
of all triangle-free components. Then $\overline{G}[A]=K_3$-free, so $G[A]$ is $3P_1$-free. Let $B=V\setminus A$.
As every component of $\overline{G}[B]$ is complete multipartite, $\overline{G}[B]$ is $P_4$-free. As 
$\overline{P_4}=P_4$, this means that $G[B]$ is $P_4$-free.
Moreover, $A$ and $B$ are complete to each other in~$G$.\qed
\end{proof}

\begin{lemma}\label{l-new}
{\sc  Deletion Blocker($\chi$)}  is polynomial-time solvable for $(P_1\oplus P_3)$-free graphs.
\end{lemma}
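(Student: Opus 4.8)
The plan is to exploit the structural decomposition provided by Lemma~\ref{l-copaw}. Given a $(P_1\oplus P_3)$-free graph $G=(V,E)$, I would first use that lemma to compute in polynomial time a partition $V=A\cup B$ such that $G[A]$ is $3P_1$-free, $G[B]$ is a cograph, and $A$ is complete to $B$ (the partition is obtained as in the proof of Lemma~\ref{l-copaw}: pass to $\overline G$, split into connected components, and collect the triangle-free ones into $A$ and the complete multipartite ones into $B$). If $A=\emptyset$ then $G$ is a cograph and we are done by Corollary~\ref{c-coalpha}; if $B=\emptyset$ then $G$ is $3P_1$-free and we are done by Lemma~\ref{l-known2}; so assume both parts are nonempty.

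Since $A$ is complete to $B$, the graph $G$ is the join of $G[A]$ and $G[B]$, and more generally, for any $S\subseteq V$, writing $S_A=S\cap A$ and $S_B=S\cap B$, the graph $G-S$ is the join of $G[A]-S_A$ and $G[B]-S_B$; as the join operation adds chromatic numbers, $\chi(G-S)=\chi(G[A]-S_A)+\chi(G[B]-S_B)$, and in particular $\chi(G)=\chi(G[A])+\chi(G[B])$. For a graph $G'$ and an integer $t\ge 0$, let $f_{G'}(t)$ denote the largest $d\ge 0$ such that $G'$ can be $t$-vertex-deleted into a graph $G''$ with $\chi(G'')\le\chi(G')-d$, and set $f_A=f_{G[A]}$, $f_B=f_{G[B]}$. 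The displayed identity shows that a deletion set of size at most $k$ splits into a budget $j$ spent inside $A$ and a budget $k-j$ spent inside $B$, with the two effects on $\chi$ adding up and being minimizable independently; hence the maximum decrease of $\chi(G)$ achievable by deleting at most $k$ vertices equals
\[
D(k)=\max_{0\le j\le k}\bigl(f_A(j)+f_B(k-j)\bigr),
\]
and $(G,k,d)$ is a yes-instance of {\sc Deletion Blocker($\chi$)} if and only if $D(k)\ge d$.

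It remains to compute $f_A(t)$ and $f_B(t)$ for all $t\in\{0,\dots,k\}$ in polynomial time. For $f_B$ this is immediate from Theorem~\ref{t-coalpha} with $\pi=\chi$: since $G[B]$ is a cograph, applying that algorithm with the number of contractions set to $0$ and the number of deletions set to $t$ returns exactly $f_B(t)$, and there are only $k+1$ values of $t$ to handle. For $f_A$ I would invoke Lemma~\ref{l-known2}: as $G[A]$ is $3P_1$-free, for every pair $(t,d)$ with $0\le t\le k$ and $0\le d\le\chi(G[A])$ we can decide in polynomial time whether $G[A]$ can be $t$-vertex-deleted into a graph of chromatic number at most $\chi(G[A])-d$, and taking for each $t$ the largest such $d$ gives $f_A(t)$; there are only polynomially many pairs $(t,d)$. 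Substituting $f_A$ and $f_B$ into the formula for $D(k)$ and comparing with $d$ finishes the algorithm. There is no genuine obstacle here; the only point that needs a short justification is the separability of the optimization over the two parts, and this is a direct consequence of the fact that the join adds chromatic numbers.
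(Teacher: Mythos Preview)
Your argument is correct and follows essentially the same route as the paper: decompose $V$ via Lemma~\ref{l-copaw} into a $3P_1$-free part $A$ and a $P_4$-free part $B$ that are complete to each other, use additivity of $\chi$ under join to split the deletion budget as $k=j+(k-j)$, and handle each part with Lemma~\ref{l-known2} and the cograph algorithm (Theorem~\ref{t-coalpha}/Corollary~\ref{c-coalpha}) respectively. Your write-up is in fact slightly more careful in explicitly defining $f_A,f_B$ and treating the degenerate cases $A=\emptyset$ or $B=\emptyset$, but the proof strategy is the same.
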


\begin{proof}
Let $(G,d,k)$ be an instance of {\sc Vertex Deletion Blocker($\chi$)}, where $G=(V,E)$ is  $(P_1\oplus P_3)$-free.
By Lemma~\ref{l-copaw}, the vertex set of $G$ can be decomposed into two disjoint sets $A$ and $B$ such that
$G_1=G[A]$ is $3P_1$-free, $G_2=G[B]$ is $P_4$-free and $A$ and $B$ are complete to each other. The latter implies 
that $\chi(G)=\chi(G_1)+\chi(G_2)$. Moreover, this property is maintained when deleting vertices from $V$. 
For each pair $(k_1,k_2)$ with $k_1+k_2=k$ we check by how much we can decrease $\chi(G_1)$ using at most $k_1$ vertex deletions and by how much we can decrease $\chi(G_2)$ using at most $k_2$ vertex deletions. We can do this in polynomial time 
by Corollary~\ref{c-coalpha} and Lemma~\ref{l-known2}, respectively.
We keep track of the maximum sum of these values. In the end, we are left to check if the value found is at least $d$ or not.
Since the number of pairs $(k_1,k_2)$ is at most $k$, the total running time is polynomial.\qed
\end{proof}

We can now state and prove the following two dichotomies. 

\begin{theorem}\label{t-main2}
Let $H$ be a graph. Then the following holds:
\begin{itemize}
\item 
If $H\ssi P_1\oplus P_3$ or $P_4$, then {\sc Deletion Blocker$(\chi)$} for $H$-free graphs is polynomial-time solvable, and it is \NP-hard or co-\NP-hard otherwise.\\[-8pt]
\item 
If $H\ssi P_4$, then  {\sc Contraction Blocker$(\chi)$} for $H$-free graphs is polynomial-time solvable for $H$-free graphs, and it is
\NP-hard or co-\NP-hard otherwise.
\end{itemize}
\end{theorem}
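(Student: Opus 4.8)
The plan is to assemble both dichotomies from results already established: the tractability of {\sc Deletion Blocker$(\chi)$} and {\sc Contraction Blocker$(\chi)$} on cographs (Corollary~\ref{c-coalpha}), the tractability of {\sc Deletion Blocker$(\chi)$} on $(P_1\oplus P_3)$-free graphs (Lemma~\ref{l-new}), the \NP-completeness of {\sc Contraction Blocker$(\chi)$} on cobipartite graphs (Theorem~\ref{t-cohard}), and the dichotomies of Theorem~\ref{t-critical} for the $d=k=1$ restrictions {\sc Critical Vertex} and {\sc Contraction-Critical Edge}. For the positive direction, note that $H\ssi P_4$ forces every $H$-free graph to be $P_4$-free (a cograph), and $H\ssi P_1\oplus P_3$ forces every $H$-free graph to be $(P_1\oplus P_3)$-free; hence Corollary~\ref{c-coalpha} settles both problems when $H\ssi P_4$, and Lemma~\ref{l-new} settles {\sc Deletion Blocker$(\chi)$} when $H\ssi P_1\oplus P_3$.

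For the hardness direction of the first statement, assume $H\not\ssi P_4$ and $H\not\ssi P_1\oplus P_3$. Then Theorem~\ref{t-critical} states that {\sc Critical Vertex} is \NP-hard or co-\NP-hard on $H$-free graphs; since this problem is exactly {\sc Deletion Blocker$(\chi)$} restricted to instances with $d=k=1$, the hardness transfers to {\sc Deletion Blocker$(\chi)$} on $H$-free graphs. For the second statement, assume $H\not\ssi P_4$. If moreover $H\not\ssi P_1\oplus P_3$, the same reasoning applied to {\sc Contraction-Critical Edge} (again via Theorem~\ref{t-critical}) gives the hardness of {\sc Contraction Blocker$(\chi)$} on $H$-free graphs. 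The only case not covered by this is $H\ssi P_1\oplus P_3$ together with $H\not\ssi P_4$; checking the induced subgraphs of $P_1\oplus P_3$ shows this means $H\in\{3P_1,\,P_1\oplus P_3\}$.

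To finish, I would treat the remaining case $H\in\{3P_1,P_1\oplus P_3\}$ by passing to the class of cobipartite graphs. Every cobipartite graph has independence number at most~$2$ and is hence $3P_1$-free; moreover its complement is bipartite, therefore triangle-free, whereas $\overline{P_1\oplus P_3}$ equals the paw, which contains a triangle, so no cobipartite graph has an induced $P_1\oplus P_3$. Thus for both choices of $H$ the class of cobipartite graphs lies inside the class of $H$-free graphs, and Theorem~\ref{t-cohard} gives \NP-completeness, hence \NP-hardness, of {\sc Contraction Blocker$(\chi)$} on $H$-free graphs. This last step is the one I expect to be the real obstacle, and it is also why statements~(v) and~(vi) of Theorem~\ref{t-mainmainmain} differ: for $\pi=\chi$ the $d=k=1$ restriction {\sc Contraction-Critical Edge} is polynomial-time solvable on $(P_1\oplus P_3)$-free graphs, so the hardness for $H\in\{3P_1,P_1\oplus P_3\}$ cannot be borrowed from Theorem~\ref{t-critical} and genuinely requires a reduction with $d$ part of the input --- precisely what Theorem~\ref{t-cohard} supplies. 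Everything else reduces to a direct citation of an earlier result.
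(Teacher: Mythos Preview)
Your proof is correct and follows essentially the same route as the paper: Corollary~\ref{c-coalpha} and Lemma~\ref{l-new} for the tractable cases, Theorem~\ref{t-critical} for the generic hard cases, and Theorem~\ref{t-cohard} via cobipartite graphs for the leftover case $H\in\{3P_1,P_1\oplus P_3\}$ of {\sc Contraction Blocker$(\chi)$}. The only cosmetic difference is that you verify separately that cobipartite graphs are $(P_1\oplus P_3)$-free, whereas the paper simply notes they are $3P_1$-free, which already implies $(P_1\oplus P_3)$-freeness since $3P_1\ssi P_1\oplus P_3$.
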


\begin{proof}
Let $H$ be a graph. If $H$ is neither an induced subgraph of $P_4$ nor of $P_1\oplus P_3$, then for both problems we can apply 
Theorem~\ref{t-critical}. If $H\ssi P_4$, then for both problems we apply Corollary~\ref{c-coalpha}. In the remaining case
$H=3P_1$ or $H=P_1\oplus P_3$. Then applying Lemma~\ref{l-new} gives us the desired dichotomy for {\sc Deletion 
Blocker($\chi$)}. And applying Theorem~\ref{t-cohard} gives us the desired dichotomy for  {\sc Contraction Blocker($\chi$)} 
 after recalling that cobipartite graphs are $3P_1$-free.\qed
 \end{proof}
 
After proving Theorem~\ref{t-main2}  we have shown all six cases of Theorem~\ref{t-mainmainmain}.
 Note that, unlike the case $d=k=1$ (see Theorem~\ref{t-critical}), the complexity dichotomies of the problems {\sc Contraction Blocker($\chi$)} and {\sc Deletion Blocker($\chi$)} restricted to $H$-free graphs are different when $H$ is disconnected.

\section{Future Work}\label{s-con}

We aim to solve the blank entries in Table~\ref{t-thetable2}. In particular, we pose the following open problems:

\begin{itemize}
\item [Q1.] Determine the complexity of {\sc Contraction Blocker($\alpha$)} for interval graphs.\\[-8pt]
\item [Q2.] Determine the complexity of {\sc Deletion Blocker($\alpha$)} for interval graphs.
\end{itemize}
We observe that the complexity of the two problems in Q1 and Q2 is unknown for interval graphs even if $d$ is fixed. 
We also aim to research the complexity of {\sc $1$-Contraction Blocker($\alpha$)} for bipartite graphs and chordal graphs, and the complexity of 
{\sc $1$-Deletion Blocker($\alpha$)} for perfect graphs and chordal graphs.

In addition to the above it would be interesting to generalize our results for the blocker problems restricted to $H$-free graphs in Section~\ref{s-clas} to families of more than one forbidden induced subgraph~$H$. However, we still need to complete one stubborn remaining case for one problem:

\begin{itemize}
\item [Q3.] Determine the complexity of {\sc Contraction Blocker($\omega$)} for $(C_3\oplus P_1)$-free graphs.
\end{itemize}
 
We observe that it is not difficult to construct graph classes for which a blocker problem is tractable, but the original problem is \NP-complete. However, we do not know of such examples of hereditary graph classes. Hence
it would be interesting to solve the following question.

\begin{itemize}
\item [Q4.] For $\pi \in \{\alpha,\omega,\chi\}$, are {\sc Contraction Blocker($\pi$)} and {\sc Deletion Blocker($\pi$)} 
computationally hard on every hereditary graph class~${\cal G}$, for which {\sc Independent Set}, {\sc Clique} or {\sc Coloring}, respectively, is \NP-complete?
\end{itemize}

Several computationally hard cases of our dichotomies for $H$-free graphs in Theorem~\ref{t-mainmainmain} hold in fact even
when $d=1$ or $d=k=1$. In particular, from Theorems~\ref{t-critical} and~\ref{t-main2} we immediately deduce that
if $H\ssi P_1\oplus P_3$ or $P_4$, then {\sc 1-Deletion Blocker$(\chi)$} for $H$-free graphs is polynomial-time solvable, and \NP-hard or co-\NP-hard otherwise. However, for the other five variants we still have a number of missing cases to solve.

\noindent

Finally, we aim to determine a dichotomy with respect to $H$-free graphs for the variant ($\pi\in \{\alpha,\omega,\chi\}$), where $S$ consists of other graph operations, for instance when $S$ consists of an edge deletion. This variant has been less studied than the vertex deletion and edge contraction variant. The reason for this is that no class of $H$-free graphs is closed under edge deletion, whereas such a class is closed under vertex deletion, and in the case when $H$ is a linear forest, under edge contraction as well. For $\pi=\chi$ we are close to a dichotomy.
Recall that an edge of a graph is critical or contraction-critical if its deletion or contraction, respectively, reduces the chromatic number of~$G$ by~1. It is known that an edge is contraction-critical if and only if it is critical~\cite{PPR17b}.
Hence by Theorem~\ref{t-critical} we only need to consider the cases where $H\ssi  P_4$ or $H\ssi P_1\oplus P_3$.
Bazgan et al.~\cite{BBPR} showed that the edge deletion variant for chromatic number is polynomial-time solvable for threshold graphs, that is,
for  $(C_4,2P_2,P_4)$-free graphs, and \NP-hard for cobipartite graphs, and thus for $3P_1$-free graphs. 
This means that the only two open cases for chromatic number are when $H=P_1+P_2$ or $H=P_4$.

\end{document}